\newtheorem{theorem}{Theorem}[section]
\newtheorem{claim}[theorem]{Claim}
\newtheorem{corollary}[theorem]{Corollary}
\newtheorem{lemma}[theorem]{Lemma}
\newtheorem{definition}[theorem]{Definition}
\newcommand{\E}{\mathbf{E}}
\newcommand{\bbN}{\mathbb{N}}
\newcommand{\bbR}{\mathbb{R}}
\newcommand{\calA}{\mathcal{A}}
\newcommand{\calB}{\mathcal{B}}
\newcommand{\calD}{\mathcal{D}}
\newcommand{\calI}{\mathcal{I}}
\newcommand{\calP}{\mathcal{P}}
\newcommand{\calQ}{\mathcal{Q}}
\definecolor{Blue}{HTML}{2D2F92}
\DeclareMathOperator{\val}{val}
\DeclareMathOperator{\opt}{opt}
\DeclareMathOperator{\mc}{mc}
\DeclareMathOperator{\dist}{dist}
\DeclareMathOperator{\cousins}{cousins}
\DeclareMathOperator{\triplet}{triplet}
\DeclareMathOperator{\lca}{LCA}
\DeclareMathOperator{\Avg}{Avg}
\DeclareMathOperator{\weight}{weight}
\DeclareMathOperator{\order}{order}
\DeclareMathOperator{\clr}{color}
\DeclareMathOperator{\argmin}{argmin}
\DeclareMathOperator{\child}{child}
\DeclareMathOperator{\ONE}{\mathbf{1}}
\newtheorem{result}{Theorem}
\title{Triplet Reconstruction and all other Phylogenetic CSPs\\ are Approximation Resistant}
\author{Vaggos Chatziafratis$^1$ \and Konstantin Makarychev$^2$}
\date{%
    $^1$University of California, Santa Cruz\\%
    $^2$Northwestern University\\[2ex]%
}
\begin{document}
\maketitle
\thispagestyle{empty}
\begin{abstract}
We study the natural problem of Triplet Reconstruction (also known as Rooted Triplets Consistency or Triplet Clustering), originally motivated by applications in computational biology and relational databases~\citep*{aho1981inferring}: given $n$ datapoints, we want to embed them onto the $n$ leaves of a rooted binary tree (also known as a hierarchical clustering, or ultrametric embedding) such that a given set of $m$ \textit{triplet} constraints is satisfied. A triplet constraint $ij|k$ for points $i,j,k$ indicates that ``$i, j$ are more closely related to each other than to $k$,” (in terms of distances $d(i,j)\le d(i,k)$ and $d(i,j)\le d(j,k)$) and we say that a tree satisfies the triplet $ij|k$ if the distance in the tree between $i,j$ is smaller than the distance between $i,k$ (or $j,k$). Among all possible trees with $n$ leaves, can we efficiently find one that satisfies a large fraction of the $m$ given triplets? 

~\cite{aho1981inferring} studied the decision version and gave an elegant polynomial-time algorithm that determines whether or not there exists a tree that satisfies all of the $m$ constraints. Moreover, it is straightforward to see that a random binary tree achieves a constant $\tfrac13$-approximation, since there are only 3 distinct triplets $ij|k,ik|j,jk|i$ (each will be satisfied w.p. $\tfrac13$). Unfortunately, despite more than four decades of research by various communities, there is no better approximation algorithm for this basic Triplet Reconstruction problem.

Our main theorem---which captures Triplet Reconstruction as a special case---is a general hardness of approximation result about Constraint Satisfaction Problems (CSPs) over \textit{infinite} domains (CSPs where instead of boolean values $\{0,1\}$ or a fixed-size domain, the variables can be mapped to any of the $n$ leaves of a tree). Specifically, we prove that assuming the Unique Games Conjecture~\citep{khot2002power}, Triplet Reconstruction and more generally, \textit{every} Constraint Satisfaction Problem (CSP) over \textit{hierarchies} is approximation resistant, i.e., there is no polynomial-time algorithm that does asymptotically better than a \textit{biased} random assignment. 

Our result settles the approximability not only for Triplet Reconstruction, but for many interesting problems that have been studied by various scientific communities such as the popular Quartet Reconstruction and Subtree/Supertree Aggregation Problems. More broadly, our result significantly extends the list of
approximation resistant predicates by pointing to a large new family of hard problems over hierarchies. Our main theorem is a generalization of~\citet*{GHMRC11}, who showed that \textit{ordering} CSPs (CSPs over permutations of $n$ elements, e.g., Max Acyclic Subgraph, Betweenness, Non-Betweenness) are approximation resistant. The main challenge in our analyses stems from the fact that trees have \textit{topology} (in contrast to permutations and ordering CSPs) and it is the tree topology that determines whether a given constraint on the variables is satisfied or not. As a byproduct, we also present some of the first CSPs where their approximation resistance is proved against biased random assignments, instead of uniformly random assignments.

\end{abstract}


\newpage
\tableofcontents
\thispagestyle{empty}
\newpage
\setcounter{page}{1}
\section{Introduction}

The algorithmic task of constructing \textit{hierarchical} representations of data has been studied by various communities over many decades with applications ranging from statistics~\citep*{ward1963hierarchical,hastie2009elements} and databases~\citep*{aho1981inferring} to the analysis of complex networks, such as the Internet or social networks~\citep*{ravasz2003hierarchical,clauset2008hierarchical}, and more recently, to machine learning, where hierarchical embeddings have proven useful for understanding text, images, graphs and multi-relational data~\citep{nickel2017poincare}. The reason why they are so ubiquitous is that many real data sets stemming from nature or society are organized according to a latent hierarchy~\citep*{ravasz2003hierarchical}. Interestingly, many relevant questions and algorithmic ideas originated in the field of Taxonomy and Phylogenetics~\citep*{sneath1973numerical,eisen1998cluster,felsenstein2004inferring} with the goal of classifying all living and extinct organisms into the Tree of Life.


The easiest way to visualize such hierarchical representations for a given data set is by using a dendrogram, also known as a \textit{Hierarchical Clustering}. Hierarchical clustering is an embedding of the data set to a tree, often depicted as a rooted binary tree whose leaves are in one-to-one correspondence with the points in the data set, see Figure~\ref{fig:triplets}. The hierarchical clustering tree shows the recursive partitioning of the data set into successively smaller and smaller clusters. Observe that all data points are clustered together at the root, but eventually they get separated at the leaves (internal nodes correspond to intermediate subclusters formed by descendant leaves). 

\begin{figure}[ht]
    \centering
    \includegraphics[scale=0.87]{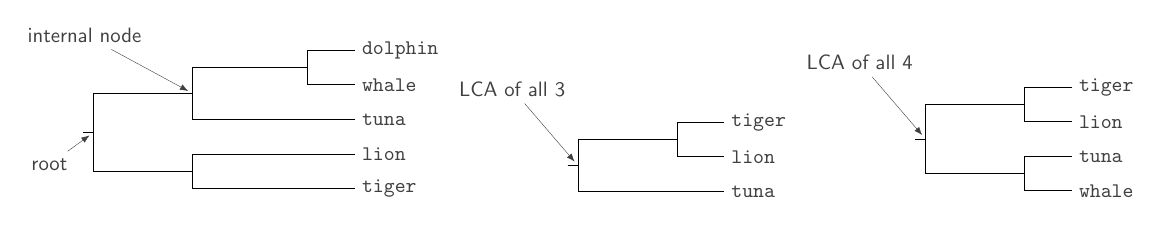}
    \caption{Hierarchical Clustering on 5 points (Left), a triplet constraint (Middle) and a quartet constraint (Right) satisfied by hierarchical clustering. The internal node shown corresponds to the subcluster $\texttt{\{whale, dolphin, tuna\}}$. The basic constituent of a hierarchy is a \textit{triplet comparison} or \textit{triplet constraint}, e.g., $\{\texttt{\{lion, tiger\}|\{tuna\}}\}$ indicates the closest pair among the 3. Formally, the Lowest Common Ancestor (LCA) of $\texttt{\{lion, tiger\}}$ is a descendant of the LCA of all 3. Another type of a more complicated comparison is among 4 points, e.g., the \textit{quartet} comparison $\{\texttt{\{lion, tiger\}|\{tuna, whale\}}\}$ prescribes what's the correct split. We can see that the hierarchical clustering satisfies the shown triplet and  quartet constraint (it also satisfies triplets $\{\texttt{\{whale, dolphin\}|\{tuna\}} \}$, $\{\texttt{\{whale, dolphin\}|\{lion\}} \}$, $\{\texttt{\{whale, tuna\}|\{tiger\}} \}$, and quartets $\{\texttt{\{whale, dolphin\}|\{lion, tiger\}}\}$, $\{\texttt{\{\{whale, dolphin\}|\{tuna\}\}|\{lion\}}\}$.}
    \label{fig:triplets}
\end{figure}

In contrast to ``flat'' clustering techniques like $k$-means/$k$-median which cannot capture fine-grained relationships among points or groups of points, hierarchical clustering reveals the structure of a data set at multiple levels of granularity simultaneously. For example, consider \textit{triplet} queries of the form ``\textit{Among 3 items $i,j,k$, which two are most closely related?}'';  a quick look at the hierarchical clustering (see Fig.~\ref{fig:triplets})  immediately reveals the answer by locating the 3 leaves $i,j,k$ and noticing which of the 3 gets separated first from the other two. Answering such triplet comparisons is easy for humans which makes them popular in metric learning and crowdsourcing settings~\citep*{tamuz2011adaptively,vikram2016interactive,emamjomeh2018adaptive}, and understanding how to accurately aggregate a large collection of such triplet queries into a hierarchical clustering was the primary motivation of our work. As we will see, studying triplets will lead us to interesting connections with hardness of approximation and approximation resistant predicates.

In this paper,  we study the approximability of a large class of Constraint Satisfaction Problems (CSPs) over \textit{hierarchies}, i.e., trees, which have been studied in various communities, including in databases~\citep*{aho1981inferring}, in logic and algebra~\citep*{bodirsky2010complexity}, in computational biology~\citep*{felsenstein2004inferring,byrka2010new} and in theoretical computer science~\citep*{jiang1998orchestrating,snir2008quartets,brodal2013efficient,alon2014compatibility}. The input is a collection of (potentially inconsistent) local relationships between $k$ items of a ground set (with total size $n$), and we are asked to find the hierarchical clustering that \textit{maximizes} agreement with the input. Such local relationships can take the form of triplet or quartet constraints (or even quintuples etc.), and more generally, they can be a $k$-arity constraint on $k$ data points which prescribes how the $k$ data points should be split by the final hierarchy. For the most common examples of triplet and quartet constraints, please see Figure~\ref{fig:triplets}. 

For readers familiar with Correlation Clustering~\citep*{bansal2004correlation},
we should note here that it is different in at least three important ways: first, in correlation clustering the desired output is a (flat) partition of the data points (whereas in hierarchical clustering we want a mapping to the $n$ leaves of a tree), second, constraints in correlation clustering are between pairs of points (whereas in hierarchical clustering the input specifies constraints on triplets, quartets etc.) and third, there are technical differences (as we show) in terms of their behavior with respect to approximation resistance.

\subsection{Our Contributions} 
We revisit several old questions in Hierarchical Clustering and CSPs over \textit{Infinite}-Domains and  prove \textit{tight} upper and lower bounds under the Unique Games Conjecture~\citep{khot2002power}, thus settling the approximability of a large class of hierarchical reconstruction problems. Interestingly, we extend the notion of \textit{approximation resistant} CSPs~\citep{haastad2001some} to allow for \textit{biased} random assignments (instead of uniform random assignments), and our main hardness result for CSPs over trees holds under this extended definition, which could be of independent interest. As far as we know, our results provide the first approximation resistant CSPs where the optimal approximation threshold is achieved by a \textit{non-uniform} random assignment.

Recall that for CSPs over infinite domains, the variables are not boolean and instead of taking values $\{0,1\}$ (or in a fixed-size domain), they are allowed to be mapped to infinite domains. Prominent examples include Correlation Clustering~\citep{bansal2004correlation}
and \textit{Ordering CSPs}, i.e., CSPs over permutations of $n$ elements, such as Max Acyclic Subgraph, Betweenness, Non-Betweenness~\citep*{GHMRC11}; for our case, the infinite domain corresponds to the $n$ leaves of a hierarchical tree which of course grows as the number $n$ of data points grows. In fact, our results generalize the hardness results of \cite{GHMRC11}, since a permutation corresponds (in a formal sense) to a special case of hierarchical clustering (because we consider ordered trees). At a high-level, the main challenge in our problems comes from the fact that CSPs over trees depend on the tree's \textit{topology} and whether a given constraint is satisfied or not is determined by how and in what order exactly various data points got split at intermediate nodes. Observe that this is irrelevant for correlation clustering and for permutations.
Specifically, we settle the approximability of 
Triplet Reconstruction, Quartet Reconstruction, and General CSPs over Trees.


\textbf{Triplet Reconstruction (also Rooted Triplets Consistency):} Aggregating triplets into a hierarchical clustering was originally asked in the context of relational databases by~\citet{aho1981inferring}. A triplet constraint $ij|k$ indicates that ``items $i,j$ are more similar to each other than to $k$.'' 
Given $m$ triplets, we would like to construct a hierarchical clustering on the $n$ items, that satisfies as many constraints as possible, i.e., $k$ is split first from $i,j$ (see also Fig.~\ref{fig:triplets}). 

\textbf{Quartet Reconstruction:} When constraints are on 4 points $a,b,c,d$, they are called ``quartet'' constraints.  The task is to find a (rooted or unrooted) tree that satisfies as many of the  quartet constraints as possible (Fig.~\ref{fig:triplets}). A special case of Quartet Reconstruction is the popular ``Unrooted Quartet Consistency'' problem in computational biology~\citep{steel1992complexity,jiang1998orchestrating,ben1998constructing,felsenstein2004inferring,snir2008quartets,snir2012quartet,alon2014compatibility}.   


\textbf{General CSPs over Trees:} The previous two problems are only two special cases of general CSPs over trees. Specifically, Triplet Reconstruction is a CSP of arity 3 and Quartet Reconstruction is a CSP of arity 4. However, there is no reason to stop there; in fact, the algebraic and logic communities have extensively studied  what happens if we allow for trees with larger fan-out degree, or for conjunctions (logical $\land$) or disjunctions (logical $\lor$) between constraints,\footnote{For example, $ij|k \lor ik|j$ captures the \textit{forbidden} triplets problem, that forbids triplets $jk|i$ from the final hierarchy.} or for the constraints to be of arity $k$. In the algebraic and logic literature~\citep*{bodirsky2010complexity,bodirsky2012complexity}, such CSPs are termed \textit{Phylogenetic CSPs} due to their connections to popular ``Consensus Tree'' or ``Subtree/Supertree Aggregation'' methods in computational biology~\citep{adams1972consensus,steel1992complexity,sanderson1998phylogenetic,ng2000difficulty,jansson2016improved}.  

Before stating our general theorem, let us focus only on our first result about the Triplet Reconstruction problem and highlight its status in terms of approximability and hardness. Then, it will be much easier to understand our results for Quartet Reconstruction and for General CSPs over Trees (along with the main technical challenges).

\subsection{Result I: Beating Random is Hard for Triplet Reconstruction}

We will need the following simple definitions (for examples, see Fig~\ref{fig:triplets}):

\begin{definition}[Triplet]
A triplet $t$, denoted $t=ab|c$, is a rooted, unordered, binary tree on 3 leaves $a,b,c$. A rooted, unordered, binary tree $T$ (containing leaves $a,b,c$) is said to be \emph{consistent with $t$} (or $T$ \emph{satisfies} $t$), if the LCA($a,b$) in $T$ is a proper descendant of LCA($a,c$) in $T$. Otherwise, the triplet and the tree are \emph{inconsistent with each other} (or $T$ \emph{violates} $t$). In general, triplets can also have weights $\weight(ab|c)$.
\end{definition}

The natural optimization problem associated with Triplet Reconstruction is \textsc{MaxTriplets}:
\begin{definition}[\textsc{MaxTriplets} Problem]
Given a set $X$ of $n$ data points and $m$ triplets defined on data points from $X$, find the hierarchical clustering (i.e., the binary rooted tree) that is consistent with as many triplets as possible (per the definition above). 
\end{definition}

\textsc{MaxTriplets} is NP-hard in general instances, but \citet*{aho1981inferring} presented an algorithm for completely satisfiable instances  of \textsc{MaxTriplets}: This algorithm finds a tree that is consistent with all given triplets, if such tree exists, otherwise it halts and declares that the triplets are conflicting and no tree can satisfy all of the triplets.


The following trivial algorithm achieves a $\tfrac13$-approximation: ``output a \textit{uniformly random} tree on the $n$ data points.'' Observe that for 3 items $a,b,c$, there are only 3 distinct triplets---namely $ab|c, ac|b, bc|a$---and so with probability $\tfrac13$, the uniformly random tree will satisfy each of the input triplets. 
Surprisingly, despite four decades of research, this is currently the best known approximation for triplet reconstruction.
Our first result shows that being stuck at the trivial $\tfrac13$-approximation ratio is not a coincidence:

\begin{tcolorbox}
\begin{theorem}\label{th:intro_triplets} For every constant $\varepsilon>0$, it is UG-hard to distinguish instances of  {\sc{MaxTriplets}}, where a $(1-\varepsilon)$ fraction
of the triplets can be satisfied by a hierarchical clustering, from {\sc{MaxTriplets}} instances where at most a $(\tfrac13+ \varepsilon)$ fraction can be satisfied.
\end{theorem}
\end{tcolorbox}

Stated simply, we prove that if $\rho$ is the expected fraction of constraints satisfied by a uniformly random tree, then obtaining a $\rho'$-approximation for any constant $\rho'>\rho$ is UG-hard. In other words, we show that {\sc{MaxTriplets}} is \textit{approximation resistant}. Recall that a predicate is approximation resistant if it is NP-hard (or UG-hard in our case) to approximate the corresponding CSP significantly better than what is achieved by the trivial algorithm that picks an assignment uniformly at random. For example, 3SAT is approximation resistant~\citep{haastad2001some} and so is every ordering CSP such as Max Acyclic Subgraph, Betweenness, Non-Betweenness~\citep*{GHMRC11}. Prior to our work, the best known hardness of approximation
for {\sc{MaxTriplets}} was \nicefrac{2}{3} due to~\cite*{chatziafratis2021maximizing}.  


\subsection{Result II: From Triplets to Hardness of General CSPs over Trees}

Given our first result on the hardness of {\sc{MaxTriplets}}, it is natural to wonder what happens in terms of approximability if we increase the arity of the constraints from 3 to 4, i.e., what happens for Quartet Reconstruction and its associated optimization versions of {\sc{MaxQuartets}} (we defer the exact definitions for now, but hopefully the problem is clear). For {\sc{MaxQuartets}} even though there are results~\citep*{jiang1998orchestrating} that give a PTAS for very dense instances (density here implies that there is a quartet for every four data points, thus $m=\Omega(n^4)$), the approximability in the general case remained open:
\emph{How well can we approximate \textsc{MaxQuartets} in polynomial time?}
Again, for the most well-studied case of unrooted quartet reconstruction~\citep{jiang1998orchestrating,ben1998constructing,felsenstein2004inferring}, a trivial algorithm that outputs an unrooted tree uniformly at random is a constant approximation, and this has been the state-of-the-art in the worst-case for many decades. In light of our Theorem~\ref{th:intro_triplets}, we are able to settle the approximability for \textsc{MaxQuartets}, proving that this trivial algorithm is again optimal (under UGC) (see Appendix~\ref{sec:triplets-to-quartets}).

\paragraph{General CSPs over Trees.} Triplets and Quartets are two special cases of a more general family of CSPs over trees that are not well-understood from a theoretical perspective. Such general CSPs over trees are also studied in the algebraic and logic communities under the name \textit{Phylogenetic CSPs}~\citep{bodirsky2010complexity,bodirsky2012complexity,bodirsky2017complexity}, which will be borrowed here.\footnote{A technical comment is that our definition of Phylogenetic CSPs is slightly more general than the one in the logic community~\citep{bodirsky2017complexity}:  they only focused on unordered, rooted trees, whereas our results hold even for CSPs on \textit{ordered} trees (left and right children are distinguishable) and on unrooted trees. Ordered trees play an important role when mapping a hierarchy to a permutation on its leaves with specific structure~\citep{bar2001fast,geary2006succinct,jansson2007ultra} and in consensus methods~\citep{jansson2006algorithms,jansson2016improved}.} For formal definitions, please see Preliminaries (Section~\ref{sec:prelim}).


After seeing our hardness results for {\sc{MaxTriplets}} and 
{\sc{MaxQuarters}}, one may assume that the random assignment algorithm always gives the best possible approximation (ignoring $o(1)$ terms). However, as we discuss in Section~\ref{sec:random-assignment-def}, this is not the case. In fact, for some phylogenetic CSPs the uniform random assignment algorithm satisfies exponentially small in $k$ fraction of all constraints while other algorithms satisfy e.g., a constant (not depending on $k$) fraction of all constraints. It turns out that the best algorithms for arbitrary phylogenetic CSPs are \textit{biased} random assignment algorithms. We show the following result.

\begin{tcolorbox}
\begin{theorem}[Informal]\label{th:main}
 Assuming the UGC, \textit{every} Phylogenetic CSP is approximation resistant. Interestingly, this holds for a more general notion of approximation resistance, where \textbf{biased} random solutions are allowed (not just uniformly random outputs like in boolean CSPs and ordering CSPs).
\end{theorem}
\end{tcolorbox}

\paragraph{On Approximation Resistance.} The subject of approximation resistance is a fascinating topic in computation with a rich literature, and despite the intensive efforts to characterize the approximability of CSPs, it is not yet clear what properties characterize them in general. It is perhaps striking, but many CSPs are approximation resistant, and two fundamental examples are {\sc{Max3SAT, Max3LIN}}~\citep{haastad2001some}. In contrast, for arity 2,~\citet{haastad2005every} showed that no predicate that depends on two inputs (e.g., {\sc{MaxCUT}}) from an arbitrary finite domain can be approximation resistant. Investigating higher arity CSPs has also yielded interesting results: for arity 3, a precise classification of approximation resistant 3CSPs is known~\citep*{zwick1998approximation}, but for arity 4 and higher the situation is unclear~\citep*{hast2005beating,haastad2007approximation,austrin2009approximation,austrin2009randomly}. For example, Hast gave a characterization for 355 out of 400 different predicate types for binary 4CSPs. Moreover,~\cite{haastad2007approximation} showed, under  UGC, that a random $k$-ary predicate for large $k$ is approximation resistant. More recently,~\cite{guruswami2015towards} showed hardness for the family of symmetric CSPs (predicates whose set of accepting strings is permutation invariant).

\paragraph{Ordering and Ordinary CSPs} Beyond the above finite-domain CSPs, approximation resistance has been studied for \textit{infinite-domain} CSPs (or ``growing'' domain CSPs). Several prominent such examples that were shown approximation resistant include  Maximum Acyclic Subgraph~\citep*{guruswami2008beating}, Betweenness, Non-Betweenness, Cyclic Ordering~\citep*{charikar2009every} and in fact, any other \textit{ordering} CSP (CSPs over permutations of $n$ elements) is approximation resistant~\citep*{GHMRC11}. Each predicate or payoff function of an ordering CSP depends on the ordering of variables on a line.

In this paper, we will use not only phylogenetic CSPs but also CSPs with finite alphabet and ordering CSPs. To distinguish finite alphabet  CSPs from other CSPs, we will refer to them as \emph{ordinary CSPs}.

\section{Technical Contributions and  Challenges}
Most closely related to our paper, both at a conceptual and technical level is the paper by~\citet*{GHMRC11} who showed that every ordering CSP is approximation resistant assuming the Unique Games Conjecture (see also~\cite{guruswami2008beating,charikar2009every}). Our main technical contribution is a hardness preserving reduction from ordering CSPs to phylogenetic CSPs. At a high-level, we must deal with three main challenges:

\medskip

\noindent\textbf{Trees Have Topology:} In Phylogenetic CSPs,  whether a phylogenetic constraint is satisfied or not crucially depends on the topology of the tree. Contrast this with what happens in ordering CSPs, where simply knowing the position of an item in the permutation determines if the constraint is satisfied.  For trees, the notion of ``position'' is more complicated and how we split the $n$ items at internal tree nodes is important (see also the discussion on random assignments).

\medskip
\noindent\textbf{Many Types of Trees:} Theorem~\ref{th:main} provides hardness for large collections of problems  studied e.g., in logic, algebra and computational biology, where trees may be ordered (left and right children are distinguishable). Contrast this with ordering CSPs where such considerations are irrelevant.

\medskip
\noindent\textbf{Biased Random Assignment:} Perhaps counterintuitively, even the definition of a ``random tree'' for Phylogenetic CSPs requires some attention. Simply outputting a uniformly random tree on $n$ leaves can result in very poor solutions. Instead, we define a natural ``biased'' version of a random assignment
that generalizes prior methods. We show that it achieves the best possible approximation, under UGC. Contrast this with ordering CSPs, where we simply output a random permutation of $n$ items and this is optimal.

\medskip
In this paper, we present a reduction from ordering CSPs. Let us stress that a na\"ive reduction from ordering CSPs to phylogenetic CSPs does not give the desired hardness results. For example, the Triplet Consistency predicate $uv|w$ can be satisfied when the vertices are ordered as $(u,v,w)$, $(v, u,w)$, $(w, u,v)$, and $(w, v, u)$. So, the best hardness for Triplet Consistency we could hope for if we used the na\"ive reduction would be $4/3! = 2/3$.

Our main technical and conceptual contributions are as follows:
\begin{itemize}
\item We define a new class of biased random assignment algorithms for phylogenetic CSPs with one and many payoff functions
and 
prove matching hardness of approximation results. 
\item We show that the ``gap instance'' from the paper by~\citet{GHMRC11} can be adapted to serve as a ``gadget'' in the reduction from ordering to phylogenetic CSPs. A priori, it is not clear that this gap instance can be used for phylogenetic CSPs because phylogenetic CSPs are quite  different from ordering CSPs. We also modify the hardness reduction by~\citet{GHMRC11} to make it compatible with our own reduction. Their original reduction ``erases the tree structure'' of our instances because it cyclically shifts positions of variables. This preserves the relative order of most $k$-tuples  of variables on the line but not in the tree.
\item We provide a new definition of \emph{coarse} solutions for phylogenetic CSPs. This definition substantially differs from the definition of \emph{coarse} solutions for ordering CSPs (\cite{GHMRC11,
charikar2007advantage}). The most important difference is that we need to assign colors to different \emph{buckets} of vertices. Without this new ingredient, it is not possible to show that every solution to the phylogenetic CSP (with payoff function $f$) can be transformed to a better \emph{coarse} solution (for an altered payoff function $f^+$). 
\item Finally, we extend our results to phylogenetic CSPs with more than one payoff function. The best algorithm for such CSPs first finds the best possible biased assignment and then uses it to obtain a random solution.
\end{itemize}

\section{Preliminaries} \label{sec:prelim}
\noindent\textbf{Trees.}
For ease of exposition, this discussion is focused only on \emph{ordered, rooted, binary} trees. Our results also hold for \emph{unordered} and \emph{unrooted} trees since phylogenetic CSPs on rooted unordered and unrooted unordered trees are special cases of phylogenetic CSPs on ordered trees. Let $T=(V,E)$ be a rooted tree with root $r$. A tree $T$ is called ordered if the child nodes of every internal node $u$ are ordered from left to right. In an ordered tree, all leaves are also ordered from left to right as in a planar drawing of that tree.
In Section~\ref{sec:higher-arity-tree}, we discuss an extension of our results for higher arity trees. Let us note that we will  use auxiliary higher arity trees in the proof of our hardness result even for binary trees. From now on, we simply use the word ``tree'' to refer to ordered, rooted trees.

For $u, v \in V$, we say that $u$ lies below $v$ if the path from $u$ to the root $r$ passes through $v$; in this case, we may also say that $v$ lies above $u$. The Lowest Common Ancestor (LCA) of a set of vertices $S\subseteq V$ is the node $u$ that lies above all vertices in $S$ and has the largest distance from $r$ (the LCA node is uniquely determined by the set $S$). 




\paragraph{Tree Homeomorphism.}
We now define a \emph{phylogenetic} payoff function. Given a tree $T$ and $k$ distinct leaves $u_1,\dots,u_k$ of $T$, a \emph{phylogenetic} payoff function $f$ returns a value (payoff) in $[0,1]$. Loosely speaking, this payoff can only depend on the relative positions of leaves $u_1,\dots,u_k$ and their least common ancestors in the tree. Below, we formalize the definition using the notion of homeomorphism for labeled ordered rooted trees. Then, we examine two ways of defining \emph{phylogenetic payoff functions} using \emph{pattern tables} and \emph{formulas with bracket predicates}. Pattern tables correspond to truth tables of ordinary CSPs, and formulas with bracket predicates correspond to formulas with \emph{and}, \emph{or}, \emph{not} predicates for ordinary CSPs. 

\medskip

Consider a graph $G$ and vertex $u$ in $G$ of degree $2$. Let $v_1$ and $v_2$ be the neighbours of $u$. We call 
the following operation \emph{smoothing $u$ out}: Remove vertex $u$ along with edges $(u,v_1)$ and $(u,v_2)$ from $G$ and then add edge $(v_1,v_2)$ to $G$.

\begin{definition}
Consider two ordered  rooted trees $A$ and $B$. Let $u_1,\dots,u_k$ be distinct leaves in $A$ and $v_1,\dots,v_k$ be distinct leaves in $B$. 

\medskip

\noindent I. We say that $A$ with  labeled leaves $u_1,\dots,u_k$ and $B$  with labeled leaves $v_1,\dots,v_k$ are 
isomorphic if there exists an isomorphism of ordered trees $A$ and $B$ that maps every $u_i$ to $v_i$. Note that an isomorphism $g$ of ordered trees must preserve the order of vertices. That is, if $u$ is to the left of $v$, then $g(u)$ must be to the left of $g(v)$. Also, $A$'s root must be mapped to $B$'s root.

\medskip

\noindent II. We say that $A$ with  labeled leaves $u_1,\dots,u_k$ and $B$ with with labeled leaves $v_1,\dots,v_k$ are homeomorphic if $A$ and $B$ can be transformed to isomorphic trees $A'$ and $B'$ using the following three operations: (1) removing every non-labeled leaf in $A$ or $B$ (i.e., any leaf other than $u_1,\dots,u_k$ in $A$; and any leaf other than $v_1,\dots,v_k$ in $B$); (2) smoothing out vertices of degree 2 in $A$ or $B$ (see above for the definition); (3) removing the root if its degree is $1$ and making its only child the new root.
\end{definition}

In the definition of homeomorphic trees, we can assume that $A'$ and $B'$ are \emph{irreducible trees} i.e., trees that cannot be further minimized using operations (1), (2), and (3). Note that each tree with $k$ labeled leaves has a unique irreducible tree because operations (1), (2), and (3) commute. Consequently, the homeomorphic relation between labeled trees is transitive.

We now can formally define \emph{phylogenetic} payoff functions.
\medskip

\begin{definition}
A function $f$ that takes as input a tree $T$ and $k$ leaves $x_1,\dots,x_k$, and returns a value in the range $[0,1]$, is a \emph{phylogenetic payoff function} if it satisfies the following condition: for any two trees $A$ and $B$, and any leaves $u_1,\dots,u_k$ in $A$ and $v_1,\dots, v_k$ in $B$, if $A$ with labels $u_1,\dots,u_k$
and $B$ with labels $v_1,\dots, v_k$ are homeomorphic, then
$f(A, u_1,\dots,u_k) = f(B, v_1,\dots,v_k)$. The value returned by $f$ is called a payoff.
\end{definition}

To simplify notation, we will omit the tree and write $f(x_1,\dots,x_k)$ instead of $f(T, x_1,\dots,x_k)$ when it is clear that $x_1,\dots,x_k$ are leaves of $T$. 
In this paper, we will only deal with payoff functions whose maximum payoff equals $1$. We call such functions satisfiable.
Before, we proceed to the definition of \emph{phylogenetic CSPs}, we discuss how to define phylogenetic functions using  \emph{tree patterns} and \emph{formulas with bracket predicates}.

\medskip

\noindent\textbf{Tree Patterns.} Intuitively, a \textit{pattern} (or motif) is a small graph that we want to find in a larger graph. Here, we are interested in \textit{tree patterns}.

\begin{definition}
A tree pattern $P$ is a tree with $k$ leaves that are labeled by $k$ variable names $x_1,\dots,x_k$.
\end{definition}
We refer the reader to Section~\ref{sec:figures} for examples of different tree patterns. 

\begin{definition}
Consider a  tree $T$ and $k$ 
leaves $u_1,\dots,u_k$ in $T$. We say that leaves $u_1,\dots,u_k$ match pattern $P(x_1,\dots,x_k)$ in $T$ if 
tree $T$ with labeled leaves $u_1,\dots,u_k$ and $P$ with leaves $x_1,\dots, x_k$ are homeomorphic.
\end{definition}

Every (ordinary) Boolean predicate or function can be specified using a truth table. We now define an analog of a truth table for phylogenetic trees. A pattern table for $f$ is a list of distinct (non-homeomorphic) patterns with $k$ variables $x_1,\dots,x_k$ and payoffs in $[0,1]$
assigned to the patterns. 
The value of phylogenetic payoff function $f$ defined by a pattern table on leaves $u_1,\dots u_k$ equals to the payoff assigned to the pattern $P$ if $u_1,\dots,u_k$ matches $P(x_1,\dots,x_k)$ for some pattern $P$ in the table; and $0$ if $u_1,\dots,u_k$ do not match any pattern in the table. In Figure~\ref{fig:patterns-for-triplets} in Appendix, we show patterns in the pattern table for the Triplet Consistency problem; each pattern in Figure~\ref{fig:patterns-for-triplets} is assigned a payoff of $1$.

\medskip

\noindent\textbf{Bracket Predicates $[a,b<c]$.} We can  specify patterns using  ``bracket predicates.'' Consider three leaves $a,b,c$ of a tree $T$. We say that $[a<b]$ if $a$ appears to the left of $b$ in $T$. We write $[a,b<c]$ if vertices $a$ and $b$ lie in the left subtree of   $\lca(a,b,c)$ and $c$ lies in the right subtree of 
$\lca(a,b,c)$. Similarly, we write
$[a < b,c]$ if $a$ lies in the left subtree of $\lca(a,b,c)$ and $b,c$ lie in the right subtree.
It is not hard to see that every pattern can be expressed as a conjunction of bracket predicates. We show how to represent patterns for the Triplet Consistency payoff function as a conjunction of bracket predicates in Figure~\ref{fig:patterns-for-triplets}. We prove the following Lemma~\ref{lem:bracket-predicates} in Section~\ref{sec:cl:bracket-predicates}.

\begin{lemma}\label{lem:bracket-predicates}
Every pattern can be expressed as a conjunction of bracket predicates.
\end{lemma}
\noindent In Section~\ref{sec:cl:bracket-predicates}, we prove that every phylogenetic payoff function can be defined using a pattern table.

\medskip

\noindent\textbf{Phylogenetic CSPs.} 
A \emph{phylogenetic CSP} problem $\Gamma$ is defined by one or several phylogenetic payoff functions $f_1,\dots, f_A$ of arity $k$. 
An instance $\calI$ of $\Gamma$ consists of a set of variables 
$V$ and sets of $k$-hyperedges $C_{f_i}$ -- one set for each predicate $f_i$ in $\Gamma$. Thus,
$\calI = (V, 
C_{f_1},\dots, C_{f_A})$.
A hyperedge 
$(u_1,\dots,u_k)$ in $C_{f_i}$ represents a constraint 
$f_i$ on variables $u_1,\dots,u_k$.
The weight of a hyperedge $(u_1,\dots,u_k)$ in $C_{f_i}$ 
is denoted by $w_{f_i}(u_1,\dots,u_k)$.

In this paper, we will focus on phylogenetic CSPs with one payoff function. However, all results we prove in this paper also hold for phylogenetic CSPs with multiple payoff functions. We will discuss such CSPs in Section~\ref{sec:multipl-payoff}. When we refer to phylogenetic CSPs with one payoff function, we will omit the index $f_i$ and denote the set of payoff functions by $C$ and weights of constraints by $\weight(u_1,\dots,u_k)$. 

A solution for an instance $\calI$ of phylogenetic CSP $\Gamma$ is an assignment of variables $V$ to leaves of a binary ordered tree $T$ (the tree $T$ is also a part of the solution). We denote the set of all solutions by $\Phi(\calI)$. The value of a solution $\varphi\in \Phi(\calI)$, which we denote by $\val(\varphi,\calI)$, equals the expected value of payoff functions on a random hyperedge in $\calI$:
$$\val(\varphi,\calI) 
= \frac{1}{\weight(\calI)}\sum_{\substack{i\in\{1,\dots,A\}\\(x_1,\dots,x_k)\in C_{f_i}}} \weight(u_1,\dots,u_k) \cdot f(\varphi(u_1),\dots,\varphi(u_k)),$$
where $\weight(\calI)$ is total total weight of all hyperedges (constraints) in $\calI$:
$$\weight(\calI) = \sum_{i}\sum_{(u_1,\dots,u_k)\in C_{f_i}} \weight(u_1,\dots,u_k).$$
We denote the maximum value of a solution $\varphi \in \Phi(\calI)$ by $\opt(\calI)$. We denote the average value of all payoff functions in $\calI$ by 
$\Avg_{(u_1,\dots,u_k)\in \calI} f(\varphi(u_1),\dots,\varphi(u_k))$.

\medskip

\section{Biased Random Assignment and 
Approximation Resistance}\label{sec:random-assignment-def}

In this section, we explore definitions of \emph{randomized assignment}, \emph{biased randomized assignment}, and \emph{approximation resistance}. Recall that a randomized assignment algorithm for ordinary CSPs assigns a random value from the alphabet $\Sigma$ to each variable. Similarly, a random assignment algorithm for ordering CSPs permutes all variables in a random order. An analogue of these algorithms for phylogenetic CSPs  randomly partitions all variables of an instance $\calI$ into two groups and then assigns the first group to the left subtree and the second group to the right subtree. It recursively partitions variables in the left and right subtrees till each node contains at most one variable. This algorithm works well for {\sc{MaxTriplets}} and {\sc{MaxQuartets}}. For these problems, it gives a $\nicefrac{1}{3}$ approximation. However, it drastically fails for some other phylogenetic CSPs. 

Consider the following problem, which we call ``split one node to the right'' (see Figure~\ref{fig:split-one-to-left-right}). The payoff function $s(x_1,\dots,x_k)$ returns $1$ if at every node when three or more variables split, only one of those variables goes to the right subtree and the other variables go to the left subtree. The abovementioned randomized assignment algorithm satisfies a predicate 
$s(x_1,\dots,x_k)$ with
probability exponentially small in $k$. However, a biased randomized assignment algorithm that places every vertex to the left subtree with probability $1-\delta$ and to the right subtree with probability $\delta$ satisfies predicate $s$ with probability close to $1$ if $\delta$ is sufficiently small. In Section~\ref{sec:examples}, we consider a more interesting example. In that example, a randomized assignment algorithm should split vertices into two groups with probabilities that change from one recursive call to another.

The discussion above leads us to the following definition of a biased random assignment algorithm (for biased random assignments for ordinary CSPs, see~\cite*{GL2014-biased}). A biased random assignment algorithm is specified by an absolutely continuous probability measure $\rho$ on the interval $[0,1]$. We remind the reader that $\rho$ is absolutely continuous if there exists a measurable function $h$ such that 
$\rho(S) = \int_S h(t) dt$ for every measurable subset $S$ of $[0,1]$.
The measure of the interval $[0,1]$ equals $1$ because $\rho$ is a probability measure. We assign every node of the infinite complete binary tree a subinterval of $[0,1]$. The root of the tree is assigned $[0,1]$. Its left child is assigned $[0,1/2]$ and right child is assigned $[1/2, 1]$. This assignment defines weights of all nodes -- the weight $\rho(u)$ equals to the measure of the interval corresponding to $u$. 

We now assume that the algorithm is given oracle access to $\rho$. The  algorithm recursively partitions variables in $\calI$. Initially, it assigns all variables to the root of the binary tree. At every step, the algorithm picks a node $u$ of the binary tree that contains more than one variable,  creates two child nodes 
$u_{left}$ and $u_{right}$,
and randomly splits all variables in $u$ between $u_{left}$ and $u_{right}$. Namely, it assigns each $x$ in $u$ to $u_{left}$ and $u_{right}$ with probabilities $\rho(u_{left})/\rho(u)$ and $\rho(u_{right})/\rho(u)$, respectively. 

Let $\alpha_{\rho}(f)$ be the approximation factor of the biased random assignment algorithm with measure $\rho$ for payoff function $f$ and $\alpha_{opt}(f)$ be the best approximation factor of a biased randomized assignment algorithm for payoff function $f$:
$$\alpha_{opt}(f) = 
\sup_{\rho}\alpha_{\rho}(f).$$
As we mentioned earlier, we now consider phylogenetic CSPs with one payoff function~$f$. If phylogenetic CSP $\Gamma$ has several payoff functions, then it should first randomly choose the appropriate measure $\rho$. We discuss such CSPs in Section~\ref{sec:multipl-payoff}.
If a phylogenetic CSP $\Gamma$ has only one payoff function, we will write $\alpha_{opt}(\Gamma) = \alpha_{\opt}(f)$. We call 
$\alpha_{opt}(\Gamma)$ the random assignment approximation factor for $\Gamma$.

We note that every measurable function $h$ can be approximated by a piecewise constant function $h'$. Function $h'$ is a constant on each interval $S_1,\dots, S_q$ that partition $[0,1]$ into $q$ parts. Moreover, we can assume that the enpoints of intervals $S_i$ are binary rational numbers. This lets us define 
$\rho'$-biased algorithms in the following equivalent way. A $\rho'$-biased algorithm is defined by a constant-size tree $T$ and a probability distribution $\rho'$ on the leaves of $T$. The algorithm first assigns every variable to one of the leaves of $T$ using the distribution $\rho'$. Then, it recursively partitions variables splitting them 50\%/50\% at every step. The running time of this algorithm is linear in $n$ (the number of variables). We have the following theorem.

\begin{theorem}
For every phylogenetic CSP $\Gamma$ and every positive $\varepsilon$, there exists a linear-time biased randomized rounding algorithm that has an approximation factor of $\alpha_{opt}(\Gamma) - \varepsilon$.
\end{theorem}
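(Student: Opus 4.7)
The plan is to approximate the near-optimal measure $\rho^*$ by one $\rho'$ with piecewise-constant density, and then invoke the constant-size-tree description of $\rho'$-biased algorithms given just above the theorem to obtain a linear-time algorithm. By the definition of $\alpha_{opt}(\Gamma)$ as a supremum, fix an absolutely continuous measure $\rho^*$ with density $h^*$ so that $\alpha_{\rho^*}(f) \geq \alpha_{opt}(\Gamma) - \varepsilon/2$. Since step functions are dense in $L^1([0,1])$, approximate $h^*$ in $L^1$ by a piecewise constant density $h'$ that is constant on $q$ intervals $S_1,\dots,S_q$ with $\|h^*-h'\|_1 \leq \delta$ for a $\delta>0$ to be chosen. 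Let $\rho'$ be the corresponding probability measure; then $\mathrm{TV}(\rho^*,\rho') \leq \delta/2$.

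The crux is a continuity argument showing that $\alpha_{\rho'}(f) \geq \alpha_{\rho^*}(f) - \varepsilon/2$ once $\delta$ is small enough. Viewing the $\rho$-biased algorithm as drawing $t_i \sim \rho$ independently for each variable $x_i$ and routing $x_i$ through the infinite binary tree along the path containing $t_i$, the expected payoff of any $k$-ary constraint is
\[
\E_\rho[f(x_{i_1},\dots,x_{i_k})] \;=\; \sum_{P} f(P)\,\Pr_{t_1,\dots,t_k\sim\rho}\!\bigl[(t_1,\dots,t_k)\text{ forms pattern }P\bigr],
\]
a finite sum of integrals of pattern-indicators against the product measure $\rho^k$. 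Because $\rho^*$ and $\rho'$ are absolutely continuous, the measure-zero events where some $t_i$ is a dyadic rational or where two $t_i$'s coincide do not contribute, so the pattern-forming indicators are a.e.\ well defined. Coupling $(\rho^*)^k$ and $(\rho')^k$ coordinate by coordinate and applying a union bound yields $|\Pr_{\rho^*}[P]-\Pr_{\rho'}[P]| \leq k\cdot \mathrm{TV}(\rho^*,\rho') \leq k\delta/2$ for every pattern $P$, and averaging over the constraints of any instance $\calI$ gives $|\E_{\rho^*}[\val(\varphi,\calI)] - \E_{\rho'}[\val(\varphi,\calI)]| \leq C_k\,\delta$ uniformly in $\calI$, with $C_k$ depending only on $k$. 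Taking $\delta$ sufficiently small (and using that in the approximation-resistance regime one only needs the ratio bound on instances whose $\opt(\calI)$ is bounded away from $0$) yields $\alpha_{\rho'}(f) \geq \alpha_{opt}(\Gamma) - \varepsilon$.

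Finally, the $\rho'$-biased algorithm is realized in linear time by the two-stage procedure already described: (i) assign each of the $n$ variables independently to one of the $q$ leaves of the constant-size tree $T$ under $\rho'$, in $O(1)$ time per variable; (ii) within each leaf of $T$ recursively split the variables $50/50$, which in the RAM model with $\Theta(\log n)$-bit words is implementable in $O(n)$ time by giving each variable an $O(\log n)$-bit uniform label and building the trie via radix sort. The step I would handle with the most care is the continuity bound: one must check that the boundary of the pattern-forming event has product measure zero under $(\rho^*)^k$ and $(\rho')^k$, which reduces to the fact that the countable set of dyadic rationals carries no mass under absolutely continuous measures, so $\one[\text{pattern }P]$ is a.e.\ continuous on $[0,1]^k$ and the coupling estimate goes through.
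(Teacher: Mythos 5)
Your proposal is correct and follows essentially the same route as the paper, which proves this theorem only implicitly via the preceding discussion: approximate the density by a piecewise-constant function supported on constantly many (dyadic) intervals, view the resulting $\rho'$-biased algorithm as a constant-size tree plus independent leaf assignments followed by $50/50$ splits, and note the linear running time. Your added details (the coordinate-wise coupling/total-variation bound showing $\alpha_{\rho'}\ge\alpha_{\rho^*}-\varepsilon/2$, and the radix-sort implementation of the splitting phase) just flesh out steps the paper treats as immediate, modulo the trivial normalization of $h'$ to a genuine probability density.
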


We discuss the case of CSPs with more than one payoff section in Section~\ref{sec:multipl-payoff}. Finally, we define approximation resistance for phylogenetic CSPs.

\begin{definition} A phylogenetic CSP $\Gamma$ is approximation resistant if for every positive $\varepsilon$, it is NP-hard to distinguish between instances of $\Gamma$ that (a) have a solution of value greater than $1-\varepsilon$ and (b) do not have a solution of value greater than 
$\alpha_{opt}(\Gamma)+\varepsilon$.
\end{definition}

We show that all phylogenetic CSPs are approximation resistant. In particular, this means that, unless $P=NP$ (assuming UGC), for every phylogenetic CSP $\Gamma$, there is no approximation algorithm with a constant approximation factor better than $\alpha_{opt}(\Gamma)+\varepsilon$.

\section{Proof Overview}\label{sec:proof-overview}
In this section, we give an overview of our hardness result for phylogenetic CSPs. We first show that the Triplets Consistency problem ({\sc{MaxTriplets}}) is approximation resistant by providing a reduction from a specially crafted ordering CSP problem to the Triplets Consistency problem. This reduction works for Triplets Consistency and some other phylogenetic problems, however, fails in the general case. We then show how to modify the construction by \cite*{GHMRC11} to make our reduction work for all phylogenetic CSPs.

Consider a phylogenetic CSP $\Gamma_{phy}$. In this section, we assume that this phylogenetic CSP has only one payoff function $f$ of arity $k$. We will discuss the case when $\Gamma_{phy}$ has several payoff functions in Section~\ref{sec:multipl-payoff}. Let $\calI$ be an instance of $\Gamma_{phy}$. Denote the value of the optimal solution for $\calI$ by  $\opt(\calI)$. 
Observe that every solution $\varphi$ to $\calI$ defines an ordering on the variables of the instance $\calI$. 
In this ordering, the variables are arranged from left to right according to their position in the embedding $\varphi$
in the binary tree. We denote the ordering of the variables in $\calI$ by $\order(\varphi)$. 
Let $\Phi_{\pi}(\calI)$ be the set of all solutions for instance $\calI$ in which the order of variables is $\pi$ (i.e., $\order(\varphi) = \pi$); and let  $\opt(\calI\mid\pi)$ be the value of the best solution $\varphi$ in $\Phi_{\pi}(\calI)$:

\begin{equation}\label{eq:opt-I-given-pi}
\opt(\calI\mid\pi)
= \max_{\substack{\varphi\in\Phi_{\pi}(\calI)}}
\val(\varphi,\calI).
\end{equation}

\medskip

\noindent\textbf{Gap instance.} Following~\citet{GHMRC11}, we use a \emph{gap instance} $\calI^{f}_{gap}$ in our reduction. The variables of this instance are leaves of an ordered perfect  $k$-ary tree of depth $d$ (in this tree all internal nodes have $k$ children; and the depth of all leaves is $d$). Each constraint in the instance is a payoff function $f$ on a subset of $k$ leaves/variables. To define the instance, we introduce a random map $L_{k,m}:k\to V$, where $V$ is the set of leaves and $m=|V|$. Random map $L_{k,m}$ works as follows: it picks a random $i$ from set $\{0,1, \dots, d-1\}$ and then a random (internal) node $u$ at level $i$ of the tree $T$. Let $u_1,\dots, u_k$ be the child nodes of $u$ arranged from left to right. In the subtrees $T_{u_1},\dots, T_{u_k}$ rooted at vertices $u_1,\dots, u_k$, we independently pick random leaves $l_1,\dots, l_k$ and map each $i$ to $l_i$. Now, for every $k$ vertices $l_1,\dots, l_k$, we add hyperedge 
$(l_1,\dots,l_k)$ to the set of constraints $C$. The weight of 
$(l_1,\dots,l_k)$
equals
$\Pr\{L_{k,m}(1) = l_1,\dots, L_{k,m}(k)=l_k\}$. Note that we use exactly the same gap instance as~\cite{GHMRC11}. In their instance, the payoff function is an ordering payoff function. We will use this instance with ordering, phylogenetic, and ordinary payoff functions. In fact, we will think of $\calI_{gap}$ as a template for $k$-ary CSP instances (formally, 
$\calI_{gap}=C$ is the set of hyperedges). Then, $\calI^f_{gap} = (f,C)$ is an instantiation of this template with the payoff function $f$.

\medskip

\noindent \citet{GHMRC11} showed that
\begin{enumerate}[I.]
\item $\calI_{gap}^f$ is a completely satisfiable instance of an ordering CSP for every ordering payoff function $f$ with $f(id) = 1$. That is, if $f(id) = 1$, then $\opt(\calI_{gap}^{f}) = 1$. Note that for every satisfiable payoff function $f$, we can rearrange its inputs using some permutation $\sigma$ so that $f\circ\sigma(id) = 1$.
\item The cost of any so-called \emph{coarse} solution to this instance is at most $\alpha + \varepsilon$, where $\alpha$ is the expected value of the random assignment algorithm (which is unique for ordering CSPs unlike phylogenetic CSPs), and $\varepsilon$ tends to $0$ as the depth $d$ (see above) of tree $T$ tends to infinity. 
\end{enumerate}
In our proof, we will also use coarse solutions. However, we postpone the discussion of such solutions till  Section~\ref{sec:coarse-solution}, where we define coarse solutions for phylogenetic CSPs (note that coarse solutions for phylogenetic CSPs and ordering CSPs differ in several important ways). In this paper, we will use the following lemma, which is an analog of property II above.
\begin{lemma}\label{lem:ordinary-CSPs-gap}
Fix natural numbers $k\geq 1$, $q\geq 1$ and positive real number $\varepsilon > 0$.
Then, there exists a natural $m^*$ such that for every template $\calI_{gap}$ with at least $m^*$ leaves from the family defined above, every (ordinary) payoff function $f$ of arity $k$ defined on alphabet $\Sigma$ of size $q$ (i.e., $f$ is a function from $\Sigma^k$ to $[0,1]$), the following claim holds:
\begin{equation}\label{eq:lem:gap-I-for-CSPs}
\opt(\calI_{gap}^f)\leq 
\max_{\rho} \E_{x_i\sim \rho}\big[
f(x_1,\dots,x_k)\big] + \varepsilon,    
\end{equation}
where $\rho$ is a probability distribution on $\Sigma$; and $x_1,\dots,x_k$ are drawn from $\rho$ independently.
\end{lemma}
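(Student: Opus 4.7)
My plan is to fix an arbitrary assignment $\sigma:V\to\Sigma$ and, for each node $u$ of the template tree $T$, define the empirical label distribution $\rho_u\in\Delta(\Sigma)$ by $\rho_u(a)=|\{\ell\in\mathrm{leaves}(T_u):\sigma(\ell)=a\}|/|\mathrm{leaves}(T_u)|$. Unfolding the definition of the random map $L_{k,m}$ and the associated constraint weights, the value of $\sigma$ on $\calI_{gap}^f$ rewrites as
\[
\val(\sigma,\calI_{gap}^f) \;=\; \E_{i,u}\;\E_{l_j\sim U(T_{u_j})}\bigl[f(\sigma(l_1),\dots,\sigma(l_k))\bigr],
\]
where $i$ is uniform on $\{0,\dots,d-1\}$, $u$ is uniform on internal nodes of $T$ at level $i$, and $l_1,\dots,l_k$ are independent uniform leaves drawn from the child subtrees $T_{u_1},\dots,T_{u_k}$.

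Conditioned on $u$, the tuple $(\sigma(l_1),\dots,\sigma(l_k))$ has independent coordinates with marginals $\rho_{u_1},\dots,\rho_{u_k}$. Coupling this product coordinate-by-coordinate with the i.i.d.\ product $\rho_u^{\otimes k}$ and using $f\in[0,1]$ gives
\[
\E[f(\sigma(l_1),\dots,\sigma(l_k))] \;\le\; \E_{x_i\sim\rho_u}[f(x_1,\dots,x_k)] + \sum_{j=1}^k\|\rho_{u_j}-\rho_u\|_{\mathrm{TV}} \;\le\; \max_\rho\E_{x_i\sim\rho}[f(x_1,\dots,x_k)] + \sum_{j=1}^k\|\rho_{u_j}-\rho_u\|_{\mathrm{TV}}.
\]
Thus everything reduces to showing that the ``coarseness error'' $\E_{i,u}\bigl[\sum_{j=1}^k\|\rho_{u_j}-\rho_u\|_{\mathrm{TV}}\bigr]$ can be made at most $\varepsilon$ once $d$ (equivalently $m=k^d$) is sufficiently large.

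The coarseness error will be controlled by a martingale argument. For each label $a\in\Sigma$, run a root-to-leaf random walk in $T$ that picks a uniformly random child at each internal node; let $M_j$ be the value $\rho_\cdot(a)$ at the node visited at level $j$. Then $(M_j)_{j=0}^d$ is a martingale with $M_d\in\{0,1\}$, so by the orthogonality of martingale differences $\sum_{j=1}^d\E[(M_j-M_{j-1})^2]=\Var(M_d)\le 1/4$. Expanding this expectation — crucially using that a uniform root-to-leaf walk hits each level uniformly, so the probability of being at a fixed node at level $i$ is $1/k^i$ — and summing over $a\in\Sigma$ gives
\[
\sum_{i=0}^{d-1}\frac{1}{k^{i+1}}\sum_{u\text{ at level }i}\sum_{j=1}^k\|\rho_{u_j}-\rho_u\|_2^2 \;\le\; \frac{q}{4}.
\]
Averaging over $i$ and $u$ (and absorbing the factor of $k$) yields $\E_{i,u}[\sum_j\|\rho_{u_j}-\rho_u\|_2^2]\le kq/(4d)$. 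Two Cauchy--Schwarz applications (first $\|v\|_{\mathrm{TV}}\le \tfrac{\sqrt q}{2}\|v\|_2$, then summing over $j\in[k]$ and passing expectation inside $\sqrt{\cdot}$ by Jensen) upgrade this to
\[
\E_{i,u}\Bigl[\sum_{j=1}^k\|\rho_{u_j}-\rho_u\|_{\mathrm{TV}}\Bigr] \;\le\; \frac{kq}{4\sqrt d}.
\]
Picking $m^*=k^{\lceil (kq)^2/(16\varepsilon^2)\rceil}$ forces $d\ge (kq)^2/(16\varepsilon^2)$ and makes this error at most $\varepsilon$, which completes the proof since $\sigma$ was arbitrary.

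The main subtlety I expect is the martingale bookkeeping: carefully tracking the factor $1/k^i$ between ``a node visited by the uniform root-to-leaf walk'' and ``a node sampled uniformly from its level,'' so that the variance estimate $\sum_j\E[(M_j-M_{j-1})^2]\le 1/4$ translates into precisely the bound on $\E_{i,u}[\sum_j\|\rho_{u_j}-\rho_u\|_2^2]$ needed above. Everything else — the TV coupling, the $\ell_2\to\mathrm{TV}$ conversion, and the final choice of $m^*$ — is routine.
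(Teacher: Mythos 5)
Your proposal is correct. The outer skeleton coincides with the paper's: you rewrite the value of an arbitrary assignment as an average over the level--node distribution of the gap template, couple the $k$ independent child-subtree label marginals $\rho_{u_1},\dots,\rho_{u_k}$ coordinate-by-coordinate with the i.i.d.\ product $\rho_u^{\otimes k}$ (which is exactly a biased random assignment, hence bounded by $\max_\rho\E_{x_i\sim\rho}[f]$), and pay a total-variation error $\sum_j\|\rho_{u_j}-\rho_u\|_{\mathrm{TV}}$ per constraint --- this is precisely the paper's step with the modified map $\widetilde L_{k,m}$ and the optimal coupling lemma. Where you genuinely diverge is in the key estimate that this coarseness error vanishes as $d\to\infty$ (the analogue of Lemma~\ref{lem:mu-parent-mu-children}): the paper (following a variant of \citet{GHMRC11}) proves it via an information-theoretic argument --- the red-leaf-biased node distribution $\calQ$, the chain rule for conditional entropy, interpretation of $\log_2 k-H(B_{v(t)})$ as a KL divergence, and Pinsker's inequality --- whereas you use orthogonality of martingale differences along a uniform root-to-leaf walk, getting $\E_{i,u}\bigl[\sum_j\|\rho_{u_j}-\rho_u\|_2^2\bigr]\le kq/(4d)$ and then converting $\ell_2$ to TV by Cauchy--Schwarz. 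Your route is more elementary (second moments only, no entropy, no special handling of subtrees with zero label mass) at the cost of a quantitatively weaker bound, $kq/(4\sqrt d)$ versus the paper's $k\sqrt{\log_2 q/(2d)}$; since the lemma only needs the error to tend to $0$ as $d\to\infty$ with $k,q,\varepsilon$ fixed, this loss is immaterial, and your choice $m^*=k^{\lceil (kq)^2/(16\varepsilon^2)\rceil}$ is a valid instantiation.
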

The lemma is similar to Theorem 11.1 from the paper by~\cite*{GHMRC11}. For completeness, we provide a proof of Lemma~\ref{lem:ordinary-CSPs-gap} in Section~\ref{sec:ordinary-CSPs-gap}. To prove Lemma~\ref{lem:ordinary-CSPs-gap}, we rewrite bound (\ref{eq:lem:gap-I-for-CSPs}) as a bound on the KL-divergence of certain random variables. Then, we prove the new bound on the KL-divergence using the chain rule for conditional entropy. We believe that our approach is somewhat simpler than the original approach used by~\cite{GHMRC11}.

We also show that gap instance $\calI_{gap}^{f}$
is completely satisfiable. We present a proof of the following lemma in Section~\ref{sec:gap-sat}. Note that this statement is not obvious for phylogenetic CSPs and does not follow from the previously known results. 
\begin{lemma}\label{lem:sat-I}
For every  phylogenetic CSP $\Gamma$ with payoff function $f$ and gap instance $\calI_{gap}^f$ of arbitrary size, we have
$\opt(\calI_{gap}^f) = 1$.
\end{lemma}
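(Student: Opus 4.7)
The plan is to exhibit an explicit assignment $\varphi$ that attains payoff $1$ on every constraint of $\calI_{gap}^f$. The first step is to extract from the pattern table of $f$ a witness pattern $P$ of payoff $1$, which exists because $f$ is satisfiable. Observe that $P$ is itself an ordered binary tree on $k$ leaves labeled $x_1,\dots,x_k$, so $P$ is a recipe: if $k$ variables are embedded into a binary tree with exactly this shape and labeling, then $f$ returns $1$.

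Next, I would build an ordered binary tree $T'$ from the $k$-ary template $T$ by a recursive \emph{pattern expansion}. Concretely, at every internal node $u$ of $T$ with children $u_1,\dots,u_k$ (in left-to-right order), replace the $k$-way branching at $u$ by a copy of the internal structure of $P$, and attach the template subtree $T_{u_j}$ at the leaf position of $x_j$ in $P$ (crucially, at the $x_j$-slot, \emph{not} at the $j$-th slot counted left-to-right in $P$). Performing this expansion at every internal node is well-defined and yields an ordered binary tree $T'$ whose leaves are in natural bijection with the leaves of $T$, since only internal branching was modified. Let $\varphi$ send each variable (i.e.\ each leaf of $T$) to its corresponding leaf in $T'$.

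Finally, I would verify that every constraint evaluates to $1$ under $\varphi$. A constraint has the form $f(l_1,\dots,l_k)$ with $l_j\in T_{u_j}$ for some internal template node $u$; let $\tilde u\in T'$ denote the root of the $P$-gadget inserted at $u$. By construction, the subtree of $T'$ rooted at $\tilde u$ has the topology of $P$, and $l_j$ is the only leaf of $\{l_1,\dots,l_k\}$ lying inside the $x_j$-slot of that gadget. The pattern-matching procedure---iterated deletion of leaves outside $\{l_1,\dots,l_k\}$ combined with contraction of single-child nodes---collapses everything above $\tilde u$ (no selected leaf lies outside the $\tilde u$-subtree) and collapses each $x_j$-slot down to the single leaf $l_j$, leaving precisely $P$ with each $x_j$ relabeled to $l_j$; hence $(l_1,\dots,l_k)$ matches $P$ in $T'$ and $f(l_1,\dots,l_k)=1$. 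The main subtlety I expect to treat carefully is the \emph{ordered}-tree aspect of this match: attaching $T_{u_j}$ at the $x_j$-slot (rather than preserving $T$'s left-to-right order of children) is exactly what guarantees that after restriction and contraction the selected leaves appear in $T'$ in the same left-to-right order as $x_1,\dots,x_k$ do in $P$. Once this is verified, $\val(\varphi,\calI_{gap}^f)=1$ on every constraint, and therefore $\opt(\calI_{gap}^f)=1$.
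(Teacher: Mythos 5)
Your proposal is correct and is essentially the paper's own argument: the paper (Lemma~\ref{lem:cousins-sat}) builds the binary tree $T'$ by replacing every internal node of the $k$-ary template together with its children by a copy of the satisfying pattern $P$, and then notes that every constraint of $\calI_{gap}^f$ is placed on ``cousins,'' so it restricts exactly to $P$ and is satisfied. The only cosmetic difference is that you attach $T_{u_j}$ at the $x_j$-slot of $P$, whereas the paper achieves the same effect by permuting the arguments of $f$ so that $x_1,\dots,x_k$ appear left to right in $P$.
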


The main technical tool in our reduction is Theorem~\ref{thm:exists-gap-instance}. This theorem shows that $\opt(\calI_{gap}^f|\pi)
\approx \alpha$ for a uniformly random permutation $\pi$. 
A crucial ingredient in the proof of this theorem is a new definition of coarse solutions and colorings, which we discuss in Section~\ref{sec:coarse-solution}.
\begin{theorem}\label{thm:exists-gap-instance}
Consider a phylogenetic CSP $\Gamma$ with a payoff function $f$. For every positive $\varepsilon > 0$, there exists a sufficiently large gap instance $\calI^f_{gap}$ of phylogenetic CSP $\Gamma$ such that 
$\opt(\calI^f_{gap})=1$; and
for a random permutation $\pi$ of variables of $\calI_{gap}^f$:
$$\E_{\pi}[\opt(\calI^f_{gap} \mid \pi)]  \leq \alpha + \varepsilon.$$
\end{theorem}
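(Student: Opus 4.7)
The plan is to reduce the bound on $\E_\pi[\opt(\calI^f_{gap}\mid\pi)]$ to the ordinary-CSP gap bound of Lemma~\ref{lem:ordinary-CSPs-gap}. The satisfiability claim $\opt(\calI^f_{gap})=1$ is Lemma~\ref{lem:sat-I}, so the substance is the upper bound. The core idea is that any phylogenetic solution $\varphi$ with $\order(\varphi)=\pi$ is, up to controlled error, described by a bounded-depth \emph{skeleton} plus a finite-alphabet \emph{coloring} of the variables; this reduces the phylogenetic objective to an ordinary CSP objective to which the gap lemma directly applies.

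Concretely, for any $\varphi$ embedded in a binary tree $T_\varphi$, truncate $T_\varphi$ at depth $D=D(\varepsilon,k)$ to obtain a skeleton $S$ with $q\leq 2^D$ slots; the variables below each slot form a contiguous block of $\pi$, defining a coloring $c:V(\calI)\to[q]$. Whenever the $k$ variables of a constraint receive $k$ distinct colors, the pattern they match in $T_\varphi$ equals the pattern $P_S(c_1,\dots,c_k)$ determined by $S$ alone. Define the ordinary payoff function
$$\tilde f_S(c_1,\dots,c_k)=
\begin{cases} p\bigl(P_S(c_1,\dots,c_k)\bigr) & \text{if the $c_i$ are distinct},\\ 1 & \text{otherwise},\end{cases}$$
so that $\val(\varphi,\calI^f_{gap})\leq \val(c,\calI^{\tilde f_S}_{gap})$ pointwise, and hence
$$\opt(\calI^f_{gap}\mid\pi)\leq \max_{S}\opt\bigl(\calI^{\tilde f_S}_{gap}\bigr),$$
the maximum being over the constant-size family of ordered binary skeletons of depth $\leq D$. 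Applying Lemma~\ref{lem:ordinary-CSPs-gap} to each $\tilde f_S$ (alphabet size $q$, error $\varepsilon/3$) gives, for a large enough template,
$$\opt\bigl(\calI^{\tilde f_S}_{gap}\bigr)\leq \max_\rho \E_{x_i\sim\rho}\bigl[\tilde f_S(x_1,\dots,x_k)\bigr]+\varepsilon/3.$$
The right-hand side is the expected payoff of a biased random assignment that routes variables through $S$ with distribution $\rho$ and then recursively $\tfrac12/\tfrac12$-splits inside each slot; this is an instance of the biased algorithms of Section~\ref{sec:random-assignment-def}, so its payoff is at most $\alpha_{\opt}(f)=\alpha$, modulo the collision contribution, which we push into $\varepsilon$-level error by taking $D$ and the template both sufficiently large.

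The main obstacle is precisely this last identification. A measure $\rho$ concentrated on a single slot makes $\tilde f_S\equiv 1$, so the naive bound $\max_\rho \E[\tilde f_S]\leq \alpha+o(1)$ fails without further control. Resolving it requires arguing that the ``collision'' contribution for such a $\rho$ can itself be expressed as a biased random assignment \emph{inside} that slot, i.e.\ by iteratively refining the skeleton $S$ one depth deeper whenever $\rho$ concentrates, until the surviving coloring is spread enough that the collision probability drops below $\varepsilon/3$. Executing this inductive refinement cleanly is precisely the purpose of the ``new definition of coarse solutions and colorings'' promised for Section~\ref{sec:coarse-solution}, and is the main technical innovation of the proof; once it is in hand, the uniform bound $\opt(\calI^f_{gap}\mid\pi)\leq \alpha+\varepsilon$ holds for every $\pi$ and yields the expected-value bound by averaging over $\pi$.
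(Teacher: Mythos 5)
Your first half matches the paper's strategy: pass from a true solution $\varphi\in\Phi_\pi$ to a constant-size ``skeleton plus coloring'' description (the paper's coarse solutions $\xi\in\Xi_{\varepsilon,q,\pi}$ of Lemma~\ref{lem:good-cs}), observe that constraints whose variables get distinct colors are determined by the skeleton, and then invoke Lemma~\ref{lem:ordinary-CSPs-gap} for the resulting ordinary CSP, comparing the biased distribution $\rho$ on the skeleton leaves (followed by $50/50$ splits) with the phylogenetic biased-assignment threshold $\alpha$. But your treatment of collisions is where the argument breaks, and the obstacle you flag is not resolvable the way you suggest. Because you give colliding constraints payoff $1$ in $\tilde f_S$, the quantity $\max_\rho\E[\tilde f_S]$ is $1$ for a $\rho$ concentrated on one slot, so Lemma~\ref{lem:ordinary-CSPs-gap} applied to $\tilde f_S$ is vacuous; and your proposed repair---iteratively refining the skeleton until collisions are rare, yielding a bound $\opt(\calI^f_{gap}\mid\pi)\leq\alpha+\varepsilon$ \emph{uniformly in $\pi$}---is aimed at a false statement. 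By Lemma~\ref{lem:sat-I} the gap instance has a solution $\varphi^*$ of value $1$, so for $\pi^*=\order(\varphi^*)$ we have $\opt(\calI^f_{gap}\mid\pi^*)=1$; no amount of skeleton refinement can give a per-$\pi$ bound of $\alpha+\varepsilon$. The randomness of $\pi$ is essential, not a convenience: in the gap instance most constraints live on cousins deep in the tree, and for an ordering aligned with the tree structure they all collide inside one color class. (A related secondary problem: truncating $T_\varphi$ at a fixed depth $D$ does not bound the size of a color class at all---a caterpillar-like $T_\varphi$ puts almost all variables under one depth-$D$ node---whereas the paper's bottom-up labeling algorithm guarantees each color class has at most $\varepsilon|V|$ variables while using only $O(1/\varepsilon)$ leaves.)

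The paper's actual mechanism, which is missing from your proposal, is a two-sided valuation: $\val^+$ gives colliding constraints payoff $1$ (your $\tilde f_S$) and $\val^-$ gives them payoff $0$. Lemma~\ref{lem:ordinary-CSPs-gap} is applied to the $\val^-$ payoff, for which a concentrated $\rho$ scores $0$ rather than $1$, so the biased phylogenetic assignment induced by $\rho$ dominates it and $\val^-(\xi,\calI^f_{gap})\leq\alpha+\varepsilon$ holds for \emph{every} $\pi$ (Lemma~\ref{lem:no-good-cs-for-gap}). The collision mass $\val^+-\val^-$ is then bounded only \emph{in expectation over a random $\pi$}: since each color class has at most $\varepsilon|V|$ variables and there are at most $(qm)^q$ admissible colorings, a fixed coloring has expected monochromatic Gaifman-graph weight at most $\varepsilon$ under a random $\pi$, and Maurey's concentration inequality for permutations plus a union bound over colorings gives $\E_\pi[\max_\xi(\val^+(\xi,\calI)-\val^-(\xi,\calI))]\leq\varepsilon$ (Lemmas~\ref{lem:bound-max-monochrome} and~\ref{lem:diff-val-pm}). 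To complete your proof you would need to replace the ``refine until collisions vanish'' step with an argument of this type that charges collisions to the random ordering; as written, the key step is a gap.
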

We prove this theorem in Section~\ref{sec:fill-gap}. We now discuss how to use this theorem to construct a gap preserving reduction from a certain ordering CSP problem to $\Gamma_{phy}$.
Fix $\varepsilon > 0$. Let  $\calI_{gap}^f$ be a gap instance from Theorem~\ref{thm:exists-gap-instance} and $m$ be the number of variables in $\calI_{gap}^f$.
Define an auxiliary ordering CSP $\Gamma_{ord}$ with a payoff function $o$ of arity $m$. The value of $o(\pi)$ on variables $x_1,\dots,x_m$ equals
$$o(\pi(x_1),\dots,\pi(x_m)) = \opt(\calI^f_{gap}\mid\pi),$$
where the instance $\calI_{gap}^f$ is also defined on $x_1,\dots,x_m$. In other words, the value of payoff function $o$ with variables $x_1,\dots, x_m$ on permutation $\pi$ equals to the best solution $\varphi$ for 
instance $\calI^f_{gap}$ with the same set of variables $x_1,\dots, x_m$ subject to the constraint that the variables in  solution $\varphi$ are ordered according to $\pi$.

\medskip

\noindent\textbf{Reduction.}  Let 
$\Gamma_{phy}$ be the class of phylogenetic CSPs with payoff function $f$, and $\Gamma_{ord}$ be the class of ordering CSPs with payoff function $o$. We now define a reduction $h_{ord\to phy}$ from CSPs $\Gamma_{ord}$ to CPSs $\Gamma_{phy}$.
We take an arbitrary instance $\calI_{ord}$ of $\Gamma_{ord}$ and transform it to an instance $\calI_{phy}$ of $\Gamma_{phy}$ on the same set of variables as $\calI_{ord}$. In instance $\calI_{ord}$, we
replace every constraint $(x_{i_1},\cdots,x_{i_m})$ for payoff function $o$ with a copy of the gap instance $\calI^f_{gap}$ on variables $x_{i_1},\cdots,x_{i_m}$. 
That is, $\calI_{phy}$ is the union of copies of the gap instances (``gadgets'') $\calI_{gap}^f$ -- one ``gadget'' for every constraint $(x_{i_1},\dots,x_{i_m}$ in $\calI_{ord}$. We denote the obtained instance of phylogenetic CSP $\Gamma_{phy}$ by $\calI_{phy}$. We let 
$h_{ord\to phy}(\calI_{ord}) = \calI_{phy}$.

Note that for every solution $\varphi$ to the phylogenetic CSP  $\calI_{phy}$, there is a 
corresponding solution $\pi$ to the ordering CSP $\calI_{ord}$. This solution $\pi$ orders all variables in $\calI_{ord}$ in the same way as they are ordered by solution $\varphi$ in the phylogenetic tree for $\calI_{phy}$ i.e., $\pi=\order(\varphi)$.
The value of each payoff function $o$ on $\pi$ is greater than or equal to the value 
of $\varphi$ on the copy of $\calI_{gap}^f$ created for $o$. This is the case, because $\varphi$ is a possible solution for that copy of $\calI_{gap}^f$ (since the variables in $\varphi$ are ordered according to $\pi$). We have the following claim.

\begin{claim}\label{lem:reduction-decrease-value}
Consider instances $\calI_{ord}$ to $\calI_{phy}$ of \emph{ordering} and phylogenetic CSPs as above. Then, 
$$\opt(\calI_{ord}) \geq \opt(\calI_{phy}).$$
\end{claim}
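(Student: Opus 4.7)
The plan is to prove the claim pointwise: for every phylogenetic solution $\varphi$ to $\calI_{phy}$, the induced ordering $\pi = \order(\varphi)$ is a solution to $\calI_{ord}$ whose value is at least that of $\varphi$. Maximizing over $\varphi$ then yields $\opt(\calI_{ord}) \geq \opt(\calI_{phy})$ (interpreting $\val(\calI)$ as $\opt(\calI)$ in the claim statement). So I aim to show
$$\val(\pi, \calI_{ord}) \;\geq\; \val(\varphi, \calI_{phy}).$$

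To do this, I would use that the construction $h_{ord\to phy}$ produces a canonical bijection between the payoff functions $o$ in $\calI_{ord}$ (on variable tuples $x_{i_1},\dots,x_{i_m}$) and the blocks of copies of $\calI^f_{gap}$ in $\calI_{phy}$, and this bijection preserves weights in the probability distribution of constraints. Consequently, both $\val(\pi, \calI_{ord})$ and $\val(\varphi, \calI_{phy})$ decompose as weighted averages indexed by the same set of payoff functions $o$, and it suffices to compare the contributions of a single $o$ with that of its corresponding gap-instance block.

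Fix a payoff function $o$ on variables $(x_{i_1},\dots,x_{i_m})$, and let $\pi_o$ be the ordering that $\pi$ induces on these $m$ variables. Let $\varphi_o$ be the restriction of $\varphi$ to $\{x_{i_1},\dots,x_{i_m}\}$, still viewed as an assignment of these variables to leaves of the same ordered binary tree produced by $\varphi$. Two observations are used: (i) by definition of $\order$, the restriction $\varphi_o$ orders its variables according to $\pi_o$, hence $\varphi_o \in \Phi_{\pi_o}(\calI^f_{gap})$; and (ii) each payoff function $f(x_{j_1},\dots,x_{j_k})$ appearing in the copy of $\calI^f_{gap}$ depends only on the pattern that $\varphi$ produces on $x_{j_1},\dots,x_{j_k}$, which involves only the variables of that block. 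Therefore the contribution of this block to $\val(\varphi, \calI_{phy})$ equals $\val(\varphi_o, \calI^f_{gap})$. By (i) and the definition of $\opt(\calI^f_{gap}\mid\pi_o)$ in~(\ref{eq:opt-I-given-pi}),
$$\val(\varphi_o, \calI^f_{gap}) \;\leq\; \opt(\calI^f_{gap}\mid \pi_o) \;=\; o_{x_{i_1},\dots,x_{i_m}}(\pi),$$
which is exactly the contribution of $o$ to $\val(\pi, \calI_{ord})$.

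Averaging the per-constraint inequality over the (matched) distributions of constraints in $\calI_{ord}$ and $\calI_{phy}$ gives the desired $\val(\pi, \calI_{ord}) \geq \val(\varphi, \calI_{phy})$, and taking $\varphi$ optimal for $\calI_{phy}$ completes the argument. There is no substantive obstacle here: the whole content is in unpacking the definitions of $h_{ord\to phy}$, $\order$, $\Phi_{\pi}$, and $o$. The only point requiring some care is bookkeeping, namely verifying that the weight a constraint receives in $\calI_{ord}$ matches the total weight of the block that replaces it in $\calI_{phy}$ (which follows directly from the construction, since each gap instance is itself a probability distribution of unit total mass).
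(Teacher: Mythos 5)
Your proof is correct and follows essentially the same route as the paper: take any phylogenetic solution $\varphi$, pass to the induced ordering $\pi=\order(\varphi)$, and observe that $\varphi$ (restricted to the variables of each copy of $\calI^f_{gap}$) is a feasible solution in $\Phi_{\pi}$, so each payoff $o(\pi)=\opt(\calI^f_{gap}\mid\pi)$ dominates the value $\varphi$ attains on the corresponding gap-instance block; averaging over the weight-matched constraints and maximizing over $\varphi$ gives the claim. The paper's argument is exactly this, stated in the paragraph preceding the claim, so no further comparison is needed.
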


Unfortunately, we cannot claim that $\opt(\calI_{ord}) = \opt(\calI_{phy})$. It is possible that $\opt(\calI_{ord}) \gg \opt(\calI_{phy})$. This may happen if there exists an ordering of variables $\pi$ such that for every constraint $\mathbf{u} = (u_1,\dots,u_m)$
in $\calI_{ord}$, there exists a good \emph{local} solution
$\varphi_{\mathbf{u}}\in \Phi_{\pi}(\calI)$:
$$
o(
\varphi_{\mathbf{u}}(u_1)
,\dots,
\varphi_{\mathbf{u}}(u_m))
\approx 1.
$$
Note that $\varphi_{\mathbf{u}}$ may depend on the constraint $\mathbf{u}$.
However, there is no good \emph{global} solution $\varphi\in \Phi_{\pi}(\calI)$ that works for all constraints $(u_1,\dots,u_m)$ (on average) in $\calI_{phy}$. That is,
for every $\varphi\in \Phi_{\pi}(\calI)$,
$$
\val(\varphi,\calI_{phy}) = \Avg\limits_{(u_1,\dots,u_m)\in \calI_{phy}}
o(\varphi(u_1),\dots,\varphi(u_m))\ll 1.
$$

\medskip

\noindent\textbf{Hardness of approximation.}
We now discuss how to use  reduction $h_{ord\to phy}$ to show that $\Gamma_{phy}$ is approximation resistant.
The ordering CSP $\Gamma_{ord}$ is approximation resistant as every ordering CSP (\cite*{GHMRC11}). By Theorem~\ref{thm:exists-gap-instance}, the expected value of payoff function $o$ on a random permutation $\pi$ is at most $\alpha+\varepsilon$. Hence, assuming the Unique Games conjecture, it is NP-hard to distinguish between
\begin{enumerate}[A.]
\item instances of $\Gamma_{ord}$ that are at most $(\alpha + \varepsilon) + \varepsilon$ satisfiable; and
\item instances of $\Gamma_{ord}$ that are at least $(1-\varepsilon)$ satisfiable.
\end{enumerate}

To finish the proof, we would like to show that $h_{ord\to phy}$ is a gap preserving reduction. Namely, $h_{ord\to phy}$ maps (a) every instance $\calI_{ord}$ of value at most $\alpha +2\varepsilon$ to an instance $\calI_{phy}$ of value at most $\alpha +2\varepsilon$;
and (b) every instance $\calI_{ord}$ of value at least $1-\varepsilon$ to an instance of $\Gamma_{phy}$
of value $1-O(\varepsilon)$. If $h_{ord\to phy}$ satisfied these properties, we would conclude that, assuming UGC, it is NP-hard to  distinguish between
(A) instances of $\Gamma_{phy}$ that are at most $\alpha + 2\varepsilon$ satisfiable; and (B) instances of $\Gamma_{phy}$ that are at least $1-O(\varepsilon)$ satisfiable.

Property (a) immediately follows from Claim~\ref{lem:reduction-decrease-value} because reduction $h_{ord\to phy}$ does not increase the value of the instance. Unfortunately, property (b) is not satisfied for many payoff functions $f$. Nevertheless, in the next section, we show that property (b) holds for one particular function $f^*$ and, consequently, the phylogenetic CSP with that payoff function $f^*$ is approximation resistant. We will use this result to prove that Triplets Consistency is also approximation resistant.

In Section~\ref{sec:reduction-ug}, we will deal with arbitrary phylogenetic CSPs. Specifically, we will modify the hardness reduction by
\citet{GHMRC11} and obtain a reduction $h_{UG\to ord}$ from Unique Games to $\Gamma_{ord}$ such that the composition of reductions 
$$h_{UG\to phy} =
h_{ord\to phy}\circ h_{UG\to ord}$$
is
gap preserving.

\subsection{Hardness for Triplets Consistency}
In this section, we define a special payoff function $f^*$ of arity $3$ for which the hardness reduction $h_{ord\to phy}$ (described in the previous section) maps almost satisfiable instances of $\Gamma_{ord}$ to  
almost satisfiable instances of $\Gamma_{phy}$. Fix a small $\delta\in(0,1)$. Let $\triplet(u,v,w)$ be the Triplet Consistency payoff function: $\triplet(u,v,w) = 1$, if  $\lca(u,w) = \lca(v,w)$ (in other words, $w$ is separated from $u$ and $v$ before $u$ and $v$ are separated); $\triplet(u,v,w) = 0$, otherwise. Now let 
$f^*(u,v,w) = \triplet(u,v,w)$ if the ordering of variables $u$, $v$, and $w$ in the phylogenetic tree is $u$, $v$, and $w$; $f^*(u,v,w) = (1-\delta)\triplet(u,v,w)$, otherwise. Observe that $f^*(u,v,w)$ is a satisfiable payoff function
i.e., its maximum value is $1$. 
Let $\Gamma_{phy}$ be the phylogenetic CSP with payoff function $f^*$ and 
$\Gamma_{ord}$ be the corresponding ordering CSP.

\begin{lemma}\label{lem:f-star-payoff}
Reduction $h_{ord\to phy}$ maps every $(1-\varepsilon)$-satisfiable instance of $\Gamma_{ord}$ to a $(1-\varepsilon/\delta)$-satisfiable instance of $\Gamma_{phy}$.
\end{lemma}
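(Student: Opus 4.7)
The plan is to exhibit a single explicit global tree $\varphi$ consistent with a near-optimal ordering of $\calI_{ord}$, and then run a local-to-global averaging argument that exploits the multiplicative $\delta$-gap baked into $f^*$ between its ``right'' ordering $(u,v,w)$ and every other ordering. Since $\calI_{ord}$ is $(1-\varepsilon)$-satisfiable, fix a permutation $\pi^*$ of all variables with $\val(\pi^*,\calI_{ord})\geq 1-\varepsilon$. Recalling that $o_{x_{i_1},\ldots,x_{i_m}}(\pi^*)=\opt(\calI^{f^*}_{gap}\mid\pi^*)$ by construction of $\Gamma_{ord}$, this exactly says $\E_o[\opt(\calI^{f^*}_{gap}\mid\pi^*)]\geq 1-\varepsilon$, where the expectation is over the constraints $o$ of $\calI_{ord}$.

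For the global solution, write $y_1,y_2,\dots,y_n$ for the variables listed in the order $\pi^*$ and let $\varphi$ be the left-skewed caterpillar
\[
\varphi\;=\;\bigl(\bigl(\cdots\bigl(\bigl(y_1,y_2\bigr),y_3\bigr),\cdots\bigr),y_n\bigr).
\]
Two properties of $\varphi$ drive the rest of the argument: (i) $\order(\varphi)=\pi^*$; and (ii) for any three leaves $y_i,y_j,y_k$ with $i<j<k$, the triplet they induce in $\varphi$ is $y_iy_j|y_k$, because $y_i$ and $y_j$ share a left subtree at the point $y_k$ branches off. Equivalently, whenever three leaves appear in $\pi^*$-order $(a,b,c)$, they realize the triplet topology $ab|c$ in $\varphi$.

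Now fix a constraint $o$ of $\calI_{ord}$ together with its copy of $\calI^{f^*}_{gap}$ on the same variables, and let $p_1(o)$ denote the fraction of triplet constraints $f^*(u,v,w)$ in that copy for which the $\pi^*$-order on $\{u,v,w\}$ is exactly $(u,v,w)$. Property (ii) implies that every such triplet contributes $f^*(u,v,w)=1$ under $\varphi$, so $\val(\varphi,\calI^{f^*}_{gap,o})\geq p_1(o)$. In the opposite direction, the definition of $f^*$ caps the payoff of every solution respecting $\pi^*$ at $1$ on right-order triplets and at $1-\delta$ on all others, hence averaging over triplets in the copy yields
\[
\opt(\calI^{f^*}_{gap}\mid\pi^*)\;\leq\;p_1(o)+(1-p_1(o))(1-\delta)\;=\;(1-\delta)+\delta\,p_1(o),
\]
which rearranges to $p_1(o)\geq 1-(1-\opt(\calI^{f^*}_{gap}\mid\pi^*))/\delta$. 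Chaining these two local inequalities and averaging over $o$,
\[
\val(\varphi,\calI_{phy})\;\geq\;\E_o[p_1(o)]\;\geq\;1-\frac{1-\E_o[\opt(\calI^{f^*}_{gap}\mid\pi^*)]}{\delta}\;=\;1-\frac{1-\val(\pi^*,\calI_{ord})}{\delta}\;\geq\;1-\frac{\varepsilon}{\delta},
\]
which exhibits $\calI_{phy}$ as $(1-\varepsilon/\delta)$-satisfiable.

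The averaging step is routine; the only subtlety is choosing the global tree. A naive choice---for instance a balanced binary tree---does not canonically convert the $\pi^*$-order of three leaves into a specific triplet topology, and the individually optimal local solutions $\varphi^*_o$ are typically incompatible across distinct $o$'s, which is precisely the obstruction flagged just before Claim~\ref{lem:reduction-decrease-value}. The left-skewed caterpillar is the minimal gadget that deterministically turns the $\pi^*$-order $(a,b,c)$ into the topology $ab|c$, exactly the unique full-payoff event of $f^*$; the multiplicative $\delta$-gap in $f^*$ is then what forces $p_1(o)$ to be close to $1$ whenever $\opt(\calI^{f^*}_{gap}\mid\pi^*)$ is close to $1$.
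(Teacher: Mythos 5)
Your proof is correct and follows essentially the same route as the paper: the same left caterpillar tree ordered by a near-optimal permutation $\pi^*$, the same observation that $f^*$ pays $1$ exactly on triplets in $\pi^*$-order under this tree, and the same use of the $\delta$-gap to bound the fraction of out-of-order triplets via $\opt(\calI_{ord})\leq 1-\delta x$. Your per-copy quantity $p_1(o)$ averaged over $o$ is just the complement of the paper's global fraction $x$, so the arguments coincide.
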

\begin{proof}
Consider a $(1-\varepsilon)$-satisfiable instance $\calI_{ord}$ 
of ordering CSP $\Gamma_{ord}$ and the corresponding instance 
$\calI_{phy} = h_{ord\to phy}(\calI_{ord})$ of phylogenetic CSP $\Gamma_{phy}$. Let $\pi$ be the optimal solution to $\calI_{ord}$. Consider the ``left'' caterpillar binary tree $T$ with $n$ leaves.
Tree $T$ is a binary tree in which the right child of every internal node is a leaf. We construct $T$ by taking a path of length $n$ and attaching a right child to every node but last (see Figure~\ref{fig:caterpillar-binary-tree} in Appendix). We now define a solution for instance $\calI_{phy}$ that maps every variable of $\calI_{phy}$ to a leaf of $T$.  We number all leaves in the tree from left to right. Then, we map every variable $u$ to the leaf number $\pi(u)$. Thus, the ordering of variables in solution $\varphi$ is $\pi$.

We prove that $\val(\varphi, \calI_{phy})\geq 1-\varepsilon/\delta$. Observe that $f^*(\varphi(u),\varphi(v),\varphi(w)) = 1$ for a triplet $(u,v,w)$ if and only if $\pi(u) < \pi(v) < \pi(w)$ (because $\varphi$ maps all vertices to the leaves of the left caterpillar tree). So, it is sufficient to show that $\pi(u) < \pi(v) < \pi(w)$ for all but at most $1-\varepsilon/\delta$ fraction of all constraints in $\calI_{phy}$. In other words, we need to show
$$
\Avg_{(u,v,w)\in\calI_{phy}} 
\ONE\big(\pi(u)< \pi(v) < \pi(w)\big) \geq 1 - \varepsilon/\delta,
$$
where $\ONE(\pi(u)< \pi(v) < \pi(w))$ is the indicator of the event 
$\pi(u)< \pi(v) < \pi(w)$. Recall that for every ordering constraint $\mathbf{x}=(x_1,\dots,x_m)$ in $\calI_{ord}$, we created a copy $\calI_{\mathbf{x}}$ of the gap instance 
$\calI_{gap}^{f^*}$. Instance
$\calI_{phy}$ is the union of instances 
 $\calI_{\mathbf{x}}$ over all constraints 
$\mathbf{x}$ in $\calI_{ord}$. Thus,
\begin{equation}
\label{eq:opt-ord-using-triplets}
\Avg_{(u,v,w)\in\calI_{phy}} 
\ONE\big(\pi(u)< \pi(v) < \pi(w)\big) 
=
\Avg_{\mathbf{x}\in\calI_{ord}} 
\Avg_{(u,v,w)\in\calI_{\mathbf{x}}} 
\ONE(\pi(u)< \pi(v) < \pi(w)). 
\end{equation}
Consider an ordering constraint $\mathbf{x}=(x_1,\dots,x_m)$ in $\calI_{ord}$. The value of the ordering payoff function $o$ on $\mathbf{x}$ equals (by the definition of $o$):
$$o(\pi(x_1),\dots,\pi(x_m)) = \opt(\calI^f_{gap}\mid\pi) = 
\max_{\varphi_{\mathbf{x}}\in \Phi_{\pi}(\calI_{phy})}
\Avg_{(u,v,w)\in\calI_{\mathbf{x}}} 
f^*(
\varphi_{\mathbf{x}}(u),
\varphi_{\mathbf{x}}(v),
\varphi_{\mathbf{x}}(w)).
$$
Observe that 
$$
f^*(
\varphi_{\mathbf{x}}(u),
\varphi_{\mathbf{x}}(v),
\varphi_{\mathbf{x}}(w)) \leq 
(1-\delta) + \delta\cdot \ONE\big(\pi(u)< \pi(v) < \pi(w)\big)
.$$
This is because every $\varphi_{\mathbf{x}}$ in $\Phi_{\mathbf{x}}(\calI_{phy})$
must order $u$, $v$, $w$ according to permutation $\pi$. So, if 
$\ONE(\pi(u)< \pi(v) < \pi(w)) = 0$, then 
$f^*(
\varphi_{\mathbf{x}}(u),
\varphi_{\mathbf{x}}(v),
\varphi_{\mathbf{x}}(w)) \leq 1 - \delta$. Therefore,
$$
o(\pi(x_1),\dots,\pi(x_m))\leq 
(1-\delta) + \delta \cdot
\Avg_{(u,v,w)\in\calI_{\mathbf{x}}}\ONE(\pi(u)< \pi(v) < \pi(w)).$$
Since $\pi$ satisfies at least $(1-\varepsilon)$ fraction of all constraints in $\calI_{ord}$, we have
$$
(1-\delta) + \delta\cdot
\Avg_{\mathbf{x}\in\calI_{ord}} 
\Avg_{(u,v,w)\in\calI_{\mathbf{x}}} 
\ONE\big(\pi(u)< \pi(v) < \pi(w)\big)\geq 1 - \varepsilon.$$
This inequality implies~(\ref{eq:opt-ord-using-triplets}). This concludes the proof of Lemma~\ref{lem:f-star-payoff}.
\end{proof}

By Lemma~\ref{lem:reduction-decrease-value} and Lemma~\ref{lem:f-star-payoff}, reduction $h_{ord\to phy}$ maps 
(a) instances of $\Gamma_{ord}$ with value at most $\alpha' + 2\varepsilon$ to instances of $\Gamma_{phy}$ also with value at most $\alpha' + 2\varepsilon$; and
(b) almost satisfiable instances of $\Gamma_{ord}$ to almost satisfiable instances of $\Gamma_{phy}$, where $\alpha'$ is the value of the best biased random assignment for $f^*$. Therefore, phylogenetic CSP $\Gamma_{phy}$ with payoff function $f^*$ is approximation resistant.

We now show that the Triplets Consistency problem is also approximation resistant. First, observe that 
$\triplet(u,v,w) \leq f^*(u,v,w)$ for all variables $u,v,w$. Hence, $\alpha' \leq \alpha = 1/3$, where $\alpha$ is the value of the best biased random assignment for the Triplets Consistency problem. Then, note that a $(1-\varepsilon)$-satisfiable instance of the problem with payoff function $f^*$ is also a
$(1-\varepsilon)$-satisfiable instance of the problem with payoff function $\triplet$ (simply because 
$\triplet(u,v,w)\geq f^*(u,v,w)$). Finally, every instance with value at most $\alpha + \varepsilon$ with payoff function $f^*$ has value 
at most $(\alpha + \varepsilon)/(1-\delta)$ with payoff function $\triplet$. This implies that the Triplets Consistency problem is approximation resistant. 
\vspace{-0.4cm}
\paragraph{Hardness for Quartets Consistency.} Using our hardness results for triplets, a simple reduction then proves that {\sc{MaxQuartets}} is also approximation resistant (see Claim~\ref{cl:triplets-to-quartets} in Appendix~\ref{sec:triplets-to-quartets}):
\begin{corollary}
Unrooted Quartet Reconstruction ({\sc{MaxQuartets}}) is approximation resistant, so it is UGC-hard to beat the (trivial) random assignment algorithm that achieves a $\tfrac13$-approximation.
\end{corollary}
\section{Filling the Gaps}\label{sec:fill-gap}
In this section, we build machinery to prove Theorem~\ref{thm:exists-gap-instance}. First, we show that every gap instance $\calI_{gap}^f$ of a phylogenetic CSP with payoff function $f$ is completely satisfiable. Then, we introduce coarse solutions for phylogenetic CSPs and prove important results about such solutions. In the end of this section, we put all parts together and prove
Theorem~\ref{thm:exists-gap-instance}.

\subsection{Gap Instance is Completely Satisfiable}\label{sec:gap-sat}

Consider a phylogenetic CSP with a satisfiable payoff function $f$ of arity $k$. Let $P$ be the pattern of $f$ with a payoff of $1$. Pattern $P$ is a tree with $k$ leaves $l_1,\dots,l_k$ such that $f(l_1,\dots, l_k) = 1$. By permuting the arguments of the payoff function $f$, we may assume that the leaves $l_1,\dots, l_k$ are ordered from left to right in $P$.
We now show that $\calI_{gap}^f$ is a satisfiable instance of $\Gamma_{phy}$. We will need the following definition. 

\begin{definition}\label{def:cousins}
Consider $k$ leaves $l_1,\dots, l_k$ of a full tree $T$ of arity $k$. Let $u$ be their least common ancestor and $u_1,\dots, u_k$ be the child nodes of $u$ ordered from left to right. We say that $l_1,\dots, l_k$ are \emph{cousins} if each $l_i$ is a leaf in the subtree $T_{u_i}$ rooted at $u_i$. We also define a predicate $\cousins$:
$\cousins(l_1,\dots,l_k) = 1$, if 
$l_1,\dots,l_k$ are cousins; and 
$\cousins(l_1,\dots,l_k) = 0$, otherwise.
\end{definition}

\begin{lemma}\label{lem:cousins-sat}
Let $f$ be a satisfiable phylogenetic payoff function and $P$ be a pattern as above. Consider an instance $\calI=(V,C)$ of phylogenetic CSP with a payoff function $f$ and a mapping $\psi$ of variables $V$ of $\calI$ to a $k$-ary tree $T$. Then, there exists a binary tree $T'$ with the same set of leaves as $T$ such that the following statement holds: If $x_1,\dots, x_k$ are mapped to cousins in $T$ by $\psi$, then $f(\psi(x_1),\dots,\psi(x_k)) = 1$ in $T'$.
\end{lemma}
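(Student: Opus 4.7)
The plan is to construct $T'$ by a single uniform local substitution carried out at every internal node of $T$ simultaneously. Since $f$ is satisfiable and its maximum-payoff pattern $P$ is a binary tree on $k$ leaves $l_1,\dots,l_k$ ordered left to right (as granted by permuting the arguments of $f$), I would replace each internal node $u$ of the full $k$-ary tree $T$, whose children $u_1,\dots,u_k$ are listed left to right, by a fresh copy of $P$, attaching the recursively substituted subtree of $T$ rooted at $u_i$ in place of the $i$-th leaf $l_i$ of that copy. Because $P$ is binary and the substitution does not touch leaves of $T$, the resulting $T'$ is a binary ordered tree with the same leaf set as $T$, and the left-to-right order of leaves is preserved since $l_i$ sits in the $i$-th position of $P$ and $u_i$ sits in the $i$-th position among the children of $u$.

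Next, I would verify the cousin property. Let $x_1,\dots,x_k$ be mapped by $\psi$ to cousins in $T$, let $u$ be their LCA in $T$, and let $u_i$ be the child of $u$ whose subtree contains $x_i$. I claim that the pattern induced by $x_1,\dots,x_k$ in $T'$ is exactly $P$ with $l_i$ relabelled to $x_i$. Indeed, each $x_i$ lies strictly inside the subtree of $T'$ attached to leaf $l_i$ of the $P$-copy installed at $u$; the root $u'$ of that $P$-copy is the LCA of $x_1,\dots,x_k$ in $T'$, because no proper descendant of $u'$ has all of $l_1,\dots,l_k$ below it. Deleting all non-$x_i$ leaves from $T'$ and suppressing degree-one internal nodes therefore (i) collapses the entire portion of $T'$ strictly above $u'$, which contains no $x_i$, and (ii) collapses each subtree hanging from $l_i$ down to the single leaf $x_i$. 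What remains is exactly $P$ with $l_i$ renamed $x_i$, and the left-to-right order is preserved by our substitution, so this match is orientation-consistent. By the choice of $P$, the payoff is $1$, so $f(x_1,\dots,x_k)=1$.

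The construction depends only on $T$ and $P$, so the same $T'$ witnesses the conclusion for every $k$-tuple of cousins simultaneously, as required. The one step demanding care is the LCA / pattern-extraction claim, and this reduces to the simple observation that by the definition of cousins each of the $k$ ``slots'' $l_1,\dots,l_k$ inside the $P$-copy at $u$ receives exactly one of the $x_i$'s; everything else is routine. I do not foresee any deeper obstacle in turning this sketch into a formal proof.
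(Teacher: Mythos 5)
Your proposal is correct and follows essentially the same construction as the paper: replace every internal node of the $k$-ary tree $T$ (together with its $k$ children) by a copy of the satisfying pattern $P$, identifying the $i$-th leaf of the copy with the $i$-th child, and then observe that any $k$ cousins have their LCA at the root of one such copy, so they match $P$. Your extra verification of the LCA and the deletion/suppression step just spells out in more detail what the paper states briefly, so there is no substantive difference.
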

Observe that in the gap instance 
$\calI_{gap}^f$ all payoff functions are defined on leaves that are cousins. Hence, the conditions of Lemma~\ref{lem:cousins-sat} are satisfied for the identity map $\psi$. Therefore, we have the following  immediate corollary.

\begin{corollary}\label{cor:gap-sat}
Every gap instance $\calI_{gap}^f$ 
with phylogenetic payoff function $f$ as above is completely satisfiable.
\end{corollary}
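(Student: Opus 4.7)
The plan is to deduce Corollary \ref{cor:gap-sat} directly from Lemma \ref{lem:cousins-sat} applied to the identity embedding, so the proof is essentially an unpacking of definitions. First I would recall the construction of $\calI_{gap}^f$: the variable set $V$ is the leaf set of the ordered perfect $k$-ary tree $T$ of depth $d$, and each constraint is a payoff function $f(l_1,\dots,l_k)$ whose support is the image of the random map $L_{k,m}$. By the definition of $L_{k,m}$, these $k$ variables are obtained by choosing an internal node $u$ at some level $i$ of $T$ and then independently selecting, for each $j \in \{1,\dots,k\}$, one leaf $l_j$ from the subtree $T_{u_j}$ rooted at the $j$-th child $u_j$ of $u$.

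The key observation is that every such $k$-tuple is a cousin tuple in the sense of Definition \ref{def:cousins}. Indeed, $u$ is an ancestor of each $l_j$; the $l_j$ lie in pairwise disjoint subtrees $T_{u_j}$, so $u$ is actually the least common ancestor of $l_1,\dots,l_k$; and by construction $l_j$ is a leaf of $T_{u_j}$. Consequently $\cousins(l_1,\dots,l_k) = 1$ for every constraint in the support of $\calI_{gap}^f$, regardless of the level $i$ or choice of $u$ used to generate it.

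Now take $\psi$ to be the identity map from $V$ to the leaves of $T$. Since $f$ is a satisfiable payoff function and $T$ is the $k$-ary tree above, Lemma \ref{lem:cousins-sat} produces a binary tree $T'$ on the same leaf set as $T$ with the property that $f(x_1,\dots,x_k) = 1$ whenever $\psi$ sends $x_1,\dots,x_k$ to cousins in $T$. Let $\varphi$ be the solution to $\calI_{gap}^f$ whose underlying binary tree is $T'$ (with the identity assignment of variables to leaves). By the cousin observation of the previous paragraph, every constraint appearing with positive weight in $\calI_{gap}^f$ is evaluated at a cousin tuple under $\psi$, and hence $f$ outputs $1$ on it. Averaging over the distribution of constraints gives $\val(\varphi, \calI_{gap}^f) = 1$, and since payoffs lie in $[0,1]$ this forces $\opt(\calI_{gap}^f) = 1$.

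I do not anticipate a genuine obstacle: Lemma \ref{lem:cousins-sat} already carries out the only nontrivial step, namely producing a binary refinement of a $k$-ary tree that realizes the witnessing pattern $P$ of $f$ on every cousin $k$-tuple. The corollary just requires verifying that the distribution $L_{k,m}$ places all its mass on cousin tuples, which is immediate from the definition of $L_{k,m}$. Consequently the proof is a two-line invocation of the lemma together with this combinatorial observation.
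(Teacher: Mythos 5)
Your proof is correct and follows exactly the paper's argument: the gap instance places all its constraint mass on cousin tuples by construction of $L_{k,m}$, so Lemma~\ref{lem:cousins-sat} applied to the identity map $\psi$ immediately yields a binary tree $T'$ on which every constraint is satisfied. The only difference is that you spell out the verification that each generated $k$-tuple is a cousin tuple, which the paper states as an observation.
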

\begin{proof}[Proof of Lemma~\ref{lem:cousins-sat}]
Let $r$ be the root of pattern $P$ and $l_1,\dots, l_k$ be its leaves. We build a binary tree $T'$ by replacing every node and its children in $T$ with the pattern $P$. See Figure~\ref{fig:satisfiable-CSP}. Formally, we define $T'$ as follows. For every internal vertex $u$ of $T$, we create a copy of pattern $P$. Denote it by $P^u$. We identify every vertex $u$ of $T$ with the root of $P^u$. Also,
we identify the $i$-th leaf of $P^u$ with the $i$-th child of $u$. Now consider a payoff function $f$ on variables $x_1,\dots, x_k$. Suppose that $\psi(x_1),\dots, \psi(x_k)$ are cousins in tree $T$. Let $u$ be their least common ancestor and $u_1,\dots,u_k$ be $u$'s child nodes. Then,
each $\psi(x_i)$ lies in the subtree rooted in $u_i$. Let us now examine where $\psi(x_1),\dots, \psi(x_k)$ are located in the new tree $T'$. Each $\psi(x_i)$ also lies in the subtree rooted at $u_i$. However, in $T'$, 
$u_1,\dots u_k$ are not child nodes of $u$ but rather leaves of a copy of the pattern $P$. 
Thus, $\psi(x_1),\dots, \psi(x_k)$ match pattern $P$ in $T'$. Consequently, 
$f(\psi(x_1),\dots,\psi(x_k)) = 1$ for solution $\psi$ on phylogenetic tree $T'$.
\end{proof}

\subsection{Coarse Solutions, Labelling, and Coloring}\label{sec:coarse-solution}
In this section, we define coarse solutions for phylogenetic CSPs and discuss how to measure the value of such solutions. A coarse solution embeds the set of variables $V$ into leaves of a binary tree $T$. Unlike a true solution for a phylogenetic CSP, in a coarse solution, many variables can and, in most cases, will be mapped to the same leaf. A coarse solution also assigns a color to every leaf of $T$. We denote the leaf assigned to variable $x$ by $\xi(x)$ and color assigned to the leaf by $\clr(\xi(x))$.
We say that a coarse solution $\xi$ is in class $\Xi_{\varepsilon, q,\pi}(\calI)$ (where $\varepsilon\in \bbR^+$,
$q\in \bbN$, $\pi$ is an ordering of $V$)
if it satisfies the following properties:%
\footnote{These conditions are slightly more complex for phylogenetic CSPs on non-binary trees.
}
\begin{enumerate}
\item (coarse) tree $T$ has at most $q$ leaves;
\item at most $\varepsilon |V|$ distinct variables have the same color; and
\item moreover, variables mapped to a leaf $l$ are consecutive variables in ordering $\pi$.
\end{enumerate}

Note that this definition differs a lot from the definition of a coarse solution for ordering CSPs. In particular, coarse solutions for ordering CSPs do not assign colors to variables.

We now define two value functions for a coarse solution $\xi$. Consider an instance  $\calI=(V,C)$ of phylogenetic CSP with payoff function $f$ and an arbitrary constraint $(x_1,\dots,x_k)\in C$. If all variables $x_1,\dots, x_k$ have distinct colors in the coarse solution i.e., $\clr(\xi(x_i))\neq \clr(\xi(x_j))$ for all $i,j$, then we let 
$$
f^-(\xi(x_1),\dots,\xi(x_k)) = 
f^+(\xi(x_1),\dots,\xi(x_k)) = 
f(\xi(x_1),\cdots,\xi(x_k)),
$$
here $f(\xi(x_1),\cdots,\xi(x_k))$ is well defined because all leaves $\xi(x_1),\dots,\xi(x_k)$
are distinct. If, however, two variables have the same color (i.e., $\clr(\xi(x_i))= \clr(\xi(x_j))$ for some $i,j$), then we let
$f^-(\xi(x_1),\dots,\xi(x_k))
= 0$ and
$f^+(\xi(x_1),\dots,\xi(x_k)) = 1$. We then define
\begin{align}   
\label{eq:def:val-pm-1}
\val^-(\xi,\calI) &= 
\Avg_{(x_1,\dots,x_k)\in\calI}f^-(\xi(x_1),\dots,\xi(x_k))
\;\;\;\text{and}\;\;\;\\
\val^+(\xi,\calI) &= 
\label{eq:def:val-pm-2}
\Avg_{(x_1,\dots,x_k)\in\calI}f^+(\xi(x_1),\dots,\xi(x_k)).
\end{align}
In both expressions above, we are averaging over all 
constraints $(x_1,\dots,x_k)$ in instance $\calI$.

We will use coarse instances and value functions $\val^+$, $\val^-$ to prove Theorem~\ref{thm:exists-gap-instance}. Our plan is as follows. We first show that for every (true) solution $\varphi\in \Phi_{\pi}$ for instance $\calI$, there exists a coarse solution 
$\xi\in \Xi_{\varepsilon, q, \pi}$ with 
$\val^+(\xi, \calI) \geq \val(\varphi,\calI)$ (see Lemma~\ref{lem:good-cs}). We then argue that for a random ordering $\pi$ and $\xi\in \Xi_{\varepsilon, q, \pi}$, we have $\val^+(\xi, \calI) - \val^-(\xi, \calI) \leq \varepsilon$ with high probability (see Lemma~\ref{lem:diff-val-pm} for the precise statement). Loosely speaking, this is the case because for a random ordering $\pi$, the expected fraction of constraints $(x_1,\dots,x_k)$ with at least two variables having the same color is very small; but $\val^-(\xi,f)$ and $\val^+(\xi,f)$ differ only on such constraints. Finally, we use Lemma~\ref{lem:no-good-cs-for-gap} to show that $\val^-(\xi, \calI_{gap}^f)\leq \alpha+\varepsilon$. The above chain of inequalities implies that 
$$
\opt(\calI_{gap}^f \mid \pi) =
\max_{\varphi\in \Phi_{\pi}} \val(\varphi, \calI_{gap}^f) \leq 
\max_{\xi\in\Xi_{\varepsilon,q,\pi}}
\val^+(\xi, \calI_{gap}^f)
\leq
\max_{\xi\in\Xi_{\varepsilon,q,\pi}}
\val^-(\xi, \calI_{gap}^f)
+\varepsilon \leq \alpha + 2\varepsilon
$$
with high probability if  $\pi$ is a random ordering of variables in 
$\calI_{gap}^f$.

\subsection{How to Transform True Solution to Better Coarse Solution?}\label{sec:good-cs}
We now show that for every true solution $\varphi$ for $\calI$, there exists a  coarse solution with  $\val^+(\xi,\calI) \geq \val(\varphi,\calI)$. In this coarse solution $\xi$ the variables are ordered in the same way as in $\varphi$.

\begin{lemma}\label{lem:good-cs}
Consider an instance $\calI=(V,C)$ of a phylogenetic CSP $\Gamma_{phy}$. Let $\varepsilon > 1/|V|$. For every permutation $\pi$ and every solution $\varphi\in\Phi_{\pi}(\calI)$ for $\calI$, there exists a coarse solution $\xi \in \Xi_{\varepsilon,q,\pi}(\calI)$ with $q\leq 16/\varepsilon$ such that
\begin{equation}\label{eq:val-plus-greater-val}
\val^+(\xi, \calI) \geq 
\val(\varphi, \calI).
\end{equation}
\end{lemma}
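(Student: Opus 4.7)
The plan is to construct $\xi$ by iteratively coarsening $T_\varphi$ (the tree underlying $\varphi$) through a sequence of operations that either \emph{contract} a subtree of the current $T$ into a single leaf, carrying the union of its variables under one newly assigned color, or \emph{absorb} in-order-adjacent leaves into a combined leaf. The starting point is $T = T_\varphi$ with each variable at its own distinctly colored leaf, so $\val^+(\xi,\calI) = \val(\varphi,\calI)$ holds trivially: every pattern in $T$ equals the corresponding pattern in $T_\varphi$. Each subsequent coarsening operation is chosen to weakly preserve the inequality $\val^+(\xi,\calI)\geq \val(\varphi,\calI)$, so that at termination the coarsening still dominates $\val(\varphi,\calI)$.

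The justification of each operation proceeds by a per-constraint case analysis. Contracting a subtree $U$ of the current $T$ leaves LCAs unchanged for any $k$-tuple of variables that lies entirely outside $U$ or has at most one variable inside $U$; for a $k$-tuple with at least two variables inside $U$, those two variables now share a leaf and hence a color, so the constraint contributes $\val^+(\xi,f)=1 \geq \val(\varphi,f)$ through color collision. Absorbing two in-order-adjacent leaves (which form sibling leaves after suitable intermediate restructurings) behaves similarly: any constraint whose variables now collide contributes $\val^+(\xi,f)=1$, and all remaining constraints keep their previous $T$-pattern. Throughout, we maintain the two structural invariants demanded by $\Xi_{\varepsilon,q,\pi}$: each leaf holds $\leq \varepsilon|V|$ variables, and those variables are consecutive in $\pi = \order(\varphi)$ because every operation groups only in-order-contiguous blocks. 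Since the coloring is distinct per leaf at every step, each color class equals a single leaf and therefore has $\leq \varepsilon|V|$ variables.

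The main technical content is the termination bound $|L(T)|\leq 16/\varepsilon$, argued by amortized accounting: once no further contraction or absorption is available without exceeding the per-leaf size cap, every leaf of $T$ can be charged against a disjoint ``heavy'' in-order interval of at least $\varepsilon|V|/16$ variables of $T_\varphi$, giving a count of at most $16/\varepsilon$ leaves. The main obstacle is realising this bound for adversarial shapes of $T_\varphi$, such as zig-zag trees in which the topology at fine scales prevents pure subtree contractions from reducing the leaf count --- there, in-order-contiguous intervals of small size rarely align with subtrees of the current $T$, so the absorption operations are essential. The delicate part of the proof is verifying that, for every $T_\varphi$, the allowed operations are powerful enough to bring $|L(T)|$ down to $16/\varepsilon$ while keeping the invariants and the value inequality intact; the bookkeeping uses the heavy-light decomposition of $T_\varphi$ with threshold $\varepsilon|V|/4$ and handles each ``one-sided chain'' of the heavy skeleton by direction-uniform segments, within which absorptions produce contiguous-in-in-order chunks whose count is bounded by the total variables divided by $\varepsilon|V|/4$.
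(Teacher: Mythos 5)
Your reduction to two local operations has a genuine gap at exactly the point you flag as delicate: the ``absorption'' of in-order-adjacent leaves. When the two adjacent leaves $a,b$ are not siblings in the current tree, their least common ancestor $u$ can be arbitrarily high, and merging them (placing the merged leaf at, say, $b$'s position) changes the topology seen by constraints that contain only \emph{one} variable from the merged pair. Concretely, take $x$ at $a$, $y$ at a leaf hanging off the path from $a$ up to $u$ (so $y$ precedes $x$ in order and $\lca(x,y)$ lies strictly below $u$), and $z$ at a leaf hanging off the right side of the path from $u$ down to $b$. Before the merge the pattern on $(x,y,z)$ is $[x,y<z]$; after moving $x$ to $b$'s side it becomes $[y<x,z]$. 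Since $x,y,z$ occupy three distinct leaves, your ``one color per leaf'' convention gives them distinct colors, so no color collision absorbs the damage: $\val^+(\xi,f)$ can drop to $0$ while $\val(\varphi,f)=1$, and the invariant $\val^+\geq\val(\varphi,\cdot)$ fails. The parenthetical ``suitable intermediate restructurings'' is precisely where the argument would have to do real work, and such restructurings themselves relocate subtrees, again altering patterns of constraints with one variable in the moved part and no collision to charge. If instead you drop absorption and keep only subtree contractions (which are indeed pattern-preserving for tuples meeting the contracted subtree in at most one variable), you cannot reach $q\leq 16/\varepsilon$: on a caterpillar, once the bottom prefix holding $\varepsilon|V|$ variables is contracted into one saturated leaf, every remaining subtree with at least two leaves contains that leaf, so the process stalls with $\Theta(|V|)$ leaves. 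Thus neither operation set suffices as stated, and the heavy-light bookkeeping cannot repair a per-operation claim that is false.

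The missing idea---and the way the paper's proof resolves it---is to decouple colors from leaves: condition 2 of $\Xi_{\varepsilon,q,\pi}$ bounds color classes, not leaves, and a single color is allowed to span several leaves of the coarse tree. The paper processes nodes of $T$ bottom-up; a node $u$ is processed when it is the root, is the LCA of two processed nodes, or has more than $\varepsilon|V|/2$ unlabeled leaves below it. Processing $u$ flattens all not-yet-labeled leaves of $T_u$ into four label regions $LL_u,LR_u,RL_u,RR_u$ (to the left/right of the already-processed part in each of $u$'s subtrees), and all four receive one \emph{shared new color}. Any constraint with two variables in such a flattened region collides in color and is counted as $\val^+=1$; a constraint whose variables have pairwise distinct colors meets each processed region in at most one variable, and then every bracket predicate $[x_a,x_b<x_c]$ true in $\varphi$ is shown to survive in $\xi$. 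The bound $q\leq 16/\varepsilon$ falls out because at most $4/\varepsilon$ nodes ever get processed, each creating four labels. This shared-color device is exactly what lets one flatten regions that are not subtrees (your caterpillar/zig-zag obstruction) while charging all pattern damage to color collisions; your construction has no analogue of it.
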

\begin{proof}
Let $T$ be the tree used in solution $\varphi$, and $\Lambda$ be the set of its leaves. Solution $\varphi$ maps the set of variables $V$ to $\Lambda$. We now define a function $\lambda:\Lambda \to \Lambda$ that maps all leaves of $T$ to at most $q$ distinct leaves of $T$. This function also assigns a color to every leaf in the image. Then, we define the coarse solution $\xi = \lambda \circ \varphi$. That is, $\xi$ uses the true solution $\varphi$ to map 
a variable $x$ to a leaf $l$ and then uses function $\lambda$ to assign $x$ one of $q$ chosen leaves of $T$.

\medskip

\noindent{\textbf{Algorithm.}} We describe an algorithm for finding function $\lambda$. The algorithm first assigns a label to every leaf $u$ of $T$ and a color to every label. Then, it maps each label to an arbitrary 
(e.g., the leftmost) leaf of $T$ that has that label. 

Our algorithm considers all nodes of the tree in the bottom-up order. Denote the subtree rooted at node $u$ by $T_u$. For every $u$, the algorithm either processes subtree $T_u$ and marks $T_u$ as processed or skips node $u$. It processes $u$ if one of the following conditions is met:
\begin{itemize}
\item $u$ is the root of $T$; or
\item both the left and right subtrees of $u$ 
contain at least one already processed node; or
\item the number of yet unlabelled leaves in $T_u$ is greater than $\varepsilon |V|/2$.
\end{itemize}
Note that the second item can be rephrased as follows: $u$ is the least common ancestor (LCA) of two already processed nodes.

To process a node $u$, the algorithm creates four new labels $LL_u, LR_u, RL_u, RR_u$ and assigns the same new color to all of them. It assigns the first two labels $LL_u$, $LR_u$ to leaves in the left subtree of $u$ and the second two labels $RL_u$, $RR_u$ to leaves in the right subtree of $u$. Consider the left subtree. If it does not contain already processed nodes, then all leaves of the tree receive label $LL_u$. Otherwise, there should be one processed node $v$ such that subtree $T_v$ contains all other processed nodes in the left subtree of $T_u$. This node $v$ is the least common ancestor (LCA) of all processed nodes in the left subtree of $T_u$. 
We assign label $LL_u$ to the leaves in the left subtree of $u$ that are to the left of $T_v$ and label $LR_u$ to 
the leaves in the left subtree of $u$ that are to the right of $T_v$. We assign labels in the right subtree in a similar way. See Figure~\ref{fig:coloring-algorithm} in the Appendix.

\medskip

\noindent\textbf{Value of the solution.}
We now show that for the coarse solution $\xi$ constructed above, inequality~(\ref{eq:val-plus-greater-val}) holds. Consider a constraint $(x_1,\dots,x_k)$ in $\calI$. We need to show that
$f^+(\xi(x_1),\dots,\xi(x_k)) \geq 
f(\xi(x_1),\dots,\xi(x_k)) $. If at least two variables $x_i$ and $x_j$ have the same color, then 
$f^+(\xi(x_1),\dots,\xi(x_k)) 
= 1$ and, therefore, $f^+(\xi(x_1),\dots,\xi(x_k))  = 1\geq 
f(\xi(x_1),\dots,\xi(x_k))$. So from now on, we assume that $x_1,\dots,x_k$ have distinct colors. 

Recall, that payoff function $f$ can be specified by a list of patterns and corresponding payoffs: function $f(y_1,\dots,y_k)$ returns a certain value if $y_1,\dots,y_k$ match the corresponding pattern.
Each pattern $P$ can be described either by a tree with $k$ leaves $l_1,\dots, l_k$ or as a conjunction of
\emph{bracket} predicates of the form
$[y_a < y_b]$,
$[y_a, y_b < y_c]$,
$[y_a < y_b, y_c]$. See Section~\ref{sec:cl:bracket-predicates} and Lemma~\ref{lem:bracket-predicates} for details.
In this proof, we will use the latter type of pattern descriptions.

\begin{claim}
If $\varphi(x_1),\dots, \varphi(x_m)$ match a pattern $P$, then $\xi(x_1),\dots, \xi(x_m)$ match the same pattern $P$. Here, $\varphi$ is the original solution, and $\xi$ is the corresponding coarse solution.
\end{claim}
\begin{proof}
Consider an arbitrary predicate $[y_a,y_b < y_c]$, which is a part of $P$. If $\varphi(x_1),\dots,\varphi(x_m)$match $P$, then the predicate  $[\varphi(x_a),\varphi(x_b) < \varphi(x_c)]$ must be true. We show that 
$[\xi(x_a),\xi(x_b) < 
\xi(x_c)]$ is also true.

Examine node $u$ in $T$ where $\varphi(x_a),\varphi(x_b),\varphi(x_c)$ are split into two groups. This node $u$ is the least common ancestor of $\varphi(x_a)$, 
$\varphi(x_b)$, $\varphi(x_c)$ in tree $T$.
Since $[\varphi(x_a),\varphi(x_b) < \varphi(x_c)]$ is true, $\varphi(x_a)$ and $\varphi(x_b)$ must belong to the left subtree of $u$; and $\varphi(x_c)$ must belong to the right subtree of $u$.
Now, there are two possibilities: $u$ was or was not processed by the algorithm. 

If $u$ was processed by the algorithm, then $\varphi(x_a), \varphi(x_b)$ received labels in the left subtree and $\varphi(x_c)$ received labels in the right subtree. In the coarse solution, labels in the left subtree are mapped to leaves in the left subtree, and labels in the right subtree mapped to leaves in the right subtree. Hence, the predicate $[\xi(x_a),\xi(x_b) < \xi(x_c)]$ is true in the coarse solution.

Suppose now that $u$ was not processed by the algorithm but, of course, one of its ancestors was processed. Denote the first ancestor of $u$ which was processed by $v$. Assume without loss of generality that $u$ is in the left subtree of $v$. When the algorithm processed $v$, it assigned two new labels $LL_v, LR_v$ to some leaves in $T_u$. 
These labels have the same color. Since $\varphi(x_a)$, $\varphi(x_b)$, $\varphi(x_c)$ have distinct colors, only one of them could have received label $LL_v$ or $LR_v$. Therefore, the other two leaves were assigned labels 
before $v$ was processed. Suppose they were assigned labels when $v'$ was processed. Note that $v'$ is a descendent of $v$ and $u$ (if $v'$ was on the path from $v$ to $u$, then $v'$ not $v$
would be the first ancestor of $u$ that was processed). If $v'$ belonged to the right subtree of $u$, only 
$\varphi(x_c)$ would be its descendant and, consequently, neither   $\varphi(x_a)$ nor $\varphi(x_b)$ would receive a label when $v'$ was processed. Hence, $v'$ is in the left subtree of $u$. 
Therefore, $\varphi(x_c)$ must have received label $LR_v$, and $x_a$, $x_b$ received labels in $T_{v'}$. Since all leaves having label $LR_v$ are to the right of leaves in subtree $T_{v'}$, we  get that 
$[\xi(x_a),\xi(x_b) < \xi(x_c)]$ is satisfied in the coarse solution.
\end{proof}
This completes the proof of bound~(\ref{eq:val-plus-greater-val}). It remains to bound the number of labels and colors used by the algorithm.

\medskip

\noindent\textbf{The size of the coarse solution.} We now show that $\xi\in\Xi_{\varepsilon,q,\pi}$ (see Section~\ref{sec:coarse-solution} for definition of
$\Xi_{\varepsilon,q,\pi}$). Consider the step of the algorithm when a node $u$ is processed. Observe that each label $LL_u$, $LR_u$, $RL_u$, $RR_u$ is assigned to consecutive leaves in $\pi = \order(\varphi)$. All of them have the same color and no other label has that color.  The number of unlabeled leaves in the left and right subtrees of $u$ is at most $\varepsilon |V|/2$. So, the total number of leaves that get colored at this step of the algorithm is at most  $\varepsilon |V|$.

We now estimate the number of leaves in the image of 
$\xi$. This number equals the number of labels we create, which, in turn, equals the number of processed vertices multiplied by $4$. Each node $u$  is processed by the algorithm because of one the three reasons provided in the definition of the algorithm. The number of nodes $u$ processed because $T_u$ has more than $\varepsilon |V|/2$ unlabelled leaves is at most
$|V|/(\varepsilon |V|/2) = 
2/\varepsilon$. The number of nodes $u$ with at least one processed node in both the left and right subtrees of $u$ is at most $2/\varepsilon - 1$. Additionally, the algorithm always processes the root of $T$. Thus, the total number of processed nodes and, consequently, number of colors is at most $4/\varepsilon$. The number of labels is upper bounded by $16/\varepsilon$.  
\end{proof}
\subsection{No Good Coarse Solution for Gap Instance}\label{sec:no-good-cs-for-gap}

In the previous two sections, we proved that the gap instance $\calI^f_{gap}$ has a solution $\varphi$ of value $1$ and then showed how to transform every true solution to a coarse solution 
$\xi$ with $\val^+(\xi,\calI) \geq \val(\varphi,\calI)$. Thus, we know that for every $\varepsilon$ and $q$, there exists an ordering $\pi$ and coarse solution $\xi\in \Xi_{\varepsilon,q,\pi}(\calI)$ with $\val^+(\xi, \calI_{gap}^f) = 1$. We now prove that, in conrast, $\val^-(\xi, \calI_{gap}^f) \leq \alpha + \varepsilon$
for every 
$\xi\in \Xi_{\varepsilon,q,\pi}(\calI)$ 
if the gap instance is sufficiently large.

\begin{lemma}\label{lem:no-good-cs-for-gap}
For every positive $k,q\in \bbN$ and $\varepsilon'\in(0,1)$, there exists 
$m^*$ such that the following claim holds. For every phylogenetic payoff function $f$ of arity $k$, gap instance $\calI_{gap}^f=(V,C)$ with $|V|\geq m^*$, and coarse solution $\xi \in \Xi_{\varepsilon,q,\pi}$,
we have:
\begin{equation}\label{eq:lem:no-good-cs-for-gap}
\val^-(\xi,\calI))\leq \alpha + \varepsilon',
\end{equation}
where $\alpha$ is the biased random assignment threshold for payoff function $f$; $\varepsilon$ and $\pi$ are arbitrary.
\end{lemma}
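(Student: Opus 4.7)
The plan is to reduce Lemma~\ref{lem:no-good-cs-for-gap} to the ordinary-CSP gap Lemma~\ref{lem:ordinary-CSPs-gap} by extracting a finite-alphabet payoff function from $f$ and the coarse tree. Let $\Sigma$ be the set of leaves of the coarse tree $T$ underlying $\xi$, so $|\Sigma|\leq q$. Regard the coarse solution $\xi$ as an ordinary assignment of variables to $\Sigma$. To mirror the ``colour collision kills the constraint'' convention in the definition of $\val^-$, define an ordinary payoff $\tilde f\colon\Sigma^k\to[0,1]$ by
$$\tilde f(\sigma_1,\dots,\sigma_k)=\begin{cases}f(\sigma_1,\dots,\sigma_k) & \text{if }\clr(\sigma_1),\dots,\clr(\sigma_k)\text{ are all distinct,}\\ 0 & \text{otherwise,}\end{cases}$$
where $f(\sigma_1,\dots,\sigma_k)$ denotes the phylogenetic payoff of $f$ evaluated on the corresponding leaves of $T$ (well defined, since distinct colours force distinct leaves). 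By construction the value of $\xi$ on the ordinary instance $\calI_{gap}^{\tilde f}$ equals $\val^-(\xi,\calI_{gap}^f)$, and hence $\val^-(\xi,\calI_{gap}^f)\leq \opt(\calI_{gap}^{\tilde f})$.

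Next I would apply Lemma~\ref{lem:ordinary-CSPs-gap} with arity $k$, alphabet size $q$, and precision $\varepsilon'/2$ to obtain a threshold $m^*$ such that for $|V|\geq m^*$,
$$\opt(\calI_{gap}^{\tilde f})\leq \max_{\rho} \E_{x_i\sim\rho}[\tilde f(x_1,\dots,x_k)]+\varepsilon'/2,$$
where $\rho$ ranges over probability distributions on $\Sigma$. To compare this with $\alpha$, I would use the ``place-then-refine'' biased phylogenetic algorithm described just before Theorem~3.1: assign each variable to a leaf of $T$ independently according to $\rho$, then recursively split variables $50/50$ below each leaf. If the $k$ variables of a constraint land in leaves $\sigma_1,\dots,\sigma_k$ with pairwise distinct colours, they are automatically in distinct leaves of $T$, and Claim~\ref{cl:bracket-predicates} guarantees that the pattern they induce in the fully refined tree coincides with their pattern in $T$; hence the contribution is exactly $f(\sigma_1,\dots,\sigma_k)=\tilde f(\sigma_1,\dots,\sigma_k)$. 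In every other outcome, $\tilde f=0\leq f$. Taking expectations, $\alpha\geq \E_{x_i\sim\rho}[\tilde f(x_1,\dots,x_k)]$ for every $\rho$, so $\alpha\geq\max_\rho \E[\tilde f]$. Chaining the three bounds yields $\val^-(\xi,\calI_{gap}^f)\leq \alpha+\varepsilon'$, as required.

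The step that requires the most care is the comparison with $\alpha$: one must verify that, whenever the $k$ variables of a constraint land in distinct leaves of $T$, the pattern they form in the fully refined phylogenetic tree is unchanged by the further $50/50$ refinements beneath the leaves of $T$. This is precisely what the bracket-predicate characterisation of patterns delivers, since every bracket predicate is decided at the split node above the variables involved, and for variables in distinct leaves of $T$ that split node lies inside $T$ rather than in a subtree hanging below a leaf of $T$. I also need to note that ``draw leaf of $T$ from $\rho$, then refine $50/50$'' is exactly one of the piecewise-constant biased random algorithms singled out in Section~\ref{sec:random-assignment-def}, so its approximation factor is at most $\alpha$ by definition of $\alpha_{\opt}$. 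Both of these checks are essentially bookkeeping; the substantive reduction is captured by the construction of $\tilde f$ and the invocation of Lemma~\ref{lem:ordinary-CSPs-gap}.
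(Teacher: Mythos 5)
Your proposal is correct and follows essentially the same route as the paper: you view the coarse solution as an ordinary CSP assignment over the alphabet of coarse-tree leaves with the color-collision-zeroed payoff (the paper's $\val^-(\cdot,f)$, your $\tilde f$), invoke Lemma~\ref{lem:ordinary-CSPs-gap}, and then dominate the resulting biased ordinary-assignment value by $\alpha$ via the ``place in the coarse tree according to $\rho$, then refine $50/50$'' phylogenetic biased assignment. The only differences are cosmetic (the $\varepsilon'/2$ budgeting and the explicit appeal to bracket predicates for the pattern-preservation step, which the paper states more tersely).
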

\begin{proof}
Consider a coarse solution $\xi$. It maps variables of instance $\calI_{gap}^f$ to leaves of some tree $T$. Since $\xi\in\Xi_{\varepsilon,q,\pi}$, tree $T$ has at most $q$ leaves $l_1,\dots,l_{q'}$. We view this coarse solution as a solution to an \emph{ordinary} CSP with alphabet 
$l_1,\dots,l_{q'}$ and payoff function $f^-$. This function applied to variables $x_1,\dots, x_k$ returns $f(x_1,\dots,x_k)$ if all colors assigned to $x_1,\dots,x_k$ are distinct and $0$, otherwise. By Lemma~\ref{lem:ordinary-CSPs-gap}, the value of this solution is at most 
$\alpha'+\varepsilon'$, where $\alpha'$ is the expected value of the optimal biased random assignment for payoff function $f^-$. 

To complete the proof, we show that $\alpha'\leq \alpha$. Consider a biased random assignment algorithm
with some probability distribution $\rho$ on labels $l_1,\dots, l_{q'}$. We can use this distribution to define a biased randomized algorithm for phylogenetic CSP instance $\calI_{gap}^f$. The biased assignment algorithm first randomly and independently assigns all vertices $V$ to leaves $l_1,\dots, l_{q'}$ with probabilities
$\rho(l_1),\dots,\rho(l_{q'})$. Then, it recursively partitions vertices assigned to each leaf each time splitting vertices between the left and right subtrees with probability $50\%/50\%$ (see Section~\ref{sec:random-assignment-def} for details). Note that the expected value of this randomized algorithm for phylogenetic payoff function $f$ is greater than or equal to the value of the \emph{ordinary} payoff function $f^-$. This is the case because both payoff functions have the same value if the colors of the leaves assigned to their arguments are distinct.
However, $f^-(x_1,\dots,x_k) = 0$ if the colors of two or more leaves $x_i$ and $x_j$ are the same. This implies that 
$$
\E_{x_i\sim \rho}[
f^-(x_1,\dots,x_k)]\leq
\E_{x_i\sim \rho}[
f(x_1,\dots,x_k)].$$
\end{proof}
\subsection{Coarse Solutions for Random Orderings}\label{sec:diff-val-pm}
In this section, we bound the maximum difference 
$\max_{\xi\in \Xi_{\varepsilon,q,\pi}}
(\val^+(\xi,\calI) - 
\val^-(\xi,\calI))$ for a random ordering $\pi$. We assume that the instance $\calI$ of phylogenetic CSP is regular. That is, the weight of constraints that contain  a variable $x$ is the same for all $x\in V$. Note that our gap instance $\calI^f_{gap}$ satisfies this condition.
In the lemma below, we will use the notion of a Gaifman graph. The Gaifman graph for  instance $\calI$ of a constraint satisfaction problem is a weighted graph on the variables $V$ of $\calI$. The weight of edge $(x_1,x_2)$ equals the total weight of all constraints that depend on $x_1$ and $x_2$. Given an instance $\calI$, we construct the Gaifman graph for $\calI$ as follows.  For every constraint $(x_1,\dots,x_k)$, we  
add a clique on $x_1,\dots,x_k$ with the weight of edges equal to the weight of constraint $x_1,\dots, x_k$. It is easy to see that if $\calI$ is a regular instance (see above), then its
Gaifman graph is also regular.

Let $H=(V,E)$ be the weighted Gaifman graph for a regular instance $\calI$. Consider an arbitrary coarse solution $\xi\in \Xi_{\varepsilon,q,\pi}(\calI)$. Denote the total weight of monochromatic edges in $H$ by $\mc(\xi,H)$:
$$\mc(\xi,H) = 
\weight(\{(x,y)\in E: \clr(\xi(x))=
\clr(\xi(y))\}).$$
We show $\mc(\xi,H)$ is small on average for a random ordering $\pi$.
\begin{lemma}\label{lem:bound-max-monochrome}
For every $\varepsilon \in (0,1)$ and 
positive $q\in \bbN$, there exists 
$m^*=O(q\log(q/\varepsilon)/\varepsilon^2)$ such that for every regular instance $\calI=(V,C)$ with $|V|\geq m^*$, the following bound holds:
\begin{equation}\label{eq:lem:bound-max-monochrome}
\E_{\pi}\Big[\max_{\xi\in \Xi_{\varepsilon,q,\pi}(\calI)} \mc(\xi,H)\Big]\leq 3\varepsilon\cdot \mathrm{weight}(E).
\end{equation}
Here, $H=(V,E)$ is the Gaifman graph of $\calI$; $\pi$ is a random ordering of $V$.
\end{lemma}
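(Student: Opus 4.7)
The proof is a union bound over a $\pi$-independent ``template'' parametrization of coarse solutions, combined with a McDiarmid-type concentration inequality for random permutations. A rounding step reduces the template count enough to keep $m^*$ free of any $\log|V|$ dependence.

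\textbf{Templates and rounding.} Call a \emph{template} a pair $\tau=(B;c)$ where $B=(b_0,\dots,b_q)$ with $0=b_0\le\cdots\le b_q=|V|$ defines $q$ consecutive (possibly empty) intervals $I_j=(b_{j-1},b_j]$ of $[|V|]$, and $c:[q]\to[q]$ colors them. Given $\pi$, every $\xi\in\Xi_{\varepsilon,q,\pi}(\calI)$ has the form $\xi_{\tau,\pi}$ that sends $x$ to a leaf of color $c(j)$ where $\pi(x)\in I_j$; validity forces color-class sizes $N_k(\tau):=\sum_{j:c(j)=k}|I_j|\le\varepsilon|V|$. Fix $\Delta:=\lceil\varepsilon|V|/(Cq)\rceil$ for a large absolute constant $C$ to be chosen later and let $\tau'$ be the \emph{rounded template} obtained by rounding each interior $b_j$ to the nearest multiple of $\Delta$. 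Since at most $q-1$ boundaries move by at most $\Delta$ each, at most $(q-1)\Delta\le\varepsilon|V|/C$ vertices change color between $\xi_{\tau,\pi}$ and $\xi_{\tau',\pi}$, and $N_k(\tau')\le(1+2/C)\varepsilon|V|$ for all $k$. Since $H$ is regular with weighted degree $d=2W/|V|$ (where $W:=\mathrm{weight}(E)$), recoloring $D$ vertices changes $\mc$ by at most $Dd$, so
\[
\mc(\xi_{\tau,\pi},H)\;\le\;\mc(\xi_{\tau',\pi},H)+(2/C)\varepsilon W.
\]
The number of rounded templates is at most $(Cq/\varepsilon+1)^{q-1}\cdot q^q=(q/\varepsilon)^{O(q)}$ and, crucially, is independent of $|V|$.

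\textbf{Expectation, concentration, and union bound.} Fix a rounded $\tau'$. For uniform random $\pi$, each color class becomes a uniform random size-$N_k$ subset of $V$, so for every edge $e=(x,y)\in E$,
\[
\Pr_\pi\big[\clr(\xi_{\tau',\pi}(x))=\clr(\xi_{\tau',\pi}(y))\big]\;=\;\sum_k\frac{N_k(N_k-1)}{|V|(|V|-1)}\;\le\;\max_k\frac{N_k}{|V|}\;\le\;(1+2/C)\varepsilon,
\]
and summing with edge weights yields $\E_\pi[\mc(\xi_{\tau',\pi},H)]\le(1+2/C)\varepsilon W$. Next, view $\pi\mapsto\mc(\xi_{\tau',\pi},H)$ as a function of a uniform random permutation: a transposition of two positions recolors at most two variables and hence changes $\mc$ by at most $2d=4W/|V|$, so Maurey's inequality (equivalently, Azuma on the random-transposition walk) gives
\[
\Pr_\pi\big[\mc(\xi_{\tau',\pi},H)>\E_\pi[\mc]+(1/C)\varepsilon W\big]\;\le\;\exp\!\bigl(-c_0\,|V|\varepsilon^{2}/C^{2}\bigr)
\]
for an absolute constant $c_0>0$. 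Union-bounding over the $(q/\varepsilon)^{O(q)}$ rounded templates, the failure probability is at most $\varepsilon$ once $|V|\ge m^*=O\bigl(q\log(q/\varepsilon)/\varepsilon^2\bigr)$ with $C$ chosen as a large absolute constant. On the good event, every $\xi\in\Xi_{\varepsilon,q,\pi}(\calI)$ satisfies $\mc(\xi,H)\le\mc(\xi_{\tau'},H)+(2/C)\varepsilon W\le(1+5/C)\varepsilon W<2\varepsilon W$; on the bad event $\mc\le W$ contributes at most $\varepsilon W$ to the expectation. Combining the two gives $\E_\pi[\max_\xi\mc(\xi,H)]\le 3\varepsilon W$, as required. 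The main technical challenge is coordinating the three slack terms (mean, concentration, rounding) against the union-bound count; without the rounding step, a naive union bound over all $|V|^q q^q$ templates would force a $\log|V|$ factor into $m^*$, violating the stated bound $m^*=O(q\log(q/\varepsilon)/\varepsilon^2)$.
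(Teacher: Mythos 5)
Your proof is correct and follows essentially the same route as the paper: ignore the tree structure and reduce each coarse solution to a coloring of positions by at most $q$ consecutive intervals, bound the expected monochromatic weight for a fixed template by roughly $\varepsilon\cdot\mathrm{weight}(E)$, apply Maurey's concentration inequality for random permutations, union bound over templates, and account separately for the low-probability bad event to get the factor $3\varepsilon$. The one genuine difference is your discretization of the interval boundaries to a grid of step $\approx \varepsilon|V|/(Cq)$, which makes the union-bound count independent of $|V|$; this is fine (the $O(q)$ recolored vertices cost only $O(q)\cdot 2W/|V|\ll \varepsilon W$ for $|V|\ge m^*$), but your closing justification for it is mistaken. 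The paper union bounds directly over the at most $(qm)^q$ colorings, and this does \emph{not} force a $\log|V|$ factor into $m^*$: the deviation probability is $\exp(-c\varepsilon^2 m)$ with $m=|V|$ the actual instance size, and $c\varepsilon^2 m \gtrsim q\log(qm)$ holds for \emph{all} $m\ge m^*=O(q\log(q/\varepsilon)/\varepsilon^2)$, since the left side grows linearly in $m$ while the right side grows only logarithmically. So your rounding step is a harmless simplification of the counting, not a necessary repair.
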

\begin{proof}
Let $m=|V|$. We rescale the weight of all edges so that the weight of edges leaving any node in $H$ equals $2/m$. Then, the total weight of all edges in $H$ is $1$. In this proof, we will ignore the tree structure of the coarse solution. The ordering $\pi$ is a one-to-one mapping of $V$ to $\{0,\dots, m-1\}$. Thus, the coarse solution $\xi$ defines a coloring $\chi$ on $\{0,\dots, m-1\}$: The color of $i$ equals the color assigned by $\xi$ to the preimage of $i$. That is,
$$
\chi(i) = \clr(\xi(\pi^{-1}(i))).
$$
Note that (1) $\xi$ assigns the same color to at most $\varepsilon m$ numbers; (2) the entire set $\{0,\dots,m-1\}$ is partitioned in 
at most $q$ groups of consecutive  numbers and each group receives some color (every consecutive group corresponds to a leaf in the coarse solution; different groups may have the same color). We now rephrase the statement of the lemma as follows:
\begin{equation}\label{eq:lem:bound-max-monochrome-prime}
\E_{\pi}\Big[\max_{\chi} \mc(\chi\circ\pi ,H)\Big]\leq 3\varepsilon,    
\end{equation}
where $\chi$ is a coloring satisfying the conditions above; and
$\mc(\chi\circ\pi, H)$ is the fraction of monochromatic edges in $H$ with respect to the coloring $\chi\circ\pi$.

The proof follows a standard probabilistic argument. First, we estimate the number of monochromatic edges for a fixed coloring $\chi$ and random permutation $\pi$. Specifically, we argue that for a fixed coloring, the expected weight of monochromatic edges is at most $\varepsilon$. Then, we use Maurey’s concentration inequality for
permutations to show that for a typical permutation $\pi$, the maximum
number of monochromatic edges $\mc(\chi\circ \pi,H)$ over all colorings $\chi$ is at most $2\varepsilon$.  This yields the desired bound~(\ref{eq:lem:bound-max-monochrome-prime}).

Consider a fixed coloring $\chi$. By the definition, it assigns each color to at most $\varepsilon m$ numbers. Let us now orient all edges of $H$ in an arbitrary way. The probability that the right endpoint of an edge is assigned the number of the same color as the left endpoint is at most $\varepsilon$. Thus, the expected weight of monochromatic edges is at most $\varepsilon$.

The total number of different colorings satisfying conditions (1), and (2) above is at most $(qm)^q$, because we can specify the leftmost number in each group in at most $m^q$ ways; we can then assign colors to these $q$ groups in at most $q^q$ ways.

We will now use Maurey's concentration inequality
(\cite{maurey}; see also
Theorem 5.2.6 in the book by~\cite{vershynin} and
Theorem~13 in lecture notes by~\cite{naor-concentration})
to bound the probability that for a random $\pi$ the weight of monochromatic edges is greater than $2\varepsilon$. To this end, define the distance between two permutations or orderings $\pi'$ and $\pi''$ as the fraction of $x$ where $\pi'$ and $\pi''$ differ:
$$
\dist(\pi',\pi'') = \frac{|\{x\in V: \pi'(x)\neq \pi''(x)\}|}{|V|}.
$$
A function $f:Sym(m)\to \bbR$ is $L$-Lipschitz if 
$f(\pi') - f(\pi'')\leq L \cdot \dist(\pi',\pi'')
$
for all permutations $\pi',\pi'' \in Sym(m)$. Here, $Sym(m)$ is the group of all permutations on $m$ elements (symmetric group). 
\begin{theorem}[Maurey]\label{thm:Maurey} Consider an $L$-Lipschitz  function $f:Sym(m)\to\bbR$. Let $\pi$ be a random permutation in $Sym(m)$. Then,
\begin{equation}
\label{eq:thm:Maurey}
\Pr\{|f(\pi) - \E[f]|\geq t\}
\leq 2e^{-\frac{c t^2m}{L^2}}
\end{equation}
for some constant $c > 0$.
\end{theorem}
We will apply Theorem~\ref{thm:Maurey} to the function $\pi\mapsto \mc(\chi\circ \pi, H)$. To do so, we need the following claim.

\begin{claim}\label{clm:Lip}
The function 
$\pi\mapsto \mc(\chi\circ \pi, H)$ is 2-Lipschitz. 
\end{claim}
\begin{proof}
Consider two orderings $\pi'$ and $\pi''$. We split all edges of $G$ into two sets $A$ and $B$. Set $A$ contains edges $(x,y)$ with 
$\pi'(x) = \pi''(x)$ and $\pi'(y) = \pi''(y)$. Set $B$ contains the remaining edges. Each edge with both endpoints in $A$ is assigned the same colors by $\chi\circ\pi'$ and $\chi\circ\pi''$. Thus, it is either monochromatic with respect to both colorings $\chi\circ\pi'$ and $\chi\circ\pi''$, or not monochromatic with respect to both colorings $\chi\circ\pi'$ and $\chi\circ\pi''$. Hence, $\mc(\chi\circ \pi', H) - 
\mc(\chi\circ \pi'', H)\leq \textrm{weight}(B)$. However, each edge in $B$ is incident on a node $x$ with $\pi'(x)\neq \pi''(x)$. Since the number of such nodes equals $\dist(\pi',\pi'')\,|V|$,
the total weight of edges in $B$ is at most $2\dist(\pi',\pi'')$. Here, we use that the total weight of edges incident on any fixed vertex $x$ is $2/|V|$. 
This concludes the proof of Claim~\ref{clm:Lip}.
\end{proof}

By Maurey's concentration inequality~(\ref{eq:thm:Maurey}), we have
$$\Pr_{\pi}\Big\{\mc(\xi\circ\pi,H) - 
\varepsilon\geq \varepsilon
\Big\}
\leq 2e^{-c' \varepsilon^2 m}$$
for some positive constant $c'$. We now apply the union bound over all possible colorings 
$\chi$ and the the following inequality:
$$\Pr_{\pi}\{\max_{\chi} \mc(H,\chi\circ\pi) \geq 2\varepsilon\}
\leq 
2e^{-c' \varepsilon^2 m} q^qm^q.$$
If $m>C'q\log(q/\varepsilon)/\varepsilon^2$ (for sufficiently large constant $C'$), then the right hand side of the inequality is less than $\varepsilon$. Since $\mc(\chi\circ\pi)$ is upper bounded by the total weight of all edges, which is $1$, we get the desired bound (\ref{eq:lem:bound-max-monochrome-prime}).
\end{proof}

We use Lemma~\ref{lem:bound-max-monochrome} to bound $\max_{\xi\in \Xi_{\varepsilon,q,\pi}}
(\val^+(\xi,\calI) - 
\val^-(\xi,\calI))$.

\begin{lemma}\label{lem:diff-val-pm}
For every $\varepsilon \in (0,1)$ and 
positive $k,q\in \bbN$, there exists 
$m^*=O(q k^2\log(kq/\varepsilon)/\varepsilon^2))$ such that for every regular instance $\calI$ with at least $m^*$ variables, the following bound holds:
\begin{equation}\label{eq:lem:diff-val-pm}
\E_{\pi}\Big[
\max_{\xi\in \Xi_{\varepsilon,q,\pi}}
(\val^+(\xi,\calI) - 
\val^-(\xi,\calI))\Big]\leq \varepsilon.
\end{equation}
Here, $\pi$ is a random ordering of variables $V$ of the instance $\calI$.
\end{lemma}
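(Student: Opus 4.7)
The plan is to reduce the bound to Lemma~\ref{lem:bound-max-monochrome} by passing to the Gaifman graph. First, I would record the pointwise identity: for every payoff function $f$ on variables $x_{i_1},\dots,x_{i_k}$ in $\calI$,
\[
\val^+(\xi,f) - \val^-(\xi,f) \;=\; \one\{\exists\, a\neq b:\clr(\xi(x_{i_a}))=\clr(\xi(x_{i_b}))\}.
\]
This is immediate from the definition of $\val^{\pm}$: the two quantities agree exactly when all $k$ colors are distinct, and they differ by exactly $1$ otherwise. Averaging over $f\in\calI$ gives
\[
\val^+(\xi,\calI) - \val^-(\xi,\calI) \;=\; \Avg_{f\in\calI}\big[\one\{f\text{ has a monochromatic pair under }\xi\}\big].
\]

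Next, I would bound this weighted fraction in terms of the monochromatic-edge weight $\mc(\xi,H)$. By construction of the Gaifman graph, each payoff function $f$ of weight $w_f$ contributes a $k$-clique in which every edge has weight at least $w_f$. Hence, if $f$ has at least one monochromatic pair $(x_{i_a},x_{i_b})$ under $\xi$, the corresponding edge of $H$ is monochromatic and contributes at least $w_f$ to $\mc(\xi,H)$. Summing over payoff functions gives the \emph{deterministic} inequality
\[
\big(\val^+(\xi,\calI)-\val^-(\xi,\calI)\big)\cdot W \;\leq\; \mc(\xi,H),
\]
where $W=\sum_f w_f$. Since $\mathrm{weight}(E)=W\binom{k}{2}$ for an arity-$k$ instance, this rearranges to $\val^+(\xi,\calI)-\val^-(\xi,\calI) \leq \binom{k}{2}\cdot\mc(\xi,H)/\mathrm{weight}(E)$, a bound that holds uniformly in $\xi\in\Xi_{\varepsilon,q,\pi}(\calI)$ and in $\pi$.

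Finally, I would take the maximum over $\xi\in\Xi_{\varepsilon,q,\pi}$ and the expectation over a random ordering $\pi$, and invoke Lemma~\ref{lem:bound-max-monochrome} with parameter $\varepsilon_0 := \varepsilon/(3\binom{k}{2}) = \Theta(\varepsilon/k^2)$ in place of its $\varepsilon$. For $m\geq m^*_0 = O(q\log(q/\varepsilon_0)/\varepsilon_0^2) = O(q k^4\log(qk/\varepsilon)/\varepsilon^2)$, Lemma~\ref{lem:bound-max-monochrome} yields
\[
\E_\pi\!\Big[\max_{\xi\in\Xi_{\varepsilon,q,\pi}} \mc(\xi,H)\Big] \;\leq\; 3\varepsilon_0 \cdot \mathrm{weight}(E),
\]
and combining with the previous display proves the lemma. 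The only real subtlety is the factor-$\binom{k}{2}$ loss in translating ``$f$ has a monochromatic pair'' into ``$H$ has a monochromatic edge'': a bad payoff function always contributes $w_f$ to the left-hand side but may contribute as little as $w_f$ out of a clique of total weight $w_f\binom{k}{2}$ to the right-hand side. This is what forces the $k^2$-tightening of the parameter of Lemma~\ref{lem:bound-max-monochrome} and produces the polynomial-in-$k$ growth of the required instance size.
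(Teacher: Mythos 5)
Your proposal is correct and follows essentially the same route as the paper: the paper likewise observes that $\val^+$ and $\val^-$ differ exactly on payoff functions with a monochromatic pair, bounds this difference by the monochromatic weight in the Gaifman graph, and then applies Lemma~\ref{lem:bound-max-monochrome} with the very same parameter $\varepsilon' = 2\varepsilon/(3k(k-1)) = \varepsilon/(3\binom{k}{2})$. Your honest accounting giving $m^* = O(qk^4\log(qk/\varepsilon)/\varepsilon^2)$ is in fact what the paper's own application of Lemma~\ref{lem:bound-max-monochrome} yields as well (the $k^2$ in the lemma statement appears to be an understatement/typo), and this makes no difference downstream since $m^*$ only needs to be a constant depending on $k,q,\varepsilon$.
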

\begin{proof} 
Let $H$ be the Gaifman graph for instance $\calI$. Consider a coarse solution $\xi$ and the induced
coloring of variables $V$.
Let us now examine the definition of functions $\val^+$ and 
$\val^-$ given in Equations~(\ref{eq:def:val-pm-1}) and (\ref{eq:def:val-pm-2}). These functions differ only on payoff functions $f(\xi(x_1),\dots,\xi(x_k))$ with two variables having the same color
(i.e., $\clr(\xi(x_i)) = \clr(\xi(x_j))$). Thus, 
$$\val^+(\xi,\calI) - 
\val^-(\xi,\calI)
\leq \mc(\xi,H).
$$
Note that the total weight of all constraints in the instance $\calI$ is $1$; and the total weight of all edges in $H$ is $k(k-1)/2$, because for every payoff function in $\calI$, we create a clique of size $k$  in $H$. We now apply Lemma~\ref{lem:bound-max-monochrome} with $\varepsilon'=2\varepsilon/(3k(k-1))$ and get inequality~(\ref{eq:lem:diff-val-pm}).
\end{proof}

\subsection{Proof of Theorem~\ref{thm:exists-gap-instance}}
We now complete the proof of Theorem~\ref{thm:exists-gap-instance}. 
Consider a payoff function $f$ of arity $k$. We assume that the arguments of $f$ are rearranged so that there exists an assignment $\varphi$ to the variables that satisfies $f$ and such that the ordering of variables in $\varphi$ is $x_1,\dots,x_m$ (i.e., 
$\order(\varphi) = id$). By Corollary~\ref{cor:gap-sat}, the gap instance $\calI^f_{phy}$ is completely satisfiable. 
Suppose that the number of  leaves in $\calI^f_{gap}$ is larger than some sufficiently large $m^*$. By Lemma~\ref{lem:good-cs}, we have
$$\E_{\pi}
\Big[
\opt(\calI^f_{gap} \mid \pi)
\Big]
=
\E_{\pi}
\Big[
\max_{\varphi\in\Phi_{\pi}}
(\val(\varphi, \calI))
\Big]
\leq
\E_{\pi}
\Big[
\max_{\xi\in\Xi_{\varepsilon,q,\pi}}
(\val^+(\xi, \calI))
\Big]
.$$
By Lemma~\ref{lem:diff-val-pm},
$$\E_{\pi}
\Big[
\max_{\xi\in\Xi_{\varepsilon,q,\pi}}
(\val^+(\xi, \calI))
\Big]
\leq 
\E_{\pi}
\Big[
\max_{\xi\in\Xi_{\varepsilon,q,\pi}}
(\val^-(\xi, \calI))
\Big]+\varepsilon.
$$
Finally, Lemma~\ref{lem:no-good-cs-for-gap},
$$\E_{\pi}
\Big[
\max_{\xi\in\Xi_{\varepsilon,q,\pi}}
(\val^-(\xi, \calI))
\Big]\leq \alpha+\varepsilon.
$$
This concludes the proof of Theorem~\ref{thm:exists-gap-instance}.
\section{Making the Reduction from Unique Games Work}\label{sec:reduction-ug}
We now examine the hardness reduction by
\citet*{GHMRC11} and then modify it to make it work with our own reduction $h_{ord\to phy}$. As most other hardness reductions from the Unique Games Conjecture, the hardness reduction by \citet{GHMRC11} relies on a
\emph{dictatorship test} for the problem (see~\cite*{khot2002power, khot2003vertex,  kkmo,raghavendra2008}). A dictatorship test is a special instance of the problem, in our case ordering CSP $\Gamma_{ord}$, on variables in the grid $[M]^R$. On the one hand, this instance must have  a \emph{dictator} solution $\varphi$ of value at least $1-\varepsilon$. On the other hand, every $\tau$-pseudorandom solution for this instance must have value at most $\alpha + \varepsilon$, where $\alpha$ is the desired approximation hardness of the problem.  We remind the reader that a dictator is a function $\varphi$ defined on $\mathbf{z}\in[M]^R$ that depends only on one coordinate $\mathbf{z}_j$ of $\mathbf{z}$. That coordinate $j$ is \emph{the dictator}. A function $\varphi$ is $\tau$-pseudorandom if $T_{1-\varepsilon}\varphi$ does not have influential coordinates i.e., coordinates with influence greater than $\tau$ 
(here $T_{1-\varepsilon}$ is the noise operator that ``flips'' every coordinate of $\mathbf{z}$ with probability $\varepsilon$). Note that we can pick a constant $M$ as we wish (it can depend on $\varepsilon$, which we treat as a fixed constant). However, the value of $R$ depends on the Unique Games instance we use in the reduction and is not under our control ($R$ equals the number of labels in the Unique Games instance).

The general recipe for creating dictatorship tests was provided
by~\cite*{raghavendra2008} in his influential paper on optimal approximation algorithms and approximation hardness for ordinary CSPs. His dictatorship test was adapted for ordering CSPs by~\cite{GHMRC11}. Also, ~\cite{GHMRC11} defined $\tau$-pseudorandom functions for ordering CSPs (see Definition 4.2 in their paper) and developed tools necessary for analyzing such functions.

We now outline the dictatorship test used by~\cite{GHMRC11}. We will work with the ordering predicate $o$ of arity $m$  defined in Section~\ref{sec:proof-overview}.
\cite{GHMRC11} use a gap instance 
$\widetilde \calI_{gap}^o$ with $M$ variables. This is the same gap instance\footnote{For technical reasons, they take several copies of this gap instance. However, in our modified hardness reduction, we will use this instance $\widetilde\calI^o_{gap}$ as is.}
as we described in the proof in Section~\ref{sec:proof-overview} only of a larger size and applied to the ordering predicate $o$. Then, for every tuple $\mathbf{s}$ of $m$ variables 
$s_1,\dots, s_m\in [M]$ ($m$ is the arity of the ordering CSP), they 
define a random map $L_{\mathbf{s}}$ that maps $s_1,\dots, s_m$ to another $m$ tuple in $[N]^k$ (in their case $N=M$). This map should satisfy several important conditions we examine in a moment. We give a description of the dictatorship test in Figure~\ref{fig:dict-test}.
\begin{figure}[ht]    
\centering
\begin{tcolorbox}
Dictatorship Test from~\cite{GHMRC11}:
\begin{itemize}
\item[-] Pick a random constraint $(s_1,\dots,s_m)$ from  $\widetilde \calI_{gap}^o$.
\item[-] Draw $m$ vectors
$\mathbf{z}_1,\dots, \mathbf{z}_m\in[N]^R$ using $R$ independent random functions 
$L_{\mathbf{s}}^{(1)},\dots, L_{\mathbf{s}}^{(R)}$:
$$
(\mathbf{z}^{(j)}_1,\dots,
\mathbf{z}^{(j)}_m) = 
L_{\mathbf{s}}^{(j)}(s_1,\dots,s_m).
$$
\item[-] Apply $\varepsilon$-noise to each $\mathbf{z}_i^{(j)}$ i.e. with probability $\varepsilon$, replace it with a random value in~$[N]$.
\item[-] Return constraint $(\mathbf{z}_1,\dots,\mathbf{z}_m)$ for the payoff function $o$.
\end{itemize}

The instance $\calI$ of the ordering CSP $\Gamma_{ord}$ generated by the dictatorship test consists of a set of variables $V = [N]^R$ and set of constraints $C= \underbrace{V \times\dots \times V}_{m}$. The weight of each constraint 
$(\mathbf{z}_1,\dots,\mathbf{z}_m)$
equals the probability that this constraint is returned by the procedure above.
\end{tcolorbox}
\caption{Dictatorship Test}\label{fig:dict-test}
\end{figure}

The map $L_{\mathbf{s}}$ should satisfy several conditions. First, for every $j$, the dictatorship solution 
$\varphi: \mathbf{z}\mapsto \mathbf{z}_j$ (where $\mathbf{z}\in V= [M]^R$ is a variable; $\mathbf{z}_j$ is a number in $[N]$)
should have value $1-O(\varepsilon)$. In this solution, several distinct variables $\mathbf{z}$ can be mapped to the same position $i$; in this case, we pick a random ordering among them, but preserve their relative order with other $\mathbf{z}$'s.
Then, $L_{\mathbf{s}}$ should be $\eta$-\emph{smooth} i.e., for all $t_1,\dots,t_m \in [N]^R$, we have (for some $\eta > 0$)
\begin{equation}\label{eq:UG:L-smooth}
\Pr\big\{L_{\mathbf{s}}(s_1,\dots,s_m) = (t_1,\dots,t_m)\big\} \geq \eta.
\end{equation}
The marginal distribution of each coordinate of $L_\mathbf{s}$ should be uniform i.e., for every $\mathbf{s}\in [M]^m$, every $i\in \{1,\dots, m\}$, and every $t\in[N]$,
\begin{equation}\label{eq:UG:L-uniform}
\Pr\{L_{\mathbf{s}}(\mathbf{s})_i= t\} = \frac{1}{N}.
\end{equation}
Here, $L_{\mathbf{s}}(\mathbf{s})_i$ denotes the $i$-th coordinate of $L_{\mathbf{s}}(\mathbf{s})$. Finally, there should exist a \emph{global} SDP solution (the same for all functions $L_{\mathbf{s}}$) that match the first and second moments of every $L_{\mathbf{s}}$.

Unfortunately, this dictatorship test instance does not work for us as is. As we discussed earlier, we need to get a hardness reduction 
$h_{UG\to ord}$ 
from Unique Games to ordering CSP $\Gamma_{ord}$, which not only maps almost satisfiable instances of Unique Games to almost satisfiable instances of $\Gamma_{ord}$, but also satisfies the following condition: The composition of hardness reductions
$h_{ord\to phy} \circ h_{UG\to ord}$ maps almost satisfiable instances of Unique Games to almost satisfiable instances of our phylogenetic CSP $\Gamma_{phy}$. To satisfy this condition, we need map $L_{\mathbf{s}}$ to have one additional property.
Each dictatorship solution 
$\varphi:\mathbf{z}\mapsto \mathbf{z}_j$ must have value at least $1-O(\varepsilon)$ when evaluated on the phylogenetic CSP corresponding to the dictatorship test instance (i.e., the image of the dictatorship test instance under $h_{ord\to phy}$). Note that each $\mathbf{z}_j$ is a leaf in the tree must be a leaf of a tree, so we also need to provide a tree whose leaves are elements of $[N]$.

The map used in the paper by \cite{GHMRC11} cyclically shifts elements in $[N]$ (in their case, $N=M$). This destroys any tree structure we can define on $[N]$. Let us illustrate this point by example. Consider the Triplet Consistency constraint $uv|w$ and binary tree of depth 2 with $4$ leaves $1$, $2$, $3$, $4$. This constraint is satisfied if $u=1$, $v=2$, $w=3$. However, if we shift values by one, $u=2$, $v=3$, $w=4$, then the constraint is no longer satisfied.

We are going to define an alternative random function $L$ that maps all variables in $[M]$ to some larger domain $[N]$. The elements of $[N]$ are associated with leaves of a binary tree. We then let $L_{\mathbf{s}}(s_1,\dots, s_k) = (L(s_1),\dots, L(s_k))$ and plug these functions $L_{\mathbf{s}}$ into the dictatorship test described above.

To make the proof of \cite{GHMRC11} work for this new function $L$, we need to ensure that maps $L_{\mathbf{s}}$ satisfy the required conditions. Finding a global SDP solution for $L$ is easy: We get it for free, because $L$ is a global distribution and, as such, is a convex combination of integral solutions (each realization of $L$ is an integral solution; it maps variables in $[M]$ to leaves in $[N]$). The smoothness condition~(\ref{eq:UG:L-smooth}) can be easily obtained by perturbing  $L$. 

Now, we show that there exists a random function $L:[M]\to [N]$
that satisfies the following conditions:
\begin{itemize}
\item for all $u\in [M]$ and $v\in [N]$, $\Pr\{L(u) = v\} = 1/N$ (cf. Equation (\ref{eq:UG:L-uniform}));
\item for every $j$ and assignment $\varphi: \mathbf{z} \mapsto\mathbf{z}_j$, we have
$\val(\varphi, h_{ord\to phy}(\calI_{test}))\geq 1 - O(\varepsilon)$, where $\calI_{test}$ is the dictatorship test instance obtained using function $L$.
\end{itemize}
Let us examine the second condition. Denote 
$\calI_{red} = h_{ord\to phy}(\calI_{test})$. Recall that instance $\calI_{red}$ is obtained from the dictatorship test instance $\calI_{test}$ by replacing every constraint $(\mathbf{z}_1,\dots,\mathbf{z}_m)$ for the payoff function $o$ with a copy of the phylogenetic gap instance $\calI^f_{gap}$ on the same set of variables $(\mathbf{z}_1,\dots,\mathbf{z}_m)$. We remind the reader that $m$ is the number of leaves in the gap instance 
$\calI_{gap}^f$, and hence is a power of $k$. Similarly, $M$ is the number of leaves in the gap instance 
$\widetilde \calI_{gap}^o$ and is a power of $m$, and, consequently, a power of $k$. We let $N = M^d$ for some constant $d$. Thus, $m$, $M$, and $N$ are powers of $k$. We will associate sets $[m]$, $[M]$, and $[N]$ with leaves of $k$-ary trees of appropriate depths. We will also map set $[N]$ to leaves of a binary tree using  Lemma~\ref{lem:cousins-sat}. We will use this mapping to define $\val(\varphi,\calI_{red})$. We now show how to construct the desired function $L$ for a sufficiently large number $N$.
Later, in Lemma~\ref{lem:L-works}, we will prove that L satisfies the required conditions.

\begin{lemma}\label{lem:exists-L}
Fix a natural $k > 1$ and consider a 
perfect $k$-ary tree $T_M$ with $M$ leaves labeled $0,\ldots, M-1$. For every positive $\varepsilon$, there exists an integer $N$ and a random map $L$ from leaves of $T_M$ to leaves of another $k$-ary tree $T_N$ with $N$ leaves  labeled $0,\ldots, N-1$ such that
\begin{itemize}
\item for every $u \in [M]$ and $v\in [N]$, we have
$$\Pr\{L(u) = v\} = 1/N;$$
\item for every $k$ cousins $u_1,\dots,u_k$ in $T_M$, 
$$\Pr\big\{\cousins(L(u_1),\cdots, L(u_k))\big\} \geq 1 - O(\varepsilon);$$
$$\Pr\{L(u_i) = v_i\;\forall i\} > 0.$$
\end{itemize}
\end{lemma}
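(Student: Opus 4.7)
The plan is to build $L$ as a per-leaf independent mixture of a \emph{structured} random map $L_{\mathrm{str}}$ that deterministically preserves cousins and a \emph{fully uniform} map $L_{\mathrm{unif}}$ whose only role is to guarantee the positivity condition. Fix any $N = k^{D_N}$ with $D_N \geq D_M$ (say $N = kM$), so that $N/M$ is a positive integer. Construct $L_{\mathrm{str}}$ as follows. (a) Sample uniformly at random one of the $N/M$ complete depth-$D_M$ subtrees of $T_N$ (rooted at the internal nodes of $T_N$ at depth $D_N - D_M$); call it $T_N^{\mathrm{sub}}$. (b) Independently for every internal node $v$ of $T_M$, draw a uniform permutation $\pi_v$ of $\{1,\dots,k\}$. (c) For a leaf $u\in[M]$ with path $(u_1,\dots,u_{D_M})$ and ancestors $v_0,\dots,v_{D_M-1}$ in $T_M$, set
\[
\sigma(u) \;=\; \bigl(\pi_{v_0}(u_1),\;\pi_{v_1}(u_2),\;\dots,\;\pi_{v_{D_M-1}}(u_{D_M})\bigr).
\]
(d) Define $L_{\mathrm{str}}(u)$ to be the leaf of $T_N^{\mathrm{sub}}$ whose position inside $T_N^{\mathrm{sub}}$ is given by the path $\sigma(u)$. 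Finally set $\varepsilon' = \varepsilon/k$, and for each leaf $u\in[M]$ independently put $L(u) = L_{\mathrm{str}}(u)$ with probability $1-\varepsilon'$ and $L(u) = L_{\mathrm{unif}}(u)$ (a freshly drawn uniform leaf of $T_N$) with probability $\varepsilon'$.

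For the first bullet, $\Pr\{L_{\mathrm{str}}(u) = v\}$ factors as $\Pr\{v\in T_N^{\mathrm{sub}}\}\cdot\Pr\{\sigma(u) \text{ equals the position of } v\} = (M/N)(1/M) = 1/N$; mixing with the equally uniform $L_{\mathrm{unif}}$ preserves this, so $L(u)$ is uniform on $[N]$. For the second bullet, I would argue that $\sigma$ \emph{always} preserves cousins in $T_M$: if $u_1,\dots,u_k$ share LCA $v$ at depth $D_M - r$ with $u_l$ in the $l$-th child subtree of $v$, then the paths of $\sigma(u_l)$ agree in coordinates $1,\dots,D_M-r$ (identical ancestors and identical inputs yield identical permuted outputs), and at coordinate $D_M-r+1$ take the $k$ distinct values $\pi_v(1),\dots,\pi_v(k)$, landing in $k$ distinct child subtrees of $v$. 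Embedding into the complete subtree $T_N^{\mathrm{sub}}$ transports this cousin structure verbatim into $T_N$. Hence on the event that all $k$ Bernoulli coins select ``structured''---probability $(1-\varepsilon')^k \geq 1 - k\varepsilon' = 1-\varepsilon$---cousins are preserved with probability $1$, giving the $1 - O(\varepsilon)$ bound. For the third bullet, on the complementary event that all $k$ coins select ``uniform''---probability $(\varepsilon')^k > 0$---the $L(u_i)$'s are independent uniform leaves, so $\Pr\{L(u_i) = v_i\;\forall i\} \geq (\varepsilon')^k/N^k > 0$ for any target tuple $(v_1,\dots,v_k)$.

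The main obstacle is conceptual: neither a completely random embedding (which trivially satisfies the marginal and positivity conditions but destroys cousins, since a random $k$-tuple of leaves in a large tree is almost never a set of cousins) nor a deterministic isomorphism $T_M\hookrightarrow T_N$ (which preserves cousins but has a delta-mass marginal and no smoothness) works on its own. The random children-permutations $\pi_v$ are the key gadget that reconciles these requirements: they deliver marginal uniformity and cousin preservation through the \emph{same} source of randomness---cousins are preserved because the permutation at their common LCA acts on their subtrees simultaneously, while marginal uniformity holds because each coordinate of $\sigma(u)$ is a uniform element of $\{1,\dots,k\}$. The random choice of $T_N^{\mathrm{sub}}$ then spreads the image uniformly across all of $T_N$, and the per-leaf Bernoulli mixture with $L_{\mathrm{unif}}$ is a standard trick for obtaining pointwise positivity at the price of only $O(\varepsilon)$ in the cousin-preservation probability.
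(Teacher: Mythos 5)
There is a genuine gap, and it sits exactly at the step you marked as trivial: ``cousins are preserved with probability $1$.'' The predicate $\cousins$ in this paper (Definition~\ref{def:cousins}) is \emph{ordered}: $\cousins(l_1,\dots,l_k)=1$ requires that $l_i$ lie in the subtree rooted at the \emph{$i$-th} child (in left-to-right order) of $\lca(l_1,\dots,l_k)$, not merely that the $l_i$ occupy $k$ distinct child subtrees. In your construction, if $u_1,\dots,u_k$ are cousins with LCA $w$, then the image of $u_l$ descends into the $\pi_w(l)$-th child subtree of the image of $w$, so $\cousins(L_{\mathrm{str}}(u_1),\dots,L_{\mathrm{str}}(u_k))=1$ exactly when $\pi_w=\mathrm{id}$, which happens with probability $1/k!$, not $1$. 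So your second bullet degrades to roughly $1/k!$, and the ordered version is precisely what the lemma must deliver: downstream, Lemma~\ref{lem:cousins-sat} and Lemma~\ref{lem:L-works} convert ``mapped to cousins'' into ``payoff $1$'' only because the $i$-th variable sits under the $i$-th leaf of the satisfying pattern; images that are cousins only up to a permutation do not satisfy the pattern. The random child-permutations cannot be removed or conditioned away, because they are also your only source of marginal uniformity; indeed, exact uniform marginals together with almost-sure ordered-cousin preservation is impossible in principle (already for $k=2$, $M=2$: ordered cousins means $L(1)$ precedes $L(2)$ in leaf order, and $\Pr\{L(1)\prec L(2)\}=1$ forces the expected leaf positions to differ, contradicting identical uniform marginals). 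So any correct construction must trade exactness somewhere, which your plan does not do.

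For comparison, the paper resolves this tension in the opposite direction. It starts from the correlated map $L_{M,N}$ (pick a random level and a random pivot node in the arity-$M$ ``shortcut'' tree $T_{sc}$ built on $T_N$ with $N=M^d$, then send the $j$-th element of $[M]$ to a random leaf under the pivot's $j$-th child); this map preserves \emph{ordered} cousins deterministically but has non-uniform marginals. The real work is then to show, via a counting/Chernoff argument on how often a random root-to-leaf path uses each branch, that for $d=\Theta\bigl(\frac{M}{\varepsilon^2}\ln\frac{M}{\varepsilon}\bigr)$ each marginal is within total-variation distance $\varepsilon$ of uniform, and to invoke the coupling lemma coordinate-by-coordinate to make the marginals exactly uniform while changing each $L(u)$ with probability at most $\varepsilon$; a union bound over the $k$ cousins then gives $1-\varepsilon k$. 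Your choice $N=kM$ and the per-leaf Bernoulli mixture with a uniform map (which is fine for the positivity bullet, and mirrors the paper's smoothing) skip exactly this quantitative step, which is where the lemma's content lies. To repair your proposal you would need to replace the independent permutations by a structure-preserving randomization with near-uniform marginals and then re-introduce a coupling correction---at which point you have essentially reconstructed the paper's proof.
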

\noindent\textbf{Remark:} We define the notion of \emph{cousins} in Section~\ref{sec:gap-sat}. Leaves $u_1,\dots,u_k$ are cousins in a tree of arity $k$ if each $u_i$ lies in the subtree rooted at the $i$-th child of $\lca(u_1,\dots,u_k)$.
\begin{proof}
Let $N=M^d$ for $d=\frac{3M}{\varepsilon^2}\ln (\frac{M}{\varepsilon})$. We create $k$-ary tree $T_N$ with $N$ leaves $0,\dots, N-1$. We also define a set of ``shortcut'' edges for $T_N$. These edges go from level $0$ to $d'$, $d'$ to $2d'$ and so on, where $d'=\log_{k} M$ is the depth of tree $T_M$.
We will denote the tree with shortcut edges by $T_{sc}$. This tree has arity $M$.

Consider the random map $L_{M,N}$ defined in Section~\ref{sec:proof-overview}. It maps $[M]$ to $[N]$. Note that it always maps leaves that are cousins  in $T_M$ to leaves that are cousins in $T_{N}$. We define $L$ using the following well-known lemma about the optimal coupling of random variables.

\begin{lemma}[Coupling Lemma; see e.g. \cite{roch}]\label{lem:opt-coupling}
Consider two probability distributions $\calP$ and $\calQ$ on a finite domain. Suppose random variable $X$ has distribution $\calP$, then there exists another random variable $Y$ having distribution $\calQ$ such that 
$$\Pr\{X\neq Y\} = \|\calP - \calQ\|_{TV},$$
where $\|\calP - \calQ\|_{TV}$ is the total variation distance between $\calP$ and $\calQ$.
\end{lemma}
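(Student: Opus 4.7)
The plan is to give the standard ``common plus excess'' construction of a maximal coupling. First, I would introduce the shared mass
\[
\gamma \;=\; \sum_{z} \min(\calP(z), \calQ(z)),
\]
and record the identity $\|\calP - \calQ\|_{TV} = 1-\gamma$. This follows by writing $|\calP(z)-\calQ(z)| = \calP(z)+\calQ(z)-2\min(\calP(z),\calQ(z))$, summing over the finite domain, and dividing by $2$, since $\sum_z \calP(z) = \sum_z \calQ(z) = 1$.

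Next, I would describe the joint distribution of $(X,Y)$ explicitly. Let $B$ be a Bernoulli random variable with $\Pr\{B=1\}=\gamma$. If $B=1$, draw $X=Y=z$ from the overlap distribution $\min(\calP(z),\calQ(z))/\gamma$. If $B=0$ (and $\gamma<1$), independently draw $X$ from the $\calP$-excess distribution $\calP^{+}(z) = (\calP(z) - \min(\calP(z),\calQ(z)))/(1-\gamma)$ and $Y$ from the $\calQ$-excess distribution $\calQ^{+}(z)=(\calQ(z) - \min(\calP(z),\calQ(z)))/(1-\gamma)$. The degenerate cases $\gamma=0$ and $\gamma=1$ are handled by taking $X,Y$ independent or by setting $Y=X$, respectively. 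Since the lemma specifies $X$ first and then asks for a suitable $Y$, we realise this formally on an enlarged probability space by treating $Y$ as a deterministic function of $X$ together with an independent auxiliary source (the Bernoulli $B$ and a uniform draw from the excess distribution when $B=0$ and $X$ lies outside the overlap support).

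Then I would verify the two required properties. For the marginals, at any $z$,
\[
\Pr\{X=z\} \;=\; \gamma\cdot\tfrac{\min(\calP(z),\calQ(z))}{\gamma} + (1-\gamma)\cdot\tfrac{\calP(z) - \min(\calP(z),\calQ(z))}{1-\gamma} \;=\; \calP(z),
\]
and the symmetric computation gives $Y\sim\calQ$. For the disagreement probability, observe that $\calP^{+}$ and $\calQ^{+}$ are supported on disjoint sets (at each $z$, at least one of $\calP(z)-\min(\calP(z),\calQ(z))$ or $\calQ(z)-\min(\calP(z),\calQ(z))$ vanishes), so on the event $\{B=0\}$ we have $X\neq Y$ almost surely, while on $\{B=1\}$ we have $X=Y$ by construction. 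Hence $\Pr\{X\neq Y\} = 1-\gamma = \|\calP-\calQ\|_{TV}$.

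The statement is a textbook fact, so there is no serious technical obstacle; the only subtleties are handling the degenerate cases $\gamma\in\{0,1\}$ and making precise the sense in which $Y$ is defined on the same probability space as the given $X$. As an optional remark, I would note that this coupling is also \emph{optimal}: for any coupling $(X,Y)$ with the prescribed marginals, $\Pr\{X=Y\}=\sum_z \Pr\{X=Y=z\}\leq \sum_z \min(\calP(z),\calQ(z)) = \gamma$, so no coupling can achieve $\Pr\{X\neq Y\}<\|\calP-\calQ\|_{TV}$.
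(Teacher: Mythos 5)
The paper does not prove this lemma at all---it is invoked as a standard fact with a citation to Roch's lecture notes---so there is no in-paper argument to compare against. Your proof is the textbook maximal-coupling construction (overlap mass $\gamma=\sum_z\min(\calP(z),\calQ(z))$ plus disjointly supported excess distributions) and it is correct, including the identity $\|\calP-\calQ\|_{TV}=1-\gamma$ under the $\tfrac12\sum_z|\calP(z)-\calQ(z)|$ normalization that the paper uses. The one point worth being slightly more explicit about is the ``$X$ given first'' formulation: the cleanest realization is to define the conditional law of $Y$ given $X=z$ directly (set $Y=z$ with probability $\min(\calP(z),\calQ(z))/\calP(z)$, otherwise draw $Y$ from the $\calQ$-excess distribution), which yields exactly the joint law you describe; your appeal to an enlarged probability space with auxiliary randomness amounts to the same thing and is fine.
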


The random variable $Y$ in  Lemma~\ref{lem:opt-coupling} can be obtained from $X$ using the maximum matching between distributions $\calP$ and $\calQ$. For each $u\in [M]$, we use this lemma to find a random variable $L(u)$ uniformly distributed in $[N]$
such that 
\begin{equation}\label{eq:dist-to-uniform}
\Pr\{L(u) \neq L_{M,N}(u)\} =
\frac{1}{2}\sum_{v\in[N]}\Big|\Pr\{L_{M,N}(u) = v\} - \frac{1}{N}\Big|.
\end{equation}
The expression on the right hand side is the total variation distance between the distribution of $L_{M,N}$ and the uniform distribution on $[N]$. We now upper bound this distance.

\begin{claim} For all $j\in[M]$,
$$\frac{1}{2}\sum_{v\in[N]}\Big|\Pr\{L_{M,N}(j) = v\} - \frac{1}{N}\Big|\leq \varepsilon.$$
\end{claim}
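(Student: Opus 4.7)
\medskip

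\noindent\textbf{Proof plan.} The plan is to compute $\Pr\{L_{M,N}(j)=v\}$ exactly and then bound the total variation distance to the uniform distribution using a one-line variance estimate. Index leaves of $T_{sc}$ by their root-to-leaf paths: each $v\in[N]$ corresponds to a sequence $(c_0(v),\ldots,c_{d-1}(v))\in[M]^d$, where $c_i(v)$ is the label of the edge from $v$'s ancestor at level $i$ to its ancestor at level $i+1$ in $T_{sc}$. Recall that $L_{M,N}$ first picks a level $i\in\{0,\ldots,d-1\}$ uniformly, then a uniformly random internal node $u$ at level $i$ (there are $M^i$ of them), and then a uniformly random leaf $l_j$ in the subtree $T_{u_j}$ rooted at $u$'s $j$-th child. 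For $L_{M,N}(j)=v$ to hold conditioned on level $i$, we need $u$ to be the ancestor of $v$ at level $i$ (probability $1/M^i$), the $j$-th child $u_j$ to be the ancestor of $v$ at level $i+1$ (which is equivalent to $c_i(v)=j$), and $v$ to be the leaf sampled from the $M^{d-i-1}$ leaves of $T_{u_j}$ (probability $1/M^{d-i-1}$). Averaging over $i$ gives
\begin{equation*}
\Pr\{L_{M,N}(j)=v\}=\frac{1}{d}\sum_{i=0}^{d-1}\frac{\mathbf{1}[c_i(v)=j]}{M^{d-1}}=\frac{M\, X_v}{d\, N},\qquad\text{where } X_v:=|\{i:c_i(v)=j\}|.
\end{equation*}

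\medskip

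\noindent With this formula in hand, the total variation distance becomes
\begin{equation*}
\frac{1}{2}\sum_{v\in[N]}\Big|\Pr\{L_{M,N}(j)=v\}-\frac{1}{N}\Big|
=\frac{1}{2N}\sum_{v\in[N]}\Big|\frac{MX_v}{d}-1\Big|
=\frac{1}{2}\,\E_{v}\Big|\frac{MX_v}{d}-1\Big|,
\end{equation*}
where $v$ is drawn uniformly from $[N]$. For such $v$ the coordinates $c_0(v),\ldots,c_{d-1}(v)$ are i.i.d.\ uniform on $[M]$, so $X_v\sim\mathrm{Binomial}(d,1/M)$, $\E[MX_v/d]=1$, and $\Var(MX_v/d)=(M/d)(1-1/M)\leq M/d$. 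Applying Cauchy--Schwarz in the form $\E|Y|\leq\sqrt{\E Y^2}$ to $Y=MX_v/d-1$, we obtain
\begin{equation*}
\frac{1}{2}\,\E_v\Big|\frac{MX_v}{d}-1\Big|\leq\frac{1}{2}\sqrt{\frac{M}{d}}.
\end{equation*}
Substituting $d=\tfrac{3M}{\varepsilon^2}\ln(M/\varepsilon)$ gives $\sqrt{M/d}=\varepsilon/\sqrt{3\ln(M/\varepsilon)}\leq\varepsilon$ (for, say, $M/\varepsilon\geq 2$), which yields the claimed bound.

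\medskip

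\noindent\textbf{Where the work is.} There is no substantial obstacle: once the path representation of leaves is used and the indicator structure of the conditional probability is recognized, the rest is a variance bound for a binomial random variable. The Chernoff-type slack in the paper's choice $d=\Theta(M\varepsilon^{-2}\log(M/\varepsilon))$ is more than enough for this $L^1$ bound; the extra logarithmic factor presumably provides room for subsequent concentration arguments, but is not needed for this claim alone.
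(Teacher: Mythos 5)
Your proposal is correct, and its skeleton is the same as the paper's: you derive the exact formula $\Pr\{L_{M,N}(j)=v\}=\frac{M}{N}\cdot\frac{X_v}{d}$ via the root-to-leaf path representation (your $X_v$ is the paper's branch count $B(v,j)$), observe that for a uniformly random leaf $v$ the count is $\mathrm{Binomial}(d,1/M)$, and reduce the total variation distance to the expected deviation of this binomial from its mean $d/M$. The only divergence is in the last step: the paper bounds $\E_v\big|B(v,j)-\tfrac{d}{M}\big|$ by a Chernoff bound plus the trivial truncation $|B(v,j)-d/M|<d$, whereas you use $\E|Y|\le\sqrt{\Var(Y)}$, which is a bit cleaner, gives $\tfrac12\sqrt{M/d}$ directly, and makes explicit that the logarithmic factor in the paper's choice $d=\tfrac{3M}{\varepsilon^2}\ln(M/\varepsilon)$ is slack for this claim (only $d\gtrsim M/\varepsilon^2$ is needed); the paper's exponential tail is likewise not exploited elsewhere, so either argument suffices.
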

\begin{proof}
Consider a leaf $v$ in $M$-ary tree $T_{sc}$. Let $v(0),\dots, v(d) = v$ be the path from the root of the tree to $v$. For every $j\in [M]$, we count the number of times this path goes along the $j$-th branch of the tree. Namely, we let $B(v,j)$ be the number of nodes $v(i)$ such that $v(i+1)$ is the $j$-th child of $v(i)$. 

Recall that random function $L_{M,N}$ picks a random $t\in{0,\dots, d-1}$ and then selects a random node $u$ at depth $t$ in tree $T_{sc}$. If for this random $t$, $v(t+1)$ is the $j$-th child of $v(t)$, then $\Pr\{L_{M,N}(j) = v\mid t\} = M/N$ (because, in this case, $L_{M,N}(j) = v$ if two events occur: $u=v(t)$, and $v$ is randomly chosen in the subtree rooted at $v(t+1)$). Otherwise, $\Pr\{L_{M,N}(j) = v\mid t\} = 0$. Hence,
$$\Pr\{L_{M,N}(j) = v\} = \frac{M}{N}\cdot\frac{B(v,j)}{d}.$$
We have
$$\sum_{v\in[N]}\Big|\Pr\{L_{M,N}(j) = v\} - \frac{1}{N}\Big| = 
\sum_{v\in[N]}
\Big|\frac{M}{N}\cdot\frac{B(v,j)}{d}-\frac{1}{N}\Big|=
\frac{M}{d} \E_{v\in[N]}
\Big|B(v,j)-\frac{d}{M}\Big|
.$$
Consider a random leaf $v$ of $T_N$. The path from the root to $v$ is a random path. Every next vertex on this path is randomly chosen among the children of the current vertex. Thus, the probability that $v(i+1)$ is the $j$-th child of $v(i)$ is $1/M$ for every $i$. Consequently, $B_{v,j}$ is the sum of $d$ independent Bernoulli random variables with parameter $1/M$. By the Chernoff bound, 
$$
\Pr\Big\{\big|B(v,j)-\frac{d}{M}\big|\geq \varepsilon\cdot \frac{d}{M}\Big\}
\leq 
2e^{-\frac{\varepsilon^2(d/M)}{3}} < \frac{\varepsilon}{M}.
$$
Inequality $|B(v,j) - d/M| < d$ holds always. Thus,
$$
\E_{v\in[N]}
\Big|B(v,j)-\frac{d}{M}\Big|\leq
\varepsilon\cdot \frac{d}{M} + 
\frac{\varepsilon}{M} \cdot d = \frac{2d}{M}\varepsilon.
$$
\end{proof}

We now finish proof of Lemma~\ref{lem:exists-L}. Random function $L$ satisfies the first condition of Lemma~\ref{lem:exists-L} because each $L(u)$ is a random variable with uniform distribution in $[N]$. Function $L_{M,N}$ maps every set of cousins in $T_{N}$ to cousins in $T_{M}$. Thus, for all cousins $u_1,\dots u_k$ in $T_M$, we have
$$\Pr\big\{\cousins(L(u_1),\cdots, L(u_k))\big\} \geq
\Pr\big\{\cousins(L_{M,N}(u_1),\cdots, L_{M,N}(u_k))\big\} - \varepsilon k =
1 - \varepsilon k.$$
Here, we used that $\Pr\{L_{M,N}(u) \neq L(u)\} \leq \varepsilon$ for all $u\in[M]$.
This proves the second condition of function $L$ and completes the proof of Lemma~\ref{lem:exists-L}.
\end{proof}

We now verify that the random function $L$ from the previous lemma can be used in the dictatorship test. Specifically, we prove that the second item of Lemma~\ref{lem:exists-L} guarantees that $\val(\varphi,\calI_{red})\geq 1 - O(\varepsilon)$. After that, we smooth $L$ and plug it into the dictatorship test. The smooth variant of $L$ returns a completely random mapping into $[N]$ with a small probability $\eta'$ and with the remaining probability $1-\eta'$, it returns $L$.

\begin{lemma}\label{lem:L-works}
Let $L$ be the  random map $L:[M]\to [N]$ from Lemma~\ref{lem:exists-L}. Consider the dictatorship test instance
$\calI_{test}$  constructed using $L$. Let $\calI_{test} = h_{ord\to phy}(\calI_{test})$ be the corresponding instance of phylogenetic CSP. 
Finally, let $\varphi$ be a solution defined as $\varphi:\mathbf{z}\mapsto \mathbf{z}_j$. 
Then,
$$\val(\varphi,\calI_{red})\geq 
\min_{
\substack{
a_1,\dots,a_k\in [M]\\
\cousins(a_1,\dots,a_k) = 1}}
\Pr\big\{\cousins(L(a_1),\dots, L(a_k)) = 1\big\} - \varepsilon k.$$
\end{lemma}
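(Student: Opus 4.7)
The plan is to follow the sampling procedure of a random constraint in $\calI_{red}$ and argue that the dictator $\varphi$ satisfies it whenever two favorable events occur: (i) the $\varepsilon$-noise leaves untouched all $k$ coordinates of $\mathbf{z}_{i_1},\ldots,\mathbf{z}_{i_k}$ that the $j$-th dictator actually reads, and (ii) the $j$-th independent copy of $L$ preserves the cousin relation on the relevant inputs. First I would unfold the construction: by the definition of $h_{ord\to phy}$, every constraint of $\calI_{red}$ is a phylogenetic predicate $f(\mathbf{z}_{i_1},\ldots,\mathbf{z}_{i_k})$, where $(\mathbf{z}_1,\ldots,\mathbf{z}_m)$ is obtained in the dictatorship test from $m$ cousins $s_1,\ldots,s_m$ in the $m$-ary template tree $T_M$ of $\widetilde{\calI}^o_{gap}$ via $\mathbf{z}_i=L_s(s_i)$ coordinatewise with independent copies of $L$ (and independent $\varepsilon$-noise per coordinate), while $(i_1,\ldots,i_k)$ are cousins in the $k$-ary template of the inner gap instance $\calI^f_{gap}$.

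The crux of the argument is a structural observation: the tuple $(s_{i_1},\ldots,s_{i_k})$ is always a cousin tuple in $T_M$ when $T_M$ is viewed as a perfect $k$-ary tree. Using that $m=k^{a}$ and $M=m^{b}$, the $m$-ary tree $T_M$ admits a canonical $k$-ary refinement in which each $m$-ary node $u$ is replaced by a $k$-ary subtree of depth $a$ whose leaves are exactly the original $m$ children $u_1,\ldots,u_m$ of $u$. Identifying the template of $\calI^f_{gap}$ with the $k$-ary subtree sitting below the outer-selected node $u$, the inner cousin indices $(i_1,\ldots,i_k)$ become $k$ leaves lying in $k$ distinct child subtrees of some internal node $v$ below $u$; since $s_{i_t}$ is a uniformly random leaf below $u_{i_t}$, the tuple $(s_{i_1},\ldots,s_{i_k})$ ends up as a cousin tuple in $T_M$ (as a $k$-ary tree) with LCA $v$. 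I expect this alignment between the inner template and the $k$-ary refinement of $T_M$ to be the main thing to get right; the remainder is formal.

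From there everything composes. By the second item of Lemma~\ref{lem:exists-L}, conditional on $(s_1,\ldots,s_m)$, the image tuple $(L^{(j)}(s_{i_1}),\ldots,L^{(j)}(s_{i_k}))$ is a cousin tuple in $T_N$ with probability at least $P_{\min}:=\min\Pr\{\cousins(L(a_1),\ldots,L(a_k))=1\}$, where the minimum ranges over cousin tuples $(a_1,\ldots,a_k)$ in $T_M$. Independently of $L^{(j)}$, the $\varepsilon$-noise leaves all $k$ relevant coordinates untouched with probability $(1-\varepsilon)^k\geq 1-k\varepsilon$; on this joint event $\varphi(\mathbf{z}_{i_t})=L^{(j)}(s_{i_t})$, and Lemma~\ref{lem:cousins-sat} applied to the binary tree $T'$ obtained from $T_N$ by inserting the satisfying pattern of $f$ at every internal node gives $f(\varphi(\mathbf{z}_{i_1}),\ldots,\varphi(\mathbf{z}_{i_k}))=1$. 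Averaging over a uniformly random constraint of $\calI_{red}$ then yields
\begin{equation*}
\val(\varphi,\calI_{red})\;\geq\;(1-\varepsilon)^k\,P_{\min}\;\geq\;P_{\min}-k\varepsilon,
\end{equation*}
which is exactly the bound claimed in the lemma.
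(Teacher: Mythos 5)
Your proposal is correct and follows essentially the same route as the paper's proof: lose at most $\varepsilon k$ from the noise missing the $k$ relevant coordinates, observe that $s_{i_1},\dots,s_{i_k}$ are always cousins in $T_M$ so the second property of Lemma~\ref{lem:exists-L} applies to $L^{(j)}$, and invoke Lemma~\ref{lem:cousins-sat} to conclude the payoff equals $1$ whenever the images are cousins. The only differences are cosmetic: you spell out the $k$-ary refinement argument behind the cousin alignment of the inner and outer templates (which the paper asserts in a single sentence), and you account for the noise multiplicatively via $(1-\varepsilon)^k P_{\min}\ge P_{\min}-\varepsilon k$ rather than the paper's additive union bound.
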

\begin{proof}
Observe that the value of solution $\varphi$ equals
$$
\val(\varphi,\calI_{red}) = 
\E_{\mathbf{z}_1,\dots,\mathbf{z}_m}\E_{(i_1,\dots,i_k)} f(\mathbf{z}^{j}_{i_1},\dots, \mathbf{z}^{j}_{i_k}),
$$
where $(\mathbf{z}_1,\dots,\mathbf{z}_m)$ is a random constraint for the the ordering payoff function $o$ returned by the dictatorship test; and $(i_1,\dots,i_k)$ is a random constraint in the 
copy of $\calI_{gap}^f$ created by the reduction 
$h_{ord\to phy}$ for 
constraint $(\mathbf{z}_1,\dots,\mathbf{z}_m)$. 

\medskip

The probability that one of 
$\mathbf{z}^{j}_{i_1},\dots, \mathbf{z}^{j}_{i_k}$ is affected by $\varepsilon$-noise and replaced by a random value at the third step of the dictatorship test is at most $\varepsilon k$, since each of the values is changed with probability at most $\varepsilon$. If $\mathbf{z}^{j}_{i_1},\dots, \mathbf{z}^{j}_{i_k}$ are not changed at the third step, then
$$(\mathbf{z}^{j}_{i_1},\dots, \mathbf{z}^{j}_{i_k}) = 
(L^{(j)}(s_{i_1}),\dots,L^{(j)}(s_{i_k})),$$
where $(s_1,\dots, s_m)$ is a random constraint 
in $\tilde{\calI}^o_{gap}$
selected at the first step of the dictatorship test. Consequently,
$$\val(\varphi,\calI_{red})\geq
\E[f(L^{(j)}(s_{i_1}),\dots,L^{(j)}(s_{i_k}))] - \varepsilon k,
$$
where the expectation on the right hand side is taken over the random choice of 
$s_1,\dots s_m$, $i_1,\dots, i_k$, and random realization of $L^{(j)}$. Variables in every constraint in the gap instance 
$\calI_{gap}^f$ are cousins; $(s_{i_1},\dots,s_{i_k})$ is a constraint in the copy of $\calI_{gap}^f$
created for constraint
$(s_1,\dots, s_m)$.
Thus, $s_{i_1},\dots,s_{i_k}$ are also \emph{cousins}. Hence,
$$
\val(\varphi,\calI_{red})\geq
\min_{
\substack{
a_1,\dots,a_k\in [M]\\
\cousins(a_1,\dots,a_k) = 1}}
\E[f(L^{(j)}(a_{1}),\dots,L^{(j)}(a_{k}))] - \varepsilon k.$$
By Lemma~\ref{lem:cousins-sat},  $f(L^{(j)}(a_{1}),\dots,L^{(j)}(a_{k})) = 1$, if $L^{(j)}(a_{1}),\dots,L^{(j)}(a_k)$ are cousins. Therefore, 
$$
\val(\varphi,\calI_{red})\geq
\min_{
\substack{
a_1,\dots,a_k\in [M]\\
\cousins(a_1,\dots,a_k) = 1}}
\Pr\{\cousins(L^{(j)}(a_{1}),\dots,L^{(j)}(a_{k}))\} - \varepsilon k.
$$
This concludes the proof of Lemma~\ref{lem:L-works}, because $L^{(j)}$ has the same distribution as $L$.
\end{proof}
\section{Random Solutions for Ordinary CSPs on the Gap Instance are Almost Optimal}\label{sec:ordinary-CSPs-gap}
In this section, we prove Lemma~\ref{lem:ordinary-CSPs-gap}. Loosely speaking, this lemma says that every solution to an \emph{ordinary} CSP instance $\calI_{gap}^f$ has value at most $\alpha + \varepsilon$, where $\alpha$ is the optimal \emph{biased} random assignment for this ordinary CSP:
$$\alpha = 
\max_{\rho} \E_{x_i\sim \rho}\big[
f(x_1,\dots,x_k)\big].
$$
See Section~\ref{sec:proof-overview} for details.

We first examine a variant of Lemma~11.3 from the paper by \citet*{GHMRC11}. Instance $\calI_{gap}^f$ is defined on a perfect $k$-ary tree $T$ of depth $d$. Define a probability distribution $\calP$ on the internal nodes of $T$. To draw a random vertex from $\calP$, we first pick a random leaf $u$ of $T$ (with the uniform distribution). We denote the path from the root to $u$ by $u(0),\ldots,u(d-1), u(d)=u$. Then, we pick a random $t$ from $0$ to $d-1$ and output $u(t)$. Note that $u(t)$ has exactly the same distribution as the one we used in the definition of random map $L_{k,m}$ and instance $\calI_{gap}^f$ in Section~\ref{sec:proof-overview}. 

We now consider a  solution $\varphi$ for an \emph{ordinary} CSP with payoff function $f$. Let $\mu_i(T_u)$ be the fraction of leaves in subtree $T_u$ (rooted at $u$) having label $i$ (i.e., leaves $l$ in $T_u$ with $\varphi(l) = i$). Then, the following lemma holds.

\begin{lemma}[cf. Lemma 11.3 in \cite{GHMRC11}]\label{lem:mu-parent-mu-children}
$$\E_{u,t\sim \calP}\Big[\frac{1}{k}\sum_{y\in \child(u(t))}
\sum_{i=1}^q
\big|\mu_i(T_y)-\mu_i(T_{u(t)})\big|\Big]
\leq
\sqrt{\frac{2\log_2 q}{d}}.$$
\end{lemma}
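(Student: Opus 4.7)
The plan is to recast the bound as an inequality about KL-divergence along a random root-to-leaf path, and then apply Pinsker's inequality together with the chain rule for entropy. This is an information-theoretic reformulation in the spirit of the ``somewhat simpler approach'' advertised earlier in the paper.

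Step 1 (setup). Let $L$ be a uniformly random leaf of $T$, and let $L_0,L_1,\ldots,L_d=L$ be its ancestors, with $L_t$ at depth $t$. Since $T$ is a perfect $k$-ary tree, conditioned on $L_t=u$ the next ancestor $L_{t+1}$ is uniform on $\child(u)$, and the pair $(L_t,t)$ with $t$ uniform in $\{0,\ldots,d-1\}$ has the same joint distribution as the pair $(u(t),t)$ drawn under $\calP$. The label $\varphi(L)$ has conditional distribution $\mu(T_u)$ given $L_t=u$ and $\mu(T_y)$ given $L_{t+1}=y$. In particular $\mu_i(T_u)=\tfrac{1}{k}\sum_{y\in\child(u)}\mu_i(T_y)$, so $\tfrac12\sum_i|\mu_i(T_y)-\mu_i(T_u)|$ is the total variation distance between $\mu(T_y)$ and $\mu(T_u)$.

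Step 2 (Pinsker and first Jensen). By Pinsker's inequality, $\sum_i|\mu_i(T_y)-\mu_i(T_u)|\leq\sqrt{2D(\mu(T_y)\|\mu(T_u))}$. Averaging over $y\in\child(u)$ and using concavity of the square root gives
\begin{equation*}
\frac{1}{k}\sum_{y\in\child(u)}\sum_{i=1}^q\bigl|\mu_i(T_y)-\mu_i(T_u)\bigr|
\;\leq\;\sqrt{\frac{2}{k}\sum_{y\in\child(u)}D(\mu(T_y)\|\mu(T_u))}
\;=\;\sqrt{2\,I(\varphi(L);L_{t+1}\mid L_t=u)},
\end{equation*}
where the equality uses that the conditional mutual information equals the expectation over $y\sim L_{t+1}|L_t=u$ of the KL-divergence between the conditional distribution of $\varphi(L)$ given $(L_{t+1},L_t)$ (which is $\mu(T_y)$) and its distribution given $L_t$ (which is $\mu(T_u)$).

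Step 3 (second Jensen and chain rule). Taking expectation over $(u,t)\sim\calP$ and applying Jensen once more,
\begin{equation*}
\E_{u,t\sim\calP}\!\left[\sqrt{2\,I(\varphi(L);L_{t+1}\mid L_t=u)}\right]
\;\leq\;\sqrt{\tfrac{2}{d}\sum_{t=0}^{d-1}I(\varphi(L);L_{t+1}\mid L_t)}.
\end{equation*}
Because $L_t$ is determined by $L_{t+1}$, the chain rule gives $I(\varphi(L);L_{t+1}\mid L_t)=H(\varphi(L)\mid L_t)-H(\varphi(L)\mid L_{t+1})$, so the sum telescopes to $H(\varphi(L)\mid L_0)-H(\varphi(L)\mid L_d)=H(\varphi(L))\leq\log_2 q$, using that $L_0$ is the deterministic root and $L_d=L$ determines $\varphi(L)$. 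Substituting yields the claimed bound $\sqrt{(2\log_2 q)/d}$.

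The only conceptual step is recognizing the averaged $L_1$ distance as a conditional KL and writing it as a conditional mutual information; after that the bookkeeping is routine. The main care needed is verifying that $\calP$ induces the ``uniform leaf, uniform depth'' joint law and that children's subtree sizes are equal: these rely on $T$ being a perfect $k$-ary tree, without which the identity $\mu_i(T_u)=\tfrac{1}{k}\sum_y\mu_i(T_y)$ would require weights and the chain-rule identification of the telescoping sum as $H(\varphi(L))$ would break.
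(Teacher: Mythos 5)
Your proof is correct, and it reaches the stated bound $\sqrt{2\log_2 q/d}$ with all steps valid: the identification of $(u(t),t)$ with $(L_t,t)$ for a uniform leaf $L$ uses that the tree is perfect, the averaged $\ell_1$ deviation is exactly $2\,\E_{y}[\delta_{TV}(\mu(T_y),\mu(T_u))]$ with mixture $\mu(T_u)=\tfrac1k\sum_y\mu(T_y)$, so the mixture-versus-components form of conditional mutual information applies, and the telescoping $\sum_{t}I(\varphi(L);L_{t+1}\mid L_t)=H(\varphi(L))\le\log_2 q$ is legitimate because $L_t$ is a function of $L_{t+1}$ and $L_d$ determines $\varphi(L)$. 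The paper's proof is in the same information-theoretic spirit but organized differently: it fixes one label (``red'' leaves), performs a change of measure from the uniform-leaf distribution $\calP$ to a red-leaf distribution $\calQ$ (which produces the factor $\mu_i(T)$), compares the \emph{child-selection} distributions $\calA_x$ (uniform on children) and $\calB_x$ (proportional to red leaves) rather than the label distributions $\mu(T_y)$ vs.\ $\mu(T_u)$, telescopes via the chain rule for the entropy of a random red leaf, and only at the end aggregates over the $q$ labels using Jensen applied to $t\mapsto t\sqrt{\log_2(1/t)}$. Your route handles all labels at once through $H(\varphi(L))\le\log_2 q$, avoiding both the reweighting to $\calQ$ and the final concavity step, so it is shorter and more streamlined; what the paper's per-label formulation buys is the intermediate refined bound $\mu_i(T)\sqrt{2\log_2(1/\mu_i(T))/d}$ for each individual label, which is stronger when a label is rare, though it is not needed for the lemma as stated. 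Both arguments ultimately rest on the same three tools (chain rule, Pinsker, Jensen) and give the same constant.
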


A variant of this lemma was proved by~\cite{GHMRC11}. Their upper bound is a little worse than ours. However, we will only use that the upper bound tends to $0$ as $d$ goes to infinity. We provide a proof of this lemma in Section~\ref{sec:proof-lemma-entropy}. Arguably, our proof is more intuitive than the original proof. However, it is also longer.
We now use Lemma~\ref{lem:mu-parent-mu-children} to prove Lemma~\ref{lem:ordinary-CSPs-gap}.

\begin{proof}[Proof of Lemma~\ref{lem:ordinary-CSPs-gap}] 
Let $\varphi$ be a solution for $\calI_{gap}^f$ where the depth $d$ of the tree $T$ (see above) is sufficiently large. Specifically, $d > (k/\varepsilon)^2\log_2 q$.
The value of $\varphi$ on instance 
$\calI_{gap}^f$
equals 
$$\E[f(
\varphi(L_{k,m}(1)),\dots,
\varphi(L_{k,m}(k))]
$$
because payoff functions in $\calI_{gap}^f$ are defined on $k$-tuples of leaves $(L_{k,m}(1),\dots,L_{k,m}(1))$.
Define another random map $\tilde L_{k,m}$. This function works as $L_{k,m}$ except after choosing a random node $u(t)$, it maps all numbers $1,\dots, k$ randomly into subtree rooted at $u(t)$. For each choice of $u(t)$, the total variation 
distance between the conditional distributions of $L_{k,m}(j)$ and $\tilde L_{k,m}(j))$ equals
$$
\frac{1}{2}\sum_{i=1}^q
\big|\mu_i(T_{u_j(t)})-\mu_i(T_{u(t)})\big|.
$$
Thus, we can couple $\varphi(L_{k,m}(j))$ and $\varphi(\tilde L_{k,m}(j))$ in such a way that (see Lemma~\ref{lem:opt-coupling})
$$\Pr\big(\varphi(L_{k,m}(j)) \neq 
\varphi(\tilde L_{k,m}(j))\mid u(t)\big) = \frac{1}{2}\sum_{i=1}^q
\big|\mu_i(T_{u_j(t)})-\mu_i(T_{u(t)})\big|.
$$
Then,
$$\Pr\big(\exists j\text{ s.t. } \varphi(L_{k,m}(j)) \neq 
\varphi(\tilde L_{k,m}(j))\mid u(t)\big) = 
\frac{1}{2}\sum_{j=1}^k
\sum_{i=1}^q
\big|\mu_i(T_{u_j(t)})-\mu_i(T_{u(t)})\big|.
$$
Finally,
$$\Pr\big(\exists j\text{ s.t. } \varphi(L_{k,m}(j)) \neq \varphi(\tilde L_{k,m}(j))\big) = \frac{1}{2}\E_{u,t\sim\calP}\bigg[
\sum_{i=1}^q
\sum_{j=1}^k
\big|\mu_i(T_{u_j(t)})-\mu_i(T_{u(t)})\big|\bigg].
$$
By Lemma~\ref{lem:mu-parent-mu-children}, the right hand side is upper bounded by $k\sqrt{\dfrac{\log_2 q}{2d}}$.
Thus,
$$\E\big[f(
\varphi(L_{k,m}(1)),\dots,
\varphi(L_{k,m}(k))\big]\leq 
\E\big[f(\varphi(\widetilde L_{k,m}(1)),\dots,
\varphi(\widetilde 
L_{k,m}(k))\big] + k\sqrt{\dfrac{\log_2 q}{2d}}.
$$
Here we used that $f$ is upper bound by $1$. Now, observe that when we use function $\widetilde 
L_{k,m}(k))$, we essentially do a biased random assignment. Namely, we first pick $u(t)$ and then randomly and independently pick labels for $x_1,\dots,x_k$ in the subtree rooted at $u(t)$. It is important that after $u(t)$ is chosen all variables $x_1,\dots, x_k$ are i.i.d. Thus, the first term is upper bounded by $\alpha$. We get
$$\E\big[f(
\varphi(L_{k,m}(1)),\dots,
\varphi(L_{k,m}(k))\big] 
\leq \alpha + k\sqrt{\dfrac{\log_2 q}{2d}}\leq \alpha + \varepsilon.
$$
This concludes the proof of Lemma~\ref{lem:ordinary-CSPs-gap}.
\end{proof}

\section{Generalizations}
\subsection{Phylogenetic CSPs with Multiple Payoff Functions}
\label{sec:multipl-payoff}

We now discuss phylogenetic CSPs with multiple payoff functions $f_1,\dots, f_r$. We assume that they are scaled so that the maximum payoff of each $f_i$ is $1$. First, consider a special case of the problem when the total weight of constraints of every type is prescribed in advance. Namely, suppose that every instance must have $\mu_i$ weight of constraints for payoff function $f_i$ i.e.
$\weight(C_{f_i})=\mu_i$. This variant of the problem is essentially equivalent to the problem with a composite payoff function $f$ defined as follows:
$$
f^*_{\mu}\big(
x^{(1)}_1,\dots, x^{(1)}_k,
x^{(2)}_1,\dots, x^{(2)}_k,
\dots,
x^{(r)}_1,\dots, x^{(r)}_k
\big)
= 
\mu_i \sum_{i=1}^r f_i (x^{(i)}_1,\dots, x^{(i)}_k).
$$
More precisely, the phylogenetic CSP problem with payoff function $f^*_{\mu}$ is a special case of the problem with functions $\{f_i\}$ and prescribed weights $\mu_i$. This is the case simply because $f^*_{\mu}$ can be expressed as the sum of functions $f^{i}$. For every $\mu$, we know the hardness of this problem. It is defined by the approximation threshold
$$\alpha_{opt}(f^*_{\mu}) = 
\sup_{\rho}\alpha_{\rho}(f^*_{\mu}) = 
\sup_{\rho}\sum_{i=1}^r \mu_i\,\alpha_{\rho}(f_i).
$$
Let $\mu^* = \argmin_{\mu}(f^*_{\mu})$. Our phylogenetic problem with functions $f_1,\dots,f_r$ is at least as hard as
$f^*_{\mu^*}$. Consequently, for almost satisfiable instances of phylogenetic CSPs with payoff functions $f_1,\dots, f_r$, it is NP-hard to find a solution of value at least $\alpha_{opt}(f_1,\dots,f_r) + \varepsilon$, where
$$
\alpha_{opt}(f_1,\dots,f_r) = 
\alpha_{opt}(f^*_{\mu^*}) = \sup_{\rho}\sum_{i=1}^r \mu_i\,\alpha_{\rho}(f_i).
$$
Note that approximation 
$\alpha_{opt}(f_1,\dots,f_r)-\varepsilon$
can be achieved. The algorithm can first find the ratios $\mu_i$ and the corresponding distribution $\rho$ (for example, we discretize possible values of $\mu$ and store corresponding $\rho$ in the precomputed table). Furthermore, instead of finding the best $\rho$ for the current weights $\mu$, the algorithm can pick a measure $\rho$ at random from a list of measures. This follows from von Neumann's (\citeyear{minmax}) minimax theorem.

The reader may ask if we can use the same distribution $\rho$ for all instances of phylogenetic CSP $\Gamma$ with several payoff function . It turns out that the answer is no. Consider payoff functions \emph{one split to the left} and \emph{one split right to the right} (see Figure~\ref{fig:split-one-to-left-right}). For every fixed distribution $\rho$, we can find an instance of the problem for which the biased randomized assignment satisfies exponentially small in $k$ fraction of all constraints. However, if we first decide to satisfy only one type of predicates -- \emph{one split to the left} and \emph{one split right to the right} -- and pick the appropriate $\rho$ for it, then we can satisfy $1/2-\varepsilon$ fraction of all constraints.

\subsection{Higher Arity Trees}\label{sec:higher-arity-tree}

In this paper, we proved our main hardness result for \emph{binary} phylogenetic trees. However, the same hardness result also holds for trees of an arbitrary fixed arity $r \geq 2$. To make our proof work for $r$-ary trees, we need to adjust the definitions of the \emph{coarse solution} and bracket predicates, and then slightly modify the proof of Lemma~\ref{lem:good-cs}. Specifically, the coarse solution must satisfy the following conditions:
\begin{enumerate}
\item[1.\ ] (coarse) tree $T$ has at most $q$ leaves;
\item[2.\ ] at most $\varepsilon |V|$ distinct variables have the same color; and
\item[$3'$.] moreover, every color class is the union of at most $2r$ groups of
consecutive variables in ordering $\pi$.
\end{enumerate}
The bracket predicates we need to use for $r$-ary trees have form $[u\to a, v\to b, w\to c]$. This predicate indicates that $u$, $v$, and $w$ must be in subtrees $a$, $b$, and $c$ of the $\lca(u,v,w)$.

Finally, the algorithm from Lemma~\ref{lem:good-cs} should use more than four labels at every step of recursion. When node $u$ is processed, it created $r$ groups of labels, one group for each of $u$'s children. In turn, every group has $r(r-1)$ labels. So, the total number of labels is $r^2(r-1)$.
Suppose that yet unlabeled leaf $l$ belongs to the subtree rooted at the $a$-th child of $u$. Assume that the top processed node in that tree is $v$. Then, $l$ receives label $(a,b,c)$ where $b$ and $c$ are indices of subtrees of $\lca(v,l)$, where
$v$ and $l$ belong to. If there are no processed nodes in the subtree rooted at the $a$-th child of $u$, all leaves in that tree receive label $(a,0,0)$.

\section{Tree Patterns and Bracket Predicates}~\label{sec:cl:bracket-predicates}
In this section, we prove (1) that every phylogenetic payoff function can be defined by a list of pattern and (2) 
every pattern can be expressed as a conjunction of bracket predicates mentioned (Lemma~\ref{lem:bracket-predicates} in Section~\ref{sec:prelim}).

\begin{claim}
Every phylogenetic payoff function can be defined by a list of patterns (with a payoff assigned to each pattern).
\end{claim}
\begin{proof}
Consider a phylogenetic function $f$ of arity $k$. Let $\calP$ be the set of all non-isomorphic irreducible  patterns with $k$ leaves labeled by $x_1,\dots,x_k$. This set is finite because each irreducible tree has $2k-1$ nodes (it is a full binary tree with $k$ leaves). See Section~\ref{sec:prelim} for the definition of homeomorphic trees and reductions. Now for every pattern $P$ with leaves $x_1,\dots, x_k$ in $\calP$,  we compute $f(P,x_1,\dots,x_k)$ (the value of $f$ on pattern $P$)
and assign it to pattern $P$. Finally, we remove all patterns with payoff $0$.

We now prove that the obtained patterns define function $f$. Consider an arbitrary tree $T$ and $k$ leaves $u_1,\dots,u_k$. This tree with with leaves $u_1,\dots,u_k$ can be reduced to some irreducible pattern $P^*$. This pattern $P^*$ with leaves $u_1,\dots,u_k$ and tree $T$ with leaves $u_1,\dots,u_k$ are homeomorphic. Thus,
$f(T, u_1,\dots,u_k) = 
f(P^*, u_1,\dots,u_k)$. Since $P^*$ is irreducible, it must be in the list $\calP$. The value we assign to $P^*$ is $f(P^*, u_1,\dots,u_k)$. Hence, the function defined by the list of pattern obtained above equals $f$.
\end{proof}

To prove Lemma~\ref{lem:bracket-predicates}, we need the following claim.
\begin{claim}\label{cl:diff-nonisomorphic-patterns}
Consider two irreducible non-isomorphic patterns $P_1$ and $P_2$ with $k$ leaves each labeled by $x_1,\dots,x_k$. Then, there exists a bracket predicate such that $P_1$ satisfies this predicate, but $P_2$ does not.
\end{claim}
\begin{proof}
We prove this claim by induction on $k$. For $k=1$, there is only pattern, so $P_1$ must be isomorphic to $P_2$. Suppose $k\geq 2$. Consider the left and right subtrees of $P_1$ and $P_2$: 
$P_1^{left}$,
$P_1^{right}$,
$P_2^{left}$, and
$P_2^{right}$.
Note that each tree $P_1^{left}$,
$P_1^{right}$,
$P_2^{left}$, and
$P_2^{right}$ must be non-empty because $P_1$ and $P_2$ are irreducible. Since $P_1$ and $P_2$ are not isomorphic, one of the two pairs $P_1^{left}$ and
$P_1^{right}$ or 
$P_2^{left}$ and
$P_2^{right}$ must be non-isomorphic. Suppose without loss of generality that 
$P_1^{left}$ and
$P_1^{right}$ are non-isomorphic. Then, we consider two cases. 

\medskip

I. If $P_1^{left}$ contains the same set of leaves as $P_2^{left}$ (e.g. $\{x_3, x_7 , x_8\}$), then we apply the inductive hypothesis to 
$P_1^{left}$ and $P_2^{left}$ and obtain the desired bracket predicate satisfied by $P_1^{left}$ but not $P_2^{left}$. It is also satisfied by $P_1$ but not by $P_2$.

\medskip

II. Suppose now that $P_1^{left}$ and $P_2^{left}$ contain different sets of variables (e.g., $P^{left}_1$ contains $\{x_3, x_7 , x_8\}$ but $P^{left}_2$ contains $\{x_1, x_7 , x_8\}$). If $P_1^{left}$ has a variable $x_i$ which is not in $P_2^{left}$, and $P_2^{left}$ has a variable $x_j$ which is not in $P_1^{left}$, then $P_1$ satisfies $[x_i<x_j]$ but $P_2$ does not. Otherwise, the set of variables in $P_1^{left}$ must be a proper subset of variables in 
$P_2^{left}$ or vice versa. 
Note that if the set of variables in $P_1^{left}$ is a proper subset of variables in 
$P_2^{left}$, then
the set of variables in $P_2^{right}$ is a proper subset of variables in 
$P_2^{right}$. In this case,
let $x_a$ be a common variable in $P_1^{left}$ and $P_2^{left}$, $x_c$ be a common variable in $P_1^{right}$ and $P_2^{right}$, $x_b$ be common variable between $P_1^{right}$ and $P_2^{left}$. We have that $P_1$ satisfies the predicate 
$[x_a < x_b,x_c]$ but $P_2$ does not. The case when the set of variables in $P_2^{left}$ is a proper subset of variables in 
$P_1^{left}$ is handled similarly.
\end{proof}

\noindent\textbf{Lemma~\ref{lem:bracket-predicates}.} \emph{Every pattern can be expressed as a conjunction of bracket predicates.}
\begin{proof}
Let $P$ be a given (ordered, binary) tree pattern on $k$ leaves. We create all bracket constraints $[x_a< x_b]$, $[x_a, x_b < x_c]$, and $[x_a <  x_b, x_c]$ that are satisfied in $P$. We show that the conjunction  of all these predicates define the pattern $P$.

I. If tree $T$ with leaves $u_1,\dots, u_k$ matches pattern $P$ with leaves $x_1,\dots, x_k$, then $T$ must satisfy all generated bracket constraints because
$P$ and $T$ are homeomorphic trees and reductions defined in ~Section~\ref{sec:prelim} preserve the value of every bracket predicate.

II. We now show that if $T$ with leaves $u_1,\dots, u_k$ does not match pattern $P$ with leaves $x_1,\dots, x_k$, then there there is at least one pattern in the description of $P$ that does not match $u_1,\dots, u_k$ in $T$. We reduce $T$ with leaves $u_1,\dots, u_k$ to an irreducible tree $P'$ with leaves $u_1,\dots, u_k$. Leaves $u_1,\dots, u_k$ in this tree or pattern $P'$ satisfy the same set of bracket predicates as in $T$. By Claim~\ref{cl:diff-nonisomorphic-patterns}, there exists a bracket predicate that is satisfied in $P$ but not in $P'$. The same predicate is not satisfied in $T$.

\end{proof}

\section{Example when Uniform Random Assignment Fails}\label{sec:examples}
In Figure~\ref{fig:2-caterpillar-binary-tree}, we provide an example of a phylogenetic predicate of $2k$ variables. If we use a biased random assignment algorithm which assigns variables to the left and right subtrees with fixed probabilities 
$p_{left}$ and $p_{right}$, then we will satisfy an exponentially small in $k$ fraction of all predicates. 

Instead, we should split variables with probability $50\%/50\%$ in the root $r$ of the tree. Then, in each vertex $u$ in the left subtree of $r$, we will assign variables to the left part with probability 
$1-\delta$ and right part with probability $\delta$. We do the opposite in the right subtree of $r$. If $\delta$ is sufficiently small, then the probability that we satisfy this predicate is almost the same as the probability that we split the variables into two equal groups in the root, which equals $\binom{2k}{k}/2^k=\Omega(1/\sqrt{k})$.

\section{Conclusion}

Here we studied a large class of problems that have been studied in various communities that concern how to find hierarchical representation of data, when given as input a collection of local constraints among $n$ data points. Specifically, the input is a set of local information on $k$ items of interest (e.g., species of animals, documents, images etc.) and the goal is to aggregate it into a global hierarchy on the whole dataset of size $n$ that closely agrees with the local information. The most basic case is when the input contains triplet constraints that give information about the relative similarity between 3 points $a,b,c$; such triplet queries are especially useful in crowdsourcing, databases, metric learning, logic, and computational biology. Furthermore, there are various other objectives that have been studied depending on the types of input information that is allowed and/or the properties required of the final hierarchy. Overall, the corresponding problems form a class of constraint satisfaction problems (CSPs) over hierarchies, that are called Phylogenetic CSPs and have been formally studied in the algebraic and logic communities. We note that many of the problems over hierarchies resemble at a high-level analogous formulations of well-motivated problems in the (flat) clustering and ranking literature, e.g., Correlation Clustering, Maximum Acyclic Subgraph, Betweenness etc.

Even though Phylogenetic CSPs have been studied for more than four decades, their approximability was not well-understood. The main result in the paper is that Phylogenetic CSPs are approximation resistant, meaning that they are hard-to-approximate better than a (biased) random assignment. This generalizes previously-known results for ordering CSPs, extends the definition of approximation resistance (to also allow for non-uniform randomized assignments) and it significantly augments the list of approximation resistant predicates by pointing to a large family of hard problems.

\bibliographystyle{abbrvnat}
\bibliography{references.bib}
\appendix
\section{History of the Problems and Further Related Work}

Representing data as a tree is useful across various domains in order to describe the fine-grained relations between items of interest, or to visualize their treelike structure (e.g., in large networks) or the evolutionary history, e.g., for different species in taxonomy, and in natural languages/manuscripts in linguistics. 

The problems considered here are old problems going back to more than four decades ago, to the original work of~\citet*{aho1981inferring} who wanted to understand how to build a hierarchical clustering given ancestry relationships for the leaves. In their paper titled ``Inferring a Tree from Lowest Common Ancestors with an Application to the Optimization of Relational Expression'' the explain how this seemingly unrelated problem of aggregating triplets (triplet reconstruction) has important applications in the area of relational databases. Since then problems finding hierarchical representations on data has been been studied in various communities, as we summarize below:

\begin{itemize}
    \item Databases, Logic and Algebra: ~\cite{aho1981inferring} gave the first algorithm to aggregate triplet constraints that finds a tree that satisfies all of them, if such a tree exists. Interestingly, similar algorithmic ideas were considered by~\cite{steel1992complexity} motivated by applications in computational biology. Generalizations of the Triplet and Quartet Reconstruction problems have been intensively studied in the Computational Logic and Algebraic communities, see for example~\cite{bodirsky2010complexity,bodirsky2012complexity,bodirsky2017complexity} and references therein. Specifically, they study CSPs over trees called Phylogenetic CSPs, which are infinite-domain CSPs and they are interested in the complexity of related problems. Interestingly, there are dichotomy results for Phylogenetic CSPs similar to the dichotomy results observed in complexity of boolean or finite-domain CSPs~\citep*{bulatov2003algebraic,bulatov2006dichotomy,bulatov2017dichotomy}.

    \item Theoretical Computer Science: After the work of~\cite{aho1981inferring}, many works built on improving the runtime of their algorithm using specialized data structures or studying related questions in various settings~\citep*{farach1995agreement,ng1996reconstruction,kannan1998computing,henzinger1999constructing,semple2000supertree}. As we mentioned in the introduction, in terms of approximability not much was known. For the maximization version the best approximation was achieved by the random tree and no progress had been made. Special instances like dense instances were studied in an early work of~\cite{jiang1998orchestrating}, where they gave a PTAS using techniques of~\cite{arora1995polynomial} on instances with $m=\Omega(n^4)$ constraints. Moreover, the work of~\citet*{byrka2010new} studies approximation questions for maximization and minimization variants of triplet reconstruction and the work of~\citet*{brodal2013efficient} gives efficient algorithms for computing distances between trees based on how the two trees differ with respect to triplets. Other methods for constructing trees or comparing trees based on quartets have also been studied in theoretical computer science, for example see the works of~\citet*{alon2014compatibility,alon2016maximum,snir2008quartets,snir2012quartet,snir2012reconstructing}. Finally, the more general CSPs over trees that we studied here  with the constraints involving more than 3 or 4 items, have also been studied as ``subtree/supertree'' aggregation methods~\citep*{jansson2005rooted,dessmark2007polynomial,jansson2012complexity}.
    
    \item Crowdsourcing, Metric Learning and other Machine Learning Applications: Recall, a triplet $ab|c$ indicates that ``$a$ and $b$ are more similar to each other than to $c$''. For example, in Figure~\ref{fig:triplets}, we had $\{\texttt{\{lion, tiger\}|\{tuna\}}\}$. In the context of finding a hierarchy over the dataset, such triplets are interpreted as ``must-link-before'' constraints~\citep*{vikram2016interactive}, which are the analogue of the popular ``must-link'' and ``cannot-link'' constraints that are used in the clustering literature~\citep*{wagstaff2001constrained} (notice that in HC, all points belong in the same cluster initially, and all points are separated at the leaves, so such ``must-link''/``cannot-link'' constraints do not apply). Triplets are especially useful in crowdsourcing and active learning. This is because humans are notoriously bad at providing accurate numerical information, but are quick and precise at comparing items (e.g., answering questions like which pair out of $\{$lion, tiger, tuna$\}$ is most similar); consequently, triplet queries (or more generally ``ordinal'' interactions) have been used to query users for a variety of downstream tasks like tree reconstruction or finding non-metric embeddings (also called ordinal embeddings)~\citep*{schultz2003learning,alon2008ordinal,tamuz2011adaptively,jamieson2011low,awasthi2014local,terada2014local,jain2016finite,emamjomeh2018adaptive}.

    \item Taxonomy and Computational Biology: The study of hierarchical clustering is fundamental in evolutionary biology and the scientific field of Taxonomy tries to uncover the Tree of Life based on the evolutionary relationships among organisms (e.g., by finding similar genetic patterns in their DNA)~\citep*{sneath1973numerical}. Once again, such relationships often take the form of triplets and quartets aggregation methods~\citep*{bandelt1986reconstructing,steel1992complexity,strimmer1996quartet,bryant1997building,semple2000supertree,ng2000difficulty,felsenstein2004inferring}.   
\end{itemize}

\section{Proof of Lemma~\ref{lem:mu-parent-mu-children}}\label{sec:proof-lemma-entropy} 
In this section, we will prove Lemma~\ref{lem:mu-parent-mu-children}
stated in Section~\ref{sec:ordinary-CSPs-gap}.
We will focus on one label $i$. To simplify notation, let us call all leaves having that label red. Let $T_x$ be the subtree of $T$ rooted at $x$. Also, let  $R(T_x)$ and $\mu(T_x)$ be the number of red leaves in $T_x$ and the fraction of red leaves in $T_x$, respectively (for a subtree $T_x$ of depth $d'$, we have $\mu(T_x) = R(T_x)/k^{d'}$). We claim that for a random vertex $u(t)$ (drawn from $\calP$) and each of its children $x$, the number of red leaves in $T_x$ is close to $R(T_{u(t)})/k$ on average. Below, we denote the set of $k$ child nodes of $u(t)$ by $\child(u(t))$.

\begin{lemma}\label{lem:fraction-red-vertices}
For a random internal node $u\sim \calD$, we have
$$\E_{u,t\sim \calP}\Big[\frac{1}{k}\sum_{y\in \child(u(t))}\big|\mu(T_y)-\mu(T_{u(t)})\big|\Big]
\leq
\mu(T)
\sqrt{\frac{2\log_2 \nicefrac{1}{\mu(T)}}{d}}.$$
\end{lemma}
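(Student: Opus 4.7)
The plan is to introduce the entropy potential $\phi(x) := -x \ln x$ (with the convention $\phi(0)=0$), show that its ``concavity gap'' summed over the levels of $T$ telescopes to $\phi(\mu(T)) = \mu(T)\ln(1/\mu(T))$, and then bound the left-hand side of the lemma against that gap via Pinsker's inequality followed by a single application of Cauchy--Schwarz. Throughout write $X(v) := \mu(T_v)$ and $p := \mu(T)$, and recall the martingale-type identity $\sum_{y \in \child(v)} X(y) = k \, X(v)$.

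First I would observe that, since $T$ is a perfect $k$-ary tree, drawing $(u,t) \sim \calP$ is equivalent to drawing $t$ uniformly from $\{0,\ldots,d-1\}$ and then $v := u(t)$ uniformly from the $k^t$ level-$t$ nodes. For each internal $v$ with $X(v) > 0$, define the probability distribution $Q_v(y) := X(y)/(k X(v))$ on $\child(v)$ (nodes with $X(v)=0$ contribute nothing to the LHS) and let $U$ be the uniform distribution on $\child(v)$. Two direct identities hold:
\begin{equation*}
\tfrac{1}{k}\sum_{y \in \child(v)} |X(y) - X(v)| \;=\; 2\, X(v) \, \|Q_v - U\|_{\mathrm{TV}},
\end{equation*}
\begin{equation*}
\phi(X(v)) - \tfrac{1}{k}\sum_{y \in \child(v)} \phi(X(y)) \;=\; \tfrac{1}{k}\sum_y X(y) \ln\!\left(\tfrac{X(y)}{X(v)}\right) \;=\; X(v) \, D_{\mathrm{KL}}(Q_v \| U),
\end{equation*}
the second using $\sum_y X(y) = k\, X(v)$ to cancel the $\ln X(v)$ contribution.

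Next I would telescope the entropy over the levels of $T$. Since $\E_{v \text{ at level } t+1}\phi(X(v)) = \E_{v \text{ at level } t}[\tfrac{1}{k}\sum_{y \in \child(v)} \phi(X(y))]$, summing the concavity-gap identity above over $t = 0,\ldots,d-1$ gives
\begin{equation*}
\sum_{t=0}^{d-1} \E_{v \text{ at level } t}\!\left[X(v)\, D_{\mathrm{KL}}(Q_v \| U)\right] \;=\; \phi(p) - \E_{v \text{ at level } d}\phi(X(v)) \;=\; p \ln(1/p),
\end{equation*}
since $X \in \{0,1\}$ at the leaves so $\phi$ vanishes there. Equivalently, $\E_{t, v}\!\left[X(v) D_{\mathrm{KL}}(Q_v \| U)\right] = p\ln(1/p)/d$. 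Combining with Pinsker's inequality $\|Q_v - U\|_{\mathrm{TV}} \leq \sqrt{D_{\mathrm{KL}}(Q_v \| U)/2}$ (in nats) and applying Cauchy--Schwarz after splitting $X(v)\sqrt{D_{\mathrm{KL}}} = \sqrt{X(v)} \cdot \sqrt{X(v)\, D_{\mathrm{KL}}}$:
\begin{equation*}
\text{LHS} \;\leq\; \sqrt{2}\, \E_{t, v}\!\left[\sqrt{X(v)} \cdot \sqrt{X(v)\, D_{\mathrm{KL}}(Q_v \| U)}\right] \;\leq\; \sqrt{2}\,\sqrt{p}\,\sqrt{p\ln(1/p)/d} \;=\; p\sqrt{2\ln(1/p)/d},
\end{equation*}
which is bounded above by $p\sqrt{2\log_2(1/p)/d}$ since $\ln 2 < 1$.

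The main conceptual obstacle is picking the right potential: $\phi(x) = -x\ln x$ has the special property that its concavity gap at $v$ equals exactly $X(v) \cdot D_{\mathrm{KL}}(Q_v \| U)$. This precise matching is what lets Cauchy--Schwarz split the factor $X(v)$ evenly between the two halves --- one averaging to $p$, the other telescoping to $p\ln(1/p)/d$ --- producing the sharp bound $p\sqrt{\log(1/p)/d}$ rather than the weaker $\sqrt{p/d}$ that would come from a direct $L^2$-martingale (Azuma-type) bound.
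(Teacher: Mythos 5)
Your proof is correct and is essentially the paper's argument in different clothing: the telescoping of $\phi(x)=-x\ln x$ over the levels is exactly the chain rule for the entropy of a uniformly random red leaf, and your split $X(v)\sqrt{D_{\mathrm{KL}}}=\sqrt{X(v)}\cdot\sqrt{X(v)D_{\mathrm{KL}}}$ followed by Cauchy--Schwarz plays precisely the role of the paper's change of measure to the red-leaf-biased distribution $\calQ$ followed by Jensen. Both proofs reduce the statement to ``average KL divergence between the red-induced child distribution and uniform is $\ln(1/\mu(T))/d$'' plus Pinsker, so nothing further is needed (your bound is even marginally sharper, with $\ln$ in place of $\log_2$).
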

\begin{proof}
We will assume that $T$ has at least one red leaf. Define an auxiliary probability distribution $\calQ$ on the internal nodes of the tree. Pick a random \emph{red} vertex $v$ in $T$. Then, as before, pick an independent $t$ in $\{0,\cdots, d-1\}$ and output $v(t)$ (where $v(0),\cdots,v(d-1), v(d) = v$ is the path from the root of the tree to $v$). Note that in the definition of $\calP$, we pick $u$ uniformly among all leaves of $T$ but in the definition of $\calQ$, we pick $v$ uniformly among all \emph{red} leaves of $T$. Thus, $v(t)=x$ if and only if $v$ is a red leaf in $T_x$ and $t$ is the depth of $x$ in tree $T$. Consequently, 
$$\Pr_{v,t\sim \calQ}\{v(t) = x\} = \frac{R(T_x)}{R(T)} \cdot \frac{1}{d}=
\underbrace{\frac{R(T_x)/k^{d'}}{R(T)/k^d}}_{\mu(T_x)/\mu(T)}
\cdot \Big(\frac{k^{d'}}{k^d}\cdot \frac{1}{d}\Big)
=
\frac{\mu(T_x)}{\mu(T)}\cdot 
\Pr_{u,t\sim \calP}\{u(t) = x\}.
$$
If $\mu(T_x) \neq 0$, then
$$
\Pr_{u,t\sim \calP}\{u(t) = x\} 
= 
\frac{\mu(T)}{\mu(T_x)}\cdot 
\Pr_{v,t\sim \calQ}\{v(t) = x\}.$$
Thus,
\begin{align*}
\E_{u,t\sim \calP}\Big[\sum_{y\in \child(u(t))}\big|\mu(T_y)-\mu(T_{u(t)})\big|\Big]
&=
\E_{v,t\sim \calQ}\Big[\frac{\mu(T)}{\mu(T_{v(t)})}\cdot\sum_{y\in \child(v(t))}\big|\mu(T_y)-\mu(T_{v(t)})\big|\Big]\\
&= \mu(T)\cdot
\E_{v,t\sim \calQ}
\bigg[\sum_{y\in \child(v(t))}\Big|
\frac{\mu(T_y)}{\mu(T_{v(t)})} - 1
\Big|\bigg]\\
&= k\, \mu(T)\cdot
\E_{v,t\sim \calQ}
\bigg[
\sum_{y\in \child(v(t))}
\Big|
\frac{R(T_y)}{R(T_{v(t)})} - \frac{1}{k}
\Big|\bigg].
\end{align*}
In the expectation above, we ignore the terms with $R(T_{v(t)}) = 0$ --- the probability of such $v(t)$ equals $0$.
For an internal node $x$ of $T$, define two distributions, $\calA_x$ and $\calB_x$, on the set of its children $\child(x)$. The first distribution, $\calA_x$, is the uniform distribution on $\child(x)$. The second distribution, $\calB_x$, picks a $y$ in $\child(x)$ with probability proportional to the number of red leaves in $T_y$ i.e., for $y\in \child(x)$,
$$
\Pr_{Y\sim \calB_x}\{Y = y\} = \frac{R(T_y)}{R(T_x)}.
$$
If $T_x$ does not have any red leaves and, consequently, $R(T_x) = 0$, then we let 
$B_x$ be the uniform distribution on $\child(x)$. For $y\in\child(v(t))$, we have
\begin{align*}
\sum_{y\in \child(v(t))}\Big|
\frac{R(T_y)}{R(T_{v(t)})} - \frac{1}{k}
\Big| &= 
\sum_{y\in \child(v(t))}\Big|
\Pr_{Y\sim \calA_{v(t)}}\{Y = y\} 
-
\Pr_{Y\sim \calB_{v(t)}}\{Y = y\} 
\Big|\\
&\leq
2\delta_{TV}(\calA_{v(t)},\calB_{v(t)}),
\end{align*}
where $\delta_{TV}(\calA_{v(t)},\calB_{v(t)})$ is the total variation distance between 
$\calA_{v(t)}$ and $\calB_{v(t)}$.
Thus,
$$\frac{1}{k}\E_{u,t\sim \calP}\Big[\sum_{y\in \child(u(t))}\big|\mu(T_y)-\mu(T_{u(t)})\big|\Big]\leq 2\mu(T)\;\delta_{TV}(\calA_{v(t)},\calB_{v(t)}).
$$

We will now show that the total variation distance between $\calA_{v(t)}$ and $\calB_{v(t)}$ is small on average for a random node $v(t)$ with $v,t\sim Q$. This will conclude the proof of the theorem.
\begin{lemma}
As before, let $\mu(T) = R(T)/k^d$ be the fraction of red leaves in tree $T$. Suppose $\mu(T) > 0$. Then, for a random internal node $v(t)$ having distribution $\calQ$, we have
$$\E_{v,t\sim \calQ}
\Big[\delta_{TV}(\calA_{v(t)}, \calB_{v(t)})\Big]
\leq
\sqrt{\frac{\log_2 \nicefrac{1}{\mu}}{2d}}.
$$
\end{lemma}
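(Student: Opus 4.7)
The plan is to reduce the total variation bound to a statement about Shannon entropy using Pinsker's inequality and then exploit the chain rule for entropy along root-to-leaf paths. First I would apply Pinsker's inequality followed by Jensen's inequality (concavity of $\sqrt{\cdot}$) to obtain
\begin{equation*}
\E_{v,t\sim \calQ}\bigl[\delta_{TV}(\calA_{v(t)}, \calB_{v(t)})\bigr]
\;\leq\; \E_{v,t\sim\calQ}\Bigl[\sqrt{\tfrac{1}{2}D_{KL}(\calB_{v(t)}\,\|\,\calA_{v(t)})}\Bigr]
\;\leq\; \sqrt{\tfrac{1}{2}\E_{v,t\sim\calQ}\bigl[D_{KL}(\calB_{v(t)}\,\|\,\calA_{v(t)})\bigr]}.
\end{equation*}
So it suffices to prove that the expected KL divergence is at most $\log_2(1/\mu(T))/d$.

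Next, since $\calA_x$ is uniform on $\child(x)$, one has the identity $D_{KL}(\calB_x\|\calA_x)=\log_2 k - H(\calB_x)$, where $H(\calB_x)$ is the Shannon entropy of $\calB_x$. The key observation is the probabilistic interpretation of $\calB_x$: if $v$ is a uniformly random \emph{red} leaf of $T$ and $v(0),v(1),\dots,v(d)=v$ is the path from the root to $v$, then the conditional distribution of $v(t)$ given $v(t-1)=x$ is exactly $\calB_x$. This is because conditioning on $v(t-1)=x$ makes $v$ uniform over red leaves of $T_x$, and the proportion of those lying in the subtree rooted at a child $y\in\child(x)$ is $R(T_y)/R(T_x)$.

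I would then apply the chain rule for conditional entropy along the path. Since $v(t-1)$ determines its ancestors, the chain rule gives
\begin{equation*}
H(v) \;=\; \sum_{t=1}^d H\bigl(v(t)\,\big|\,v(t-1)\bigr)
\;=\; \sum_{t=1}^d \E_v\bigl[H(\calB_{v(t-1)})\bigr].
\end{equation*}
Because $v$ is uniform on the $R(T)=\mu(T)k^d$ red leaves, $H(v)=\log_2 R(T)=d\log_2 k + \log_2 \mu(T)$. Reindexing the sum (so $t$ ranges over $\{0,\dots,d-1\}$ and $v(t)$ is an internal node) and combining with the identity for the KL divergence yields
\begin{equation*}
\sum_{t=0}^{d-1}\E_v\bigl[D_{KL}(\calB_{v(t)}\,\|\,\calA_{v(t)})\bigr]
= d\log_2 k - H(v) = \log_2\bigl(1/\mu(T)\bigr).
\end{equation*}
Dividing by $d$ (which is exactly the average over a uniform $t\in\{0,\dots,d-1\}$) gives $\E_{v,t\sim\calQ}[D_{KL}]=\log_2(1/\mu(T))/d$, and plugging this into the Pinsker/Jensen bound above yields the lemma.

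The only subtle point is the indexing convention: the lemma samples $t\in\{0,\dots,d-1\}$ so $v(t)$ is an internal node (never the leaf $v=v(d)$), which lines up perfectly with the chain rule having exactly $d$ conditional-entropy terms, one for each edge on the path. No clever combinatorics are needed beyond Pinsker, Jensen, and the chain rule; the simplification over the original approach comes from recognizing $\calB_x$ as the conditional distribution of a uniformly random red leaf's path.
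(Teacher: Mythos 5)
Your proposal is correct and follows essentially the same route as the paper's proof: Pinsker plus Jensen to reduce to the expected KL divergence, the identity $D_{KL}(\calB_x\parallel\calA_x)=\log_2 k - H(\calB_x)$, the observation that $\calB_{v(t)}$ is the conditional distribution of the next vertex on the path to a uniformly random red leaf, and the chain rule giving $\sum_t \E_v[D_{KL}]=\log_2(1/\mu)$. No substantive differences.
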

\begin{proof}
Let $v$ be a random red vertex in $T$. Random variable $v$ takes $R(T)$ different values with probability $1/R(T)$ each. Hence, its entropy equals
\begin{equation}\label{eq:entropy-v}
H(v) = \log_2 R(T) = 
\log_2(k^d \cdot \mu(T)) =
d \, \log_2 k-\log_2 \nicefrac{1}{\mu}.
\end{equation}
By the chain rule of conditional entropy, we also have
\begin{equation}\label{eq:chain-rule}
H(v) = \sum_{i=0}^{d-1} H(v(i+1) \mid v(i)).
\end{equation}
Observe that the conditional distribution of $v(i+1)$ given $v(i)$ is $\calB_{v(i)}$. Thus,
$$H(v(i+1) \mid v(i))= 
\E_{v} [H(B_{v(i))}].$$
From (\ref{eq:chain-rule}), we have
$$\frac{H(v)}{d} = \frac{1}{d}\sum_{i=0}^{d-1}
\E_{v} [H(B(v(i)))] = 
\E [H(B_{v(t)})],$$
here $t$ is a random number in $\{0,\ldots, d-1\}$ and, consequently, $v(t)$ has distribution $\calQ$.
Using (\ref{eq:entropy-v}), we get
$$\E_{v,t\sim\calQ}
\big[H(B_{v(t)})\big]
= \log_2 k- \frac{\log_2 1/\mu}{d}.$$
We now rearrange the terms and obtain the following bound:
$$\E_{v,t\sim\calQ}\big[\log_2 k - H(B_{v(t)})\big] = \frac{\log_2 1/\mu}{d}.$$
For a fixed $v$ and $t$, random variable $B_{v(t)}$ takes at most $k$ distinct values. Hence, $H(B_{v(t)}) \leq \log_2 k$. Moreover, if $H(B_{v(t)}) = \log_2 k$, then $H(B_{v(t)})$ is uniformly distributed in $\child(v(t))$. That is, $\calB_{v(t)} = \calA_{v(t)}$. Thus, we interpret the expression $\log_2 k - H(B_{v(t)})$ as the distance between $\calB_{v(t)}$ and $\calA_{v(t)}$. In fact, it is equal to the Kullback--Leibler divergence between $\calB_{v(t)}$ and $\calA_{v(t)}$, since
\begin{multline*}
D_{KL}(\calB_{v(t)}\parallel \calA_{v(t)}) = 
-\sum_{y\in \child(v(t))}
\Pr\{B_{v(t)} = y\}\cdot\log_2 \frac{1/k}{\Pr\{B_{v(t)} = y\}} \\
= 
\underbrace{
-\sum_{y\in \child(v(t))}
\Pr\{B_{v(t)} = y\}\cdot\log_2 \frac{1}{k}}_{\log_2 k}
-
\underbrace{
\sum_{y\in \child(v(t))}
\Pr\{B_{v(t)} = y\}\cdot\log_2 \frac{1}{\Pr\{B_{v(t)} = y\}}}_{H(B_{v(t)})}.
\end{multline*}
Therefore,
$\E_{v,t}
\big[D_{KL}(\calB_{v(t)}\parallel \calA_{v(t)})\big] = \frac{\log_2 1/\mu}{d}$.
Finally, by Pinsker's inequality, we have
\begin{multline*}
\E_{v,t\sim\calQ}
\big[\delta_{TV}(\calB_{v(t)}, \calA_{v(t)})]
=
\E_{v,t\sim\calQ}
\Bigg[\sqrt{\frac{D_{KL}(\calB_{v(t)}\parallel \calA_{v(t)})}{2}}\Bigg] \leq \\ \leq
\sqrt{\E_{v,t\sim\calQ}
\bigg[\frac{D_{KL}(\calB_{v(t)}\parallel \calA_{v(t)})}{2}\bigg]}
=
\sqrt{\frac{\log_2 1/\mu}{2d}}.
\end{multline*}
\end{proof}
\end{proof}

Lemma~\ref{lem:mu-parent-mu-children} immediately follows from Lemma~\ref{lem:fraction-red-vertices}:
$$\E_{u,t\sim \calP}\Big[\frac{1}{k}\sum_{y\in \child(u(t))}
\sum_{i=1}^q
\big|\mu_i(T_y)-\mu_i(T_{u(t)})\big|\Big]
\leq
\sum_{i=1}^q
\mu_i(T)
\sqrt{\frac{2\log_2 \nicefrac{1}{\mu_i(T)}}{d}}\leq
\sqrt{\frac{2\log_2 q}{d}}.
$$
The function $t\mapsto t \sqrt{\log_2 \nicefrac{1}{t}}$ is concave and $\nicefrac{1}{q}\sum \mu_i(T) = 1/q$. Thus by Jensen's inequality:
$$
\frac{1}{q}
\sum_{i=1}^q
\mu_i(T)
\sqrt{\log_2 \nicefrac{1}{\mu_i(T)}}\leq
\frac{1}{q}
\sqrt{\log_2 q}.
$$
\section{Triplets to Quartets Reduction}
\label{sec:triplets-to-quartets}
As we shown in the main part of the paper, Triplet Reconstruction~\textsc{MaxTriplets} is hard-to-approximate better than a random assignment, which achieves a $\tfrac13$-approximation.  A very similar situation appears for another basic problem based on arity 4 constraints:

We will need the following simple definition:

\begin{definition}[Quartet]
A quartet $q$, denoted $q=ab|cd$, is an unrooted, unordered, trivalent\footnote{Trivalent is an unrooted tree where every node has degree $3$, except the leaves that have degree $1$.} tree (see Figure~\ref{fig:quartet_example}). tree on 4 leaves $a,b,c,d$ (see Figure~\ref{fig:triplets-to-quartets-fig},~\ref{fig:quartet_example}). An unrooted, unordered, trivalent tree $T$ (containing leaves $a,b,c,d$) is said to be \emph{consistent with $q$} (or $T$ \emph{satisfies} $q$), if the path in $T$ between $a,b$ is disjoint with the path in $T$ between $c,d$. Otherwise, the quartet and the tree are \emph{inconsistent with each other} (or $T$ \emph{violates} $q$). In general, quartets can also have weights $\weight(ab|cd)$.
\end{definition}

The natural optimization problem associated with Quartet Reconstruction is \textsc{MaxQuartets}:
\begin{definition}[\textsc{MaxQuartets} Problem]
Given a set $X$ of $n$ data points and $m$ quartets defined on data points from $X$, find the unrooted, unordered, trivalent tree $T$ that is consistent with as many quartets as possible (per the definition above). 
\end{definition}

We note that in phylogenetics the problem above is called \textit{Unrooted Quartet Consistency}. In general, Quartet methods also have a long history and are widely deployed in computational biology~\citep{bandelt1986reconstructing,strimmer1996quartet,felsenstein2004inferring}. There are other related versions of Quartet Reconstruction (where constraints and the output need to be rooted). All of our hardness results also hold for the rooted quartet reconstruction problem.

\subsection{The Reduction}
Here we present a simple reduction from the rooted triplets consistency problem ({\sc{MaxTriplets}}) to the popular unrooted quartets consistency problem ({\sc{MaxQuartets}}) that has been extensively studied~\citep*{jiang1998orchestrating,alon2014compatibility,snir2008quartets,snir2012reconstructing}. Recall that a triplet $ab|c$ is a rooted tree with $3$ leaves $a,b,c$ and the output is a binary rooted tree, whereas a quartet $ab|c\gamma$ is an unrooted tree with $4$ leaves and the output is an unrooted trivalent tree (every internal node has degree 3).

\begin{claim}\label{cl:triplets-to-quartets}
There is an approximation-preserving reduction from {\sc{MaxTriplets}} to {\sc{MaxQuartets}}.
\end{claim}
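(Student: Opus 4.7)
The plan is to use the standard trick of adding an auxiliary ``root leaf'' to convert rooted triplets into unrooted quartets. Given a \textsc{MaxRTC} instance on leaves $[n]$ with triplets $t_1,\ldots,t_m$, I build a \textsc{MaxUQC} instance on leaves $[n]\cup\{\gamma\}$ where each triplet $ab|c$ is replaced by the quartet $ab|c\gamma$ (with the same weight). The intuition is that $\gamma$ plays the role of ``infinity'' so that rooting an unrooted tree at the edge incident to $\gamma$ recovers a rooted binary tree on $[n]$, and vice versa.

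The two directions of the correspondence are what I would verify next. For the forward map, given a rooted binary tree $T$ on $[n]$, form the unrooted trivalent tree $T'$ on $[n]\cup\{\gamma\}$ by attaching $\gamma$ via a new edge to the root and then suppressing the resulting degree-two vertex at the root (so $\gamma$ becomes a ``sibling'' of the whole tree). For the backward map, given an unrooted trivalent tree $T'$ on $[n]\cup\{\gamma\}$, delete the leaf $\gamma$ together with its incident edge and root $T$ at the (former) neighbor of $\gamma$. These two maps are inverse and produce valid binary/trivalent trees.

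The key equivalence to check is: for every three items $a,b,c\in[n]$, the rooted tree $T$ satisfies the triplet $ab|c$ if and only if the unrooted tree $T'$ satisfies the quartet $ab|c\gamma$. This follows because the quartet $ab|c\gamma$ is satisfied in $T'$ precisely when the path between $a,b$ and the path between $c,\gamma$ are disjoint in $T'$; since the edge that was cut to unroot $T$ lies on the $c$-to-$\gamma$ path (indeed on every leaf-to-$\gamma$ path), this disjointness is equivalent to saying that in $T$ the subtree containing $a,b$ branches off strictly below $\mathrm{LCA}(a,b,c)$, i.e., $\mathrm{LCA}(a,b)$ is a proper descendant of $\mathrm{LCA}(a,b,c)$, which is exactly the definition of $T$ satisfying $ab|c$.

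Once this equivalence is established, approximation preservation is immediate: the number of triplets satisfied by $T$ equals the number of quartets satisfied by $T'$, so $\mathrm{opt}(\calI_{\mathrm{UQC}})=\mathrm{opt}(\calI_{\mathrm{RTC}})$ and any $\rho$-approximate \textsc{MaxUQC} solution pulls back to a $\rho$-approximate \textsc{MaxRTC} solution via the backward map. The only step requiring care is the verification of the equivalence of the triplet on $\{a,b,c\}$ with the quartet on $\{a,b,c,\gamma\}$; the rest is bookkeeping about rooting and unrooting binary/trivalent trees.
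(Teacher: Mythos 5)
Your proposal is correct and matches the paper's own reduction essentially verbatim: add a distinguished leaf $\gamma$, map each triplet $ab|c$ to the quartet $ab|c\gamma$, and translate between solutions by attaching $\gamma$ at the root (forward) and rooting at $\gamma$'s neighbor (backward), with the same path-disjointness argument for the triplet/quartet equivalence. One tiny slip: after attaching $\gamma$ to the root, that vertex already has degree $3$, so no degree-two vertex needs to be suppressed.
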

\begin{proof}
Given an instance of {\sc{MaxTriplets}} with $m$ triplets $t_1,t_2,\ldots,t_m$ over a set $L$ of $n$ labels, we create an instance of {\sc{MaxQuartets}} with $m$ quartets $q_1,q_2,\ldots,q_m$  over a set $L'$ of $n+1$ labels as follows: 
\begin{itemize}
    \item $L'=L\cup\{\gamma\}$, where $\gamma$ is a distinguished vertex to be used in order to define quartets below.
    \item For every triplet $t_i=a_ib_i|c_i$ of {\sc{MaxTriplets}}, we generate a quartet $q_i=a_ib_i|c_i\gamma$. Notice that $\gamma$ is present in all generated quartets, and $\gamma$ always appears on the side of the ``outsider'' item $c_i$ for each of the triplets $a_i,b_i|c_i$. See Figure~\ref{fig:triplets-to-quartets-fig}.
\end{itemize}
We claim that the generated quartet instance is equivalent to the triplet instance, in the sense that any candidate solution $T$ for triplets (binary rooted tree) can be turned into a candidate solution $T'$ for quartets (trivalent unrooted tree) that satisfies the same number of constraints, and vice versa. 

\begin{figure}[ht]
    \centering
    \includegraphics[scale=1.5]{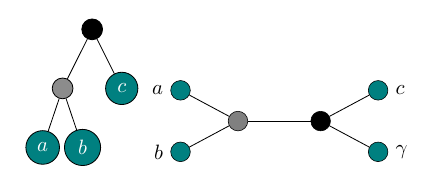}
    \caption{The transformation of a rooted triplet $ab|c$ to an unrooted quartet $ab|c\gamma$ used in the reduction of~Claim~\ref{cl:triplets-to-quartets}.}
    \label{fig:triplets-to-quartets-fig}
\end{figure}

To do so, we start with $T$ and connect its root vertex $r$ (that has degree $2$) to another newly created vertex $\gamma$. Hence the degree of $r$ becomes $3$ and $\gamma$ is a leaf (since its degree is $1$). The final tree corresponds to a trivalent unrooted tree $T'$. Notice that  a triplet $ab|c$ is satisfied by $T$ if and only if the quartet $ab|c\gamma$ is satisfied by $T'$, because the unique path from $a$ to $b$ in $T$ is disjoint from the unique path from $c$ to the root $r$ and hence also to the special vertex $\gamma$. Finally, to turn any unrooted trivalent $T'$ into a binary rooted $T$, we simply root $T'$ at the special vertex $\gamma$. Then, a quartet $ab|c\gamma$ is satisfied by  $T'$ if and only if the triplet $ab|c$ is satisfied by $T$ for the same reason as previously.
\end{proof}

\begin{corollary}
Unrooted Quartets Consistency ({\sc{MaxQuartets}}) is approximation resistant, so it is UGC-hard to beat the (trivial) random assignment algorithm that achieves a $\tfrac13$-approximation.
\end{corollary}
\section{Figures}\label{sec:figures}
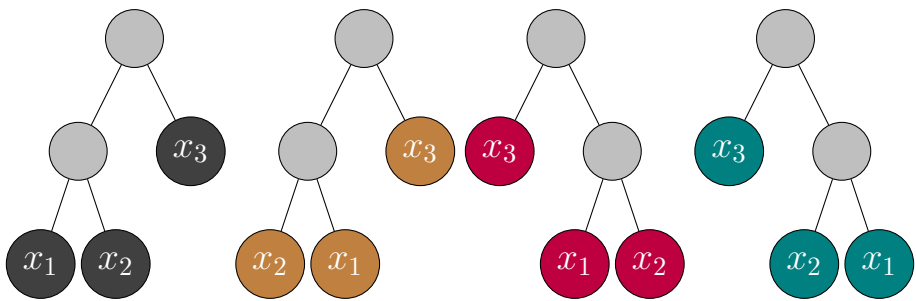
\begin{figure}[ht]
    \centering
\tikzset{
every tree node/.style={minimum width=2em,draw,circle, fill=lightgray},
blank/.style={draw=none},
edge from parent/.style={draw,edge from parent path={(\tikzparentnode) -- (\tikzchildnode)}},
level distance=1.5cm,
}
\begin{tikzpicture}
\Tree
[.\; [.\; 
[.\node[fill=darkgray, text=white]{\Large{$x_1$}}; ] 
[.\node[fill=darkgray, text=white]{\Large{$x_2$}}; ]] [.\node[fill=darkgray, text=white]{\Large{$x_3$}}; ]]
]
\end{tikzpicture}
\begin{tikzpicture}
\Tree
[.\; [.\; 
[.\node[fill=brown, text=white]{\Large{$x_2$}}; ] 
[.\node[fill=brown, text=white]{\Large{$x_1$}}; ]] [.\node[fill=brown, text=white]{\Large{$x_3$}}; ]]
]
\end{tikzpicture}
\begin{tikzpicture}
\Tree
[.\; [.\node[fill=purple, text=white]{\Large{$x_3$}}; ] [.\;
[.\node[fill=purple, text=white]{\Large{$x_1$}}; ] 
[.\node[fill=purple, text=white]{\Large{$x_2$}}; ]]]
]
\end{tikzpicture}
\begin{tikzpicture}
\Tree
[.\; [.\node[fill=teal, text=white]{\Large{$x_3$}}; ] [.\;
[.\node[fill=teal, text=white]{\Large{$x_2$}}; ] 
[.\node[fill=teal, text=white]{\Large{$x_1$}}; ]]]
]
\end{tikzpicture}

\caption{Four patterns that define the Triplets Consistency problem. These pattern can also be specified using the ``square brackets notation''. 
First pattern: $[x_1,x_2< x_3]\;\&\;[x_1< x_2]$. 
Second pattern: $[x_1,x_2< x_3]\;\&\;[x_2<  x_1]$.
Third pattern: $[x_3< x_1,x_2]\;\&\;[x_1 < x_2]$. 
Fourth pattern: $[x_3< x_1,x_2]\;\&\;[x_2 < x_1]$. 
}
\label{fig:patterns-for-triplets}
\end{figure}

\begin{figure}[ht]
    \centering
\tikzset{
every tree node/.style={minimum width=2em,draw,circle, fill=lightgray},
blank/.style={draw=none},
edge from parent/.style={draw,edge from parent path={(\tikzparentnode) -- (\tikzchildnode)}},
level distance=1.5cm,
}
\begin{tikzpicture}
\Tree
[.\node[fill=black, text=white]{\Large{P}}; 
[.\; 
[.\node[fill=teal, text=white]{\Large{$x_1$}}; ] 
[.\node[fill=teal, text=white]{\Large{$x_2$}}; ]] [.\node[fill=teal, text=white]{\Large{$x_3$}}; ]]
]
\end{tikzpicture}
\begin{tikzpicture}
\Tree
[.\node[fill=black, text=white]{\Large{I}}; 
[.\; 
[.\node[fill=teal, text=white]{\Large{$a$}}; ] 
[.\;
[.\node[fill=teal, text=white]{\Large{$b$}}; ]
[.\node[fill=purple, text=white]{\Large{$e$}}; ]
]] 
[.\;
[.\node[fill=teal, text=white]{\Large{$c$}}; ][.\node[fill=purple, text=white]{\Large{$d$}}; ]]]
]
\end{tikzpicture}
\begin{tikzpicture}
\Tree
[.\node[fill=black, text=white]{\Large{II}}; 
[.\; 
[.\node[fill=teal, text=white]{\Large{$b$}}; ] 
[.\node[fill=teal, text=white]{\Large{$a$}}; ]] 
[.\;
[.\node[fill=teal, text=white]{\Large{$c$}}; ][.\node[fill=purple, text=white]{\Large{$d$}}; ]]]
]
\end{tikzpicture}
\begin{tikzpicture}
\Tree
[.\node[fill=black, text=white]{\Large{III}}; 
[.\; 
[.\node[fill=purple, text=white]{\Large{$b$}}; ] 
[.\node[fill=teal, text=white]{\Large{$a$}}; ]] 
[.\;
[.\node[fill=teal, text=white]{\Large{$b$}}; ][.\node[fill=teal, text=white]{\Large{$c$}}; ]]]
]
\end{tikzpicture}

\caption{Consider the leftmost tree $P$ above. It is a pattern on variables $x_1$, $x_2$, $x_3$. Let $f$ be  payoff function defined by this pattern. Namely, let $f(a,b,c) = 1$, if $a$, $b$, $c$ match $P$; $0$, otherwise. Then,
$f(a,b,c)=1$ for the tree I. However, $f(a,b,c)=0$ for tree II, because $a$ and $b$ are  ordered incorrectly. Also, $f(a,b,c)=0$ for tree III, because $a$ is the first node that splits from $a$,$b$, and $c$. 
}
\label{fig:match-no-match}
\end{figure}
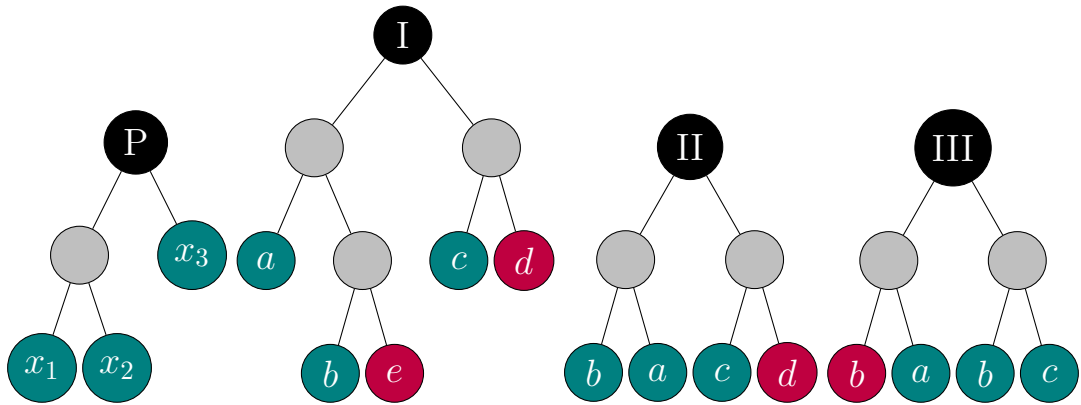

\begin{figure}[ht]
    \centering
\tikzset{
every tree node/.style={minimum width=2em,draw,circle, fill=lightgray},
blank/.style={draw=none},
edge from parent/.style={draw,edge from parent path={(\tikzparentnode) -- (\tikzchildnode)}},
level distance=1.5cm,
}
\begin{tikzpicture}
\Tree
[.\;
[.\; [.\; [.\; [.\; 
[.\node[fill=teal, text=white]{\Large{$x_1$}}; ] 
[.\node[fill=teal, text=white]{\Large{$x_2$}}; ]] [.\node[fill=teal, text=white]{\Large{$x_3$}}; ]]
[.\node[fill=teal, text=white]{\LARGE{$x_4$}};]]
[.\node[fill=teal, text=white]{\Large{$x_5$}};
]]
[.\node[fill=teal, text=white]{\Large{$x_6$}};
]]
\end{tikzpicture}
\begin{tikzpicture}
\Tree
[.\; 
[.\node[fill=purple, text=white]{\Large{$x_1$}}; ] 
[.\; 
[.\node[fill=purple, text=white]{\Large{$x_2$}}; ] 
[.\; 
[.\node[fill=purple, text=white]{\Large{$x_3$}}; ]
[.\; 
[.\node[fill=purple, text=white]{\Large{$x_4$}}; ]
[.\; 
[.\node[fill=purple, text=white]{\Large{$x_5$}}; ]
[.\node[fill=purple, text=white]{\Large{$x_{6}$}}; ]
]]]]]
\end{tikzpicture}

\caption{The left tree is a pattern for the \emph{split-one-to-the-right constraint}. The right tree is a pattern for the \emph{split-one-to-the-left constraint}. Each of the constraints contains all $6!$ permutations of variables $x_1,\dots,x_6$. So, the order in which variables split from others is not important.}
\label{fig:split-one-to-left-right}
\end{figure}
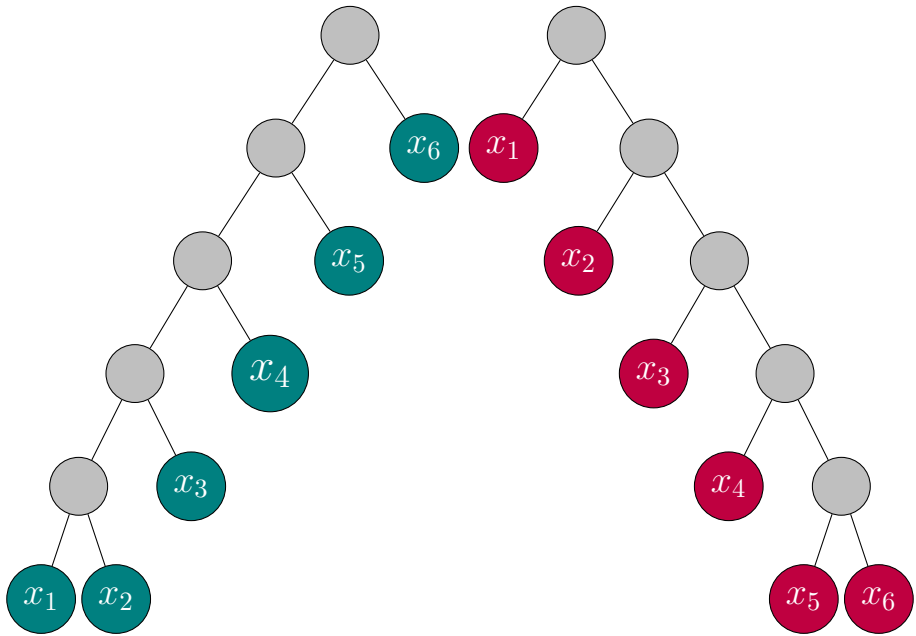

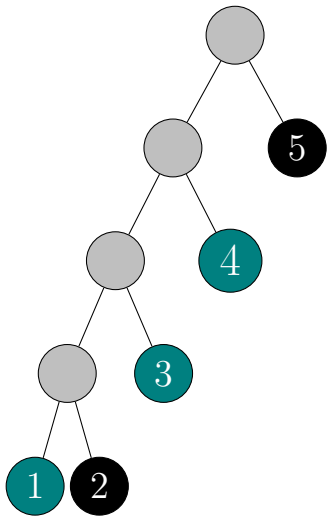
\begin{figure}[ht]
    \centering
\tikzset{
every tree node/.style={minimum width=2em,draw,circle, fill=lightgray},
blank/.style={draw=none},
edge from parent/.style={draw,edge from parent path={(\tikzparentnode) -- (\tikzchildnode)}},
level distance=1.5cm,
}
\begin{tikzpicture}
\Tree
[.\; [.\; [.\; [.\; 
[.\node[fill=teal, text=white]{\Large{$1$}}; ] 
[.\node[fill=black, text=white]{\Large{$2$}}; ]] [.\node[fill=teal, text=white]{\Large{$3$}}; ]]
[.\node[fill=teal, text=white]{\LARGE{$4$}};]]
[.\node[fill=black, text=white]{\Large{$5$}};
]]
\end{tikzpicture}
\tikzset{
every tree node/.style={minimum width=2em,draw,circle, fill=lightgray},
blank/.style={draw=none},
edge from parent/.style={draw,edge from parent path={(\tikzparentnode) -- (\tikzchildnode)}},
level distance=1.5cm,
}

\caption{Binary left caterpillar with five leaves. The right child of each internal node is a leaf. Observe that $\triplet(a,b,c) = 1$ if $a< b < c$. For example, 
$\triplet(1,3,4)=1$.}
\label{fig:caterpillar-binary-tree}
\end{figure}

\begin{figure}[ht]
    \centering
\tikzset{
every tree node/.style={minimum width=2em,draw,circle, fill=lightgray},
blank/.style={draw=none},
edge from parent/.style={draw,edge from parent path={(\tikzparentnode) -- (\tikzchildnode)}},
level distance=1.5cm,
every level 4 node/.style={minimum width=2em,draw,circle, fill=cyan, text=black},
}

\begin{tikzpicture}
\Tree
[.\node[fill=darkgray, text=white]{\LARGE{$u$}};
    [.\LARGE{$x$}
[.\node[fill=darkgray, text=white]{\LARGE{$u_1$}};
       [.\; ] 
       [.\; ]]
       [.\node[fill=darkgray, text=white]{\LARGE{$u_2$}};
    [.\; ]
    [.\; ]]]
    [.\node[fill=darkgray, text=white]{\Large{$w/u_3$}};
    [.\LARGE{$y$}
    [.\node[fill=darkgray, text=white]{\LARGE{$w_1$}}; ]
    [.\node[fill=darkgray, text=white]{\LARGE{$w_2$}}; ]
    ]
    [.\node[fill=darkgray, text=white]{\LARGE{$w_3$}};
    [.\; ]
    [.\; ]
         ]
    ]
]
\end{tikzpicture}
\caption{Binary tree $T'$ constructed based on ternary tree $T$. Nodes $u_1,u_2,u_3$ are children of $u$ in  ternary tree $T$. They are leaves in the pattern tree that consists of vertices $u$, $x$, $u_1$, $u_2$, and $u_3$. Similarly, vertices $w_1,w_2,w_3$ are children of $w$ in $T$. They are leaves in the pattern tree that consists of vertices $w=u_3$, $y$, $w_1$, $w_2$, and $w_3$.}
\label{fig:satisfiable-CSP}
\end{figure}
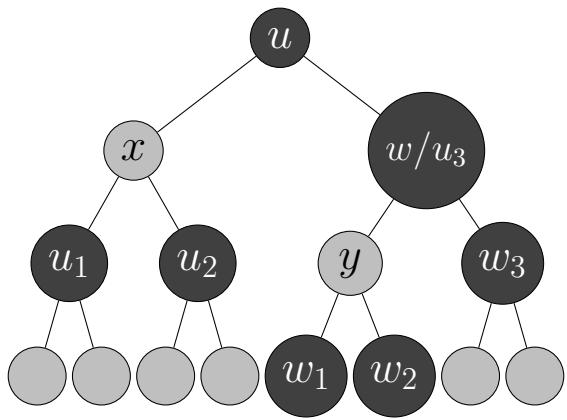

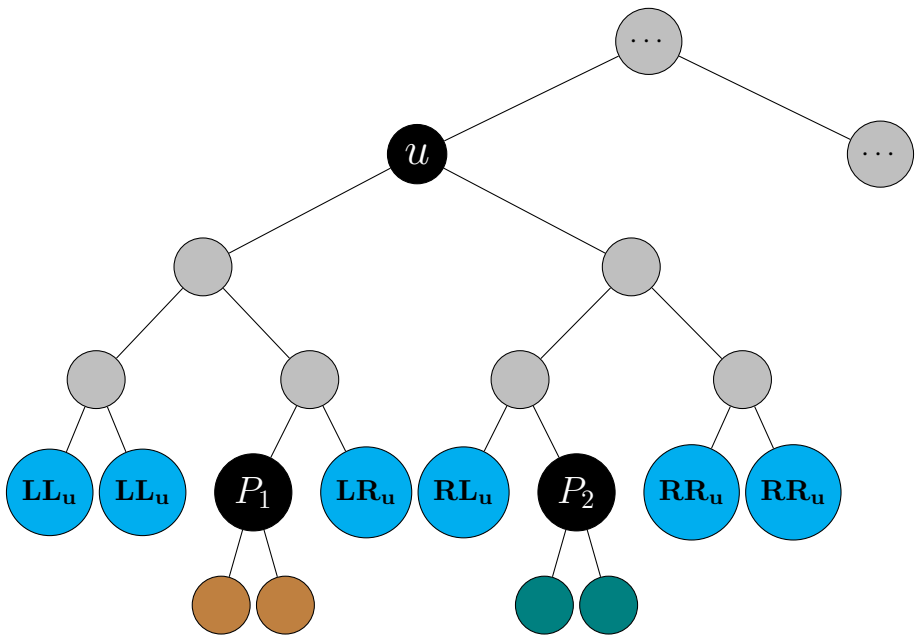
\begin{figure}[ht]
    \centering
\tikzset{
every tree node/.style={minimum width=2em,draw,circle, fill=lightgray},
blank/.style={draw=none},
edge from parent/.style={draw,edge from parent path={(\tikzparentnode) -- (\tikzchildnode)}},
level distance=1.5cm,
every level 4 node/.style={minimum width=2em,draw,circle, fill=cyan, text=black},
}
\begin{tikzpicture}
\Tree
[.$\cdots$
[.\node[fill=black, text=white]{\LARGE{$u$}};
    [.\; [.\; 
       [.$\mathbf{LL_u}$ ] 
       [.$\mathbf{LL_u}$ ]][.\; 
    [.\node[fill=black, text=white]{\Large{$P_1$}};
        [.\node[fill=brown]{}; ]
        [.\node[fill=brown]{}; ]
    ]
    [.$\mathbf{LR_u}$ ]]]
    [.\;
    [.\; 
    [.$\mathbf{RL_u}$ ]
    [.\node[fill=black, text=white]{\Large{$P_2$}};
    [.\node[fill=teal ]{}; ]
    [.\node[fill=teal ]{}; ]
    ]    
    ]
    [.\; [.{$\mathbf{RR_u}$} ]
         [.{$\mathbf{RR_u}$} ]
         ]
    ]
]
[.$\cdots$ ]
]
\end{tikzpicture}
\caption{Algorithm for constructing a coarse solution.
Vertices $P_1$ and $P_2$ are already processed by the algorithm. The algorithm is currently processing vertex $u$.
It assigns four labels $LL_u$, $LR_u$, $RL_u$, $RR_u$ to yet unlabeled leaves in subtree rooted at $u$.}
\label{fig:coloring-algorithm}
\end{figure}

\begin{figure}[ht]
    \centering
\tikzset{
every tree node/.style={minimum width=2em,draw,circle, fill=lightgray},
blank/.style={draw=none},
edge from parent/.style={draw,edge from parent path={(\tikzparentnode) -- (\tikzchildnode)}},
level distance=1.5cm,
}
\begin{tikzpicture}
\Tree
[.\;
[.\; [.\; [.\; [.\; 
[.\node[fill=teal, text=white]{\Large{$x_1$}}; ] 
[.\node[fill=teal, text=white]{\Large{$x_2$}}; ]] [.\node[fill=teal, text=white]{\Large{$x_3$}}; ]]
[.\node[fill=teal, text=white]{\LARGE{$x_4$}};]]
[.\node[fill=teal, text=white]{\Large{$x_5$}};
]]
[.\; 
[.\node[fill=purple, text=white]{\Large{$x_6$}}; ] 
[.\; 
[.\node[fill=purple, text=white]{\Large{$x_7$}}; ]
[.\; 
[.\node[fill=purple, text=white]{\Large{$x_8$}}; ]
[.\; 
[.\node[fill=purple, text=white]{\Large{$x_9$}}; ]
[.\node[fill=purple, text=white]{\Large{$x_{10}$}}; ]
]]]]]
\end{tikzpicture}
\caption{This phylogenetic predicate consists of patterns obtained from the pattern above by permuting variables $x_1,\dots,x_{10}$. The  predicate requires that at some node $u$ variables 
$x_1,\dots,x_{10}$ are split into two equal groups. The first group is assigned to the left subtree; the second group is assigned to the right subtree. Then, the variables in the first group should satisfy the \emph{split-one-to-the-right} constraint, and 
variables in the second group should satisfy the \emph{split-one-to-the-left} constraint (see~Figure~\ref{fig:split-one-to-left-right}).}.
\label{fig:2-caterpillar-binary-tree}
\end{figure}
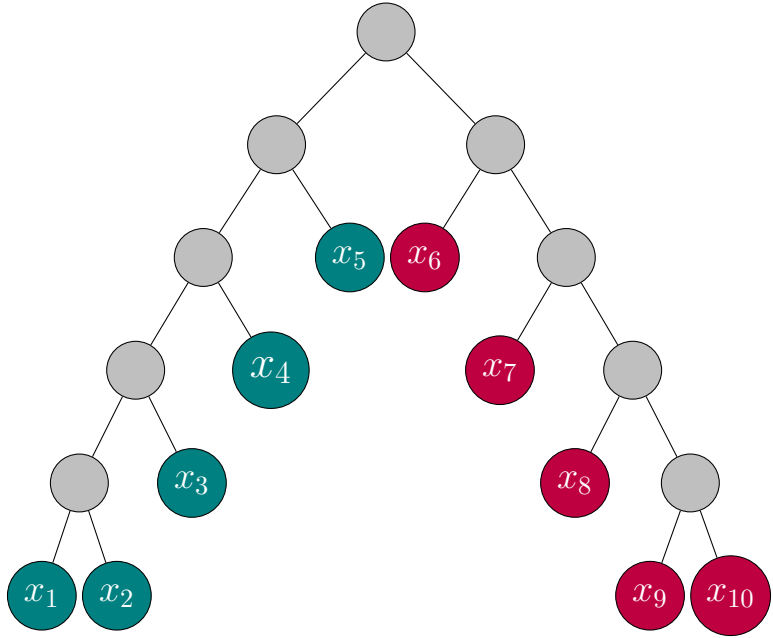

\begin{figure}[ht]
    \centering
    \includegraphics[scale=1.4]{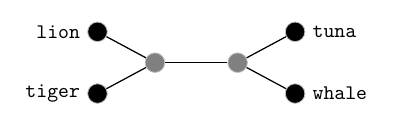}
    \caption{A quartet tree is the smallest
informative unrooted tree used in phylogenetic reconstruction~(\cite{felsenstein2004inferring,snir2008quartets}). Here the quartet $\{\{\texttt{lion, tiger}\},  \{\texttt{tuna, whale}\}\}$ is shown.}
    \label{fig:quartet_example}
\end{figure}

\begin{figure}[ht]
    \centering
    \includegraphics[scale=1.13]{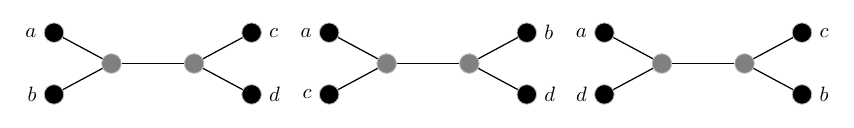}
    \caption{There are only $3$ different (unrooted) quartet trees for items $a,b,c,d$. The performance of a random assignment achieves a $\tfrac13$-approximation, in expectation.}
    \label{fig:quartet_example_2}
\end{figure}

\end{document}